\documentclass{article}
\usepackage{geometry}
\geometry{
 a4paper,
 total={170mm,257mm},
 left=20mm,
 top=20mm,
 bottom=30mm
 }

\usepackage{amssymb, amsmath, dsfont} 
\usepackage{graphicx}

\usepackage{amsthm}
\usepackage{natbib}
\usepackage[colorlinks,citecolor=blue]{hyperref}

\theoremstyle{plain}
\newtheorem{theorem}{Theorem}
\newtheorem{lemma}[theorem]{Lemma}
\newtheorem{corollary}[theorem]{Corollary}
\newtheorem{proposition}[theorem]{Proposition}

\theoremstyle{definition}
\newtheorem{definition}[theorem]{Definition}

\newtheorem{condition}[theorem]{Condition}

\theoremstyle{remark}
\newtheorem{remark}[theorem]{Remark}
\usepackage{enumerate}

\newcommand{\Ind}{\mathds{1}}
\newcommand{\ind}{\Ind}
\newcommand{\E}{\mathds{E}}
\newcommand{\var}{\mathds{V}ar}
\newcommand{\Var}{\var}
\newcommand{\Prob}{\mathds{P}}
\newcommand{\R}{\mathds{R}}

\newcommand\LRT{\operatorname{LR}}
\newcommand{\given}{\middle|}

\begin{document}

\title{A Reproducing-Kernel-Hilbert-Space log-rank test for the two-sample problem}

\author{
Tamara Fern\'andez \thanks{Gatsby Computational Neuroscience Unit, University College London. 
{\tt t.a.fernandez@ucl.ac.uk}}
\and Nicol\'as Rivera  \thanks{Computer Laboratory, University of Cambridge. 
{\tt nicolas.rivera@cl.cam.ac.uk}}
}

\maketitle

\abstract{
Weighted log-rank tests are arguably the most widely used tests by practitioners for the two-sample problem in the context of right-censored data. Many approaches have been considered to make weighted log-rank tests more robust against a broader family of alternatives, among them, considering linear combinations of weighted log-rank tests, and taking the maximum among a finite collection of them. In this paper, we propose as test statistic the supremum of a collection of (potentially infinite) weight-indexed log-rank tests where the index space is the unit ball in a reproducing kernel Hilbert space (RKHS). By using  some desirable properties of RKHSs we provide an exact and simple evaluation of the test statistic and establish connections with previous tests in the literature. Additionally, we show that for a special family of RKHSs, the proposed test is omnibus. We finalise by performing an empirical evaluation of the proposed methodology and show an application to a real data scenario. Our theoretical results are proved using techniques for double integrals with respect to martingales that may be of independent interest.

\textbf{Key words:} Survival Analysis, Right-Censored Data, Reproducing Kernel Hilbert Space, Log-rank Test, two-sample tests.}

\section{Introduction}
Two-sample testing is a classical problem in the context of survival data. For instance, in a clinical trial, two-sample tests can be used to compare different treatments when the survival times of patients are censored. Within the context of right-censored data, the classical log-rank test, first introduced by \citet{mantel1966evaluation} and \citet{peto1972asymptotically}, is the most widely used test among practitioners. A well-known property of the classical log-rank test is that it is the most powerful test under the assumption of proportional-hazard alternatives. This result can be deduced by noticing that the log-rank test statistic coincides with the score test statistic when the true cumulative hazard function belongs to the model $\left\{\Lambda_\theta,\theta\in\Theta\right\}$, where $\Lambda_\theta(t)=\int_0^t e^{\theta}d\Lambda_0(x)$, $\Lambda_0$ denotes the cumulative hazard function under the null hypothesis, and $\theta\in\Theta\subseteq \R$ is the parameter of the model. While the classical log-rank test is  optimal for proportional-hazard alternatives, it can have a substandard behaviour when the true cumulative hazard function cannot be expressed in terms of  $\{\Lambda_{\theta},\theta\in\Theta\}$. 

In order to broaden the power of the classical log-rank test to other  families of alternatives, researchers have introduced and studied different variants of weighted log-rank tests \citep{tarone1977distribution, gill1980censoring, harrington1982class, Bagdonavivius2010, andersen2012statistical}. We refer the reader to Chapter 7 of   \citet{klein2006survival} for a general discussion and comparison of different weighted log-rank tests approaches. In the simplest case, each weighted log-rank test with weight function $\omega$ can be written as a score test for alternatives in the model $\left\{\Lambda_{\theta,\omega}(t),\theta\in\Theta\right\}$, where $\Lambda_{\theta,\omega}(t)=\int_0^t e^{\theta\omega(F_0(x))}d\Lambda_0(x)$ and $\omega$ is fixed. Then, similarly to the above, it can be deduced that each weighted log-rank test is the most powerful test for the null hypothesis $H_0:\Lambda_1=\Lambda_0$ under the assumption that $\Lambda_1$ can be expressed in terms of $\left\{\Lambda_{\theta,\omega},\theta\in\Theta\right\}$, for some $\theta\neq 0$. If the true cumulative hazard function cannot be expressed in terms of this parametric model, there are no guarantees at all for the behaviour of the weighted log-rank test. Indeed, it may happen that we observe pathological cases in which the test has zero asymptotic power for specific alternatives. This is the case of the classical log-rank test, which is recovered by choosing $\omega =1$, in the setting of crossing-hazard alternatives.

In an attempt to overcome the previously described limitations of weighted log-rank tests,  researchers have considered two natural approaches to improve their performance: the first approach considered is to learn the weight function from the data, which defines a weighted log-rank test with an adaptive weight function, and the second approach consists on combining several log-rank tests into a single test statistic. Both approaches avoid making the strong parametric assumption that a particular model is true. The first approach has been discussed by  \citet{lai1991rank}, \citet{yang2005combining} and \citet{yang2010improved}, and the second approach, which is the focus of this paper, has been studied by several authors. Particularly,  \citet{tarone1981distribution}, \citet{kossler2010max}, and \citet{gares2015omnibus}  combined weighted log-rank tests by considering the maximum of a finite collection of them, and \citet{kosorok1999versatility} proposed the supremum of an infinite collection of weight-indexed log-rank tests, with weights belonging to a particular space of functions. The latter approach is very close to what we propose in this paper, however, the test statistic proposed by \citet{kosorok1999versatility} lacks of an analytically tractable expression, which forces the authors to rely on a Monte-Carlo approximation of it, whereas our approach, by an appropriate selection of the space of functions, obtains a simple expression for our test statistic, leading to a simple testing procedure.  

While the focus of this paper are weighted log-rank test and the combination of them, there are other test-approaches for the two-sample problems that are worth mentioning. One example of these approaches are the weighted Kaplan-Meier estimators \citep{Pepe1989WeightedKM}, which are parametrised on a weight function, and they can also be combined into one single statistic \citep{Shen2001Maximum}. A different approach is the so-called empirical likelihood approach \citep{Mai2016Empirical}, which leads to very interesting test statistics that can also be combined into a single statistic \citep{Bathke2009Combined}. Somewhat classical options are the Cramer von-Mises and the Kolmovorov-Smirnov test statistics, that have been deeply studied in many settings, including Survival Analysis \citep{koziol1976cramer,Koziol82omnibus}.

In this paper, we consider the supremum of a potentially infinite collection of weighted log-rank tests with weights belonging to the unit ball of a reproducing kernel Hilbert space of functions. Then, we propose to test the null hypothesis $H_0:\Lambda_1=\Lambda_0$, by checking if $\sup_{\omega\in\mathcal{H}:\|\omega\|_{\mathcal{H}}\leq 1}\text{LR}_n(\omega)\approx 0$, where $\text{LR}_n(\omega)$ denotes a weighted log-rank test with weight function $\omega$, and $\mathcal{H}$ is a reproducing kernel Hilbert space of functions. Clearly, if $\Lambda_1$ can be expressed in terms of  $\{\Lambda_{\theta,\omega},\theta\in\Theta,\omega\in B_1(\mathcal{H})\}$ for some $\theta\neq 0$ and some $\omega\in B_1(\mathcal{H})$ (where $B_1(\mathcal{H})$ denotes the unit ball of $\mathcal{H}$), our test statistic should be strictly greater than zero, in which case we decide against the null hypothesis. Also, notice that, the larger the space of functions $\mathcal{H}$, the more models we test at the same time. 

Choosing weights $\omega$ in the unit ball of a reproducing kernel Hilbert space $\mathcal H$ is the key step in our testing procedure, as we will show that, by doing so, our test statistic $\sup_{\omega\in\mathcal{H}:\|\omega\|_{\mathcal{H}}\leq 1}\text{LR}_n(\omega)$ can be evaluated exactly  by a straightforward computation. This result follows from the good properties of reproducing kernel Hilbert spaces. Particularly, we will use two properties of reproducing kernel Hilbert spaces: i) they are uniquely characterised by a kernel function $K$,  i.e., a symmetric and positive-definite function, and ii) they satisfy the so-called \emph{reproducing kernel} property,  which states that $\omega(x)=\langle\omega, K(x,\cdot)\rangle_{\mathcal H}$ holds for any function $\omega\in\mathcal{H}$ and any point $x$. By using this property we will show that $\sup_{\omega\in \mathcal H: \|\omega\|_{\mathcal H}}\text{LR}_n(\omega)^2= \sum_{i=1}^n\sum_{j=1}^n K(x_i,x_j)c_ic_j$, for some points $x_i$ and constants $c_j$ (depending on the data). This type of kernel estimators have been previously addressed by \citet{Neuhaus1987Local} in connection with infinite sums of linear rank estimators (such as the log-rank) in the uncensored setting.  Since, in practice, our test statistic is a quadratic form based on the kernel $K$, our method can be described just in terms of $K$ avoiding the need to describe a Hilbert space. Describing kernel functions with interesting properties and useful interpretation has been a topic of extensive research, especially in the Machine Learning community. A compendium of popular kernels used in applications can be found in \citet{sousa2010}.

While we can easily establish connections between our method and approaches based on weighted log-rank tests, we will show that an alternative interpretation of our test statistic allows us to connect our approach with the \emph{kernel mean embedding} testing approach. In particular, we will show that our test statistic $\sup_{\omega\in\mathcal{H}:\|\omega\|_{\mathcal{H}}\leq 1}\text{LR}_n(\omega)$  can be alternatively written as the norm of the difference of particular embeddings of our two samples into $\mathcal H$. The idea of comparing probability distributions/datasets by embedding them into a reproducing kernel Hilbert space of functions has been extensively studied in the uncensored  setting by researchers in Statistics and Machine Learning \citep{Berlinet2004Reproducing, smola2007hilbert}, however, it seems this idea has yet to percolate into the Survival Analysis community and, up to the best of our knowledge, it has only been considered by \citet{fernandez2018maximum} for the goodness-of-fit problem under right-censored data. In the uncensored case, two-sample tests were proposed and studied by \citet{gretton2012kernel}.  In this work, the authors embed two different samples into a reproducing kernel Hilbert space by considering the so-called \emph{``kernel mean embedding"} of the empirical distribution of each sample. Then, they compute the difference between these two samples by computing the distance (induced by the norm) of the embeddings into $\mathcal{H}$. This idea can be straightforwardly applied to right-censored data by considering the kernel mean embedding of the Kaplan-Meier estimator instead of the empirical distribution, which leads to test statistics that are Kaplan-Meier $V$-statistics. These type of statistics were studied by \cite{bose2002asymptotic} and by \citet{FerRiv2020}, and include as particular cases well-known statistics such as the Cramer von-Mises test statistic. As it was pointed out by \citet{FerRiv2020}  such an approach is not suitable for censored data as Kaplan-Meier $V$-statistics are only reliable when the amount of censored observations is rather small when compared to the total amount of data, and even in that case, the test statistic may not be data-efficient.

In this work, we study several asymptotic properties of our test statistic under both, the null and alternative hypothesis. Under the null hypothesis, we find the limiting distribution of our test statistic by approximating it by a degenerate V-statistic.  Under the alternative hypothesis, we prove that under reasonable conditions, our test has asymptotic power tending to one. While the asymptotic null distribution is known, in most cases, its parameters are hard to compute. Thus, in order to implement our test,  we propose a Wild Bootstrap approximation of the null distribution and prove that the Wild Bootstrap statistic converges in distribution to our test statistic under the null hypothesis, giving us a correct Type-I error. Finally, we show that particular instances of our test statistic recover some existing testing approaches studied in the literature, which suggests our testing procedure is a natural generalisation of them. Examples of tests which our testing procedure recovers are: weighted log-rank tests, Pearson-type  tests \citep{akritas1988} and projection-type tests \citep{Brendel2013weighted, ditzhaus2018more}.

Besides theoretical guarantees, we provide an extensive empirical evaluation of our method in two important data scenarios:  proportional hazard functions and  time-dependent hazard functions, including Weibull and periodic hazard alternatives.  In our experiments we demonstrate that our method has a good performance in a wide range of problems, which include sample sizes from small to large, and different censoring percentages. Our experiments show that finite-dimensional reproducing kernel Hilbert spaces tend to have an overall better performance in problems with fewer observations or a simpler hazard function. While more complex reproducing kernel Hilbert spaces (i.e., infinite-dimensional spaces) still show a good performance in these simpler problems, our experiments show that they are better suited for larger datasets and more complex hazard functions. We also provide a real-data evaluation of our method, trying different kernel functions, and we compare the results with those obtained by  \citet{ditzhaus2018more}.

The structure of the paper is as follows. In Section \ref{sec:LogRank}, we introduce some standard notation used in Survival Analysis and describe  weighted log-rank tests. In Section \ref{sec:RKHSLR} we introduce the essential background knowledge about reproducing kernel Hilbert spaces (RKHS) and formally introduce our test statistic. In Section \ref{sec:limitdistri} we study asymptotic properties of our test. Later, in Section \ref{sec:Wild}, we proceed to explain how to implement a Wild Bootstrap approach. Sections \ref{sec:Simu} and \ref{sec:Real} are devoted to empirical studies using simulated data and real data, respectively.

\section{Survival analysis background}\label{sec:LogRank}
\subsection{General notation}
We establish some general notation that will be used throughout the paper. We denote $\R_+ = (0,\infty)$. Let $f:\R_+\to\R_+$ be an arbitrary right-continuous function, then we define $f(x-) = \lim_{h \to 0}f(x-|h|)$. In this work we make use of standard  asymptotic notation \citep{janson2011probability}, e.g., $O_p$, $o_p$, $\Theta_p$, etc., with respect to the number of sample points $n$. In order to avoid large parentheses, we write $X = O_p(1)Y$ instead of $X= O_p(Y)$, especially if the expression for $Y$ is very long. Given a sequence of stochastic processes $(W_n(x): x\in \mathcal X)$, depending on the number of observations $n$, and a function $f(x)\geq 0$, we say that $W_n(x) = O_p(1)f(x)$ uniformly for all $x\in A_n$, if and only if $\sup_{x\in A_n}|W_n(x)|/f(x) = O_p(1)$, where $A_n\subseteq \mathcal X$ is a set that may depend on $n$, and we use the convention $0/0 = 0$. Lastly, in this work the integral symbol $\int_{a}^b$ means integration over $(a,b]$ unless we state otherwise.

\subsection{Right-censored data}
Our data corresponds to right-censored observations belonging to two groups/classes, namely group $0$ and group 1. We assume that the total number of observations is $n$, and that $n = n_0+n_1$, where $n_c$ denotes the number of observations in group $c\in\{0,1\}$. We use $[n]$ to denote the set $\{1,\ldots, n\}$. For the sake of asymptotic analysis, we assume that we have no vanishing groups, that is, $n_0/n \to \eta $ and  $n_1/n \to (1-\eta) $ with $\eta\in(0,1)$ as $n$ tends to infinity.

Right-censored datasets are commonly observed in triples $((X_i, \Delta_i, c_i))_{i\in[n]}$, where $X_i=\min\{T_i,C_i\}$ is an \emph{observed time}, defined as the minimum between a \emph{survival time} of interest $T_i$ and a \emph{censoring time} $C_i$, the variable
$\Delta_i=\ind_{\{X_i= T_i\}}$ is an indicator of whether we actually observe the survival time of interest, and $c_i \in \{0,1\}$ is an associated covariate that denotes the group membership of the $i$-th observation. In this work we assume that all triples are mutually independent and that the triples have the same distribution within each group. Additionally, we assume \emph{non-informative censoring}, meaning that, given the covariate $c_i$,  the survival time $T_i$ is independent of the censoring time $C_i$.
 
We denote by $F_c$, $G_c$, and $H_c$, the respective conditional distribution functions of the random variables $T_i$, $C_i$, and $X_i$, given that the covariate $c_i$ takes the value $c\in\{0,1\}$. Notice that $1-H_c=(1-F_c)(1-G_c)$ due to the non-informative censoring assumption. For simplicity of exposition, we assume that the distribution functions $F_c$ and $G_c$ are continuous distributions on $\R_+$, however, all the methods of this paper can be extended to general probability distributions on $\R$, as our results are based on counting processes arguments that have been developed in full generality. We denote by $S_c(x)=1-F_c(x)$ and by $\Lambda_c(x)=\int_{(0,x]}S_c(s)^{-1}dF_c(s)$, the so-called \emph{survival} and \emph{cumulative hazard} functions associated with $F_c$, respectively,  and  we denote by $\widehat S_c$ the  Kaplan-Meier estimator of $S_c$, and by $\widehat \Lambda_c$ the Nelson-Aalen estimator of $ \Lambda_c$. Notice that for the computation of $\widehat S_c$ and $\widehat\Lambda_c$ we only use triples $(X_i,\Delta_i,c_i)$ satisfying $c_i = c$.

While in principle we assume that the covariates $c_i$ are deterministic, it is also possible to consider them as  independent and identically distributed (i.i.d.) samples from a Bernoulli distribution with mean $\eta$ (notice that in this case $n_0$ is a random variable satisfying $n_0/n \to \eta$ almost surely). These two possible assumptions originate two models: the \emph{deterministic covariates} model and the \emph{random covariate}s model. For our analysis, it will be convenient to use the random covariates model as, under this assumption, the triples $((X_i,\Delta_i,c_i))_{i\in[n]}$ are i.i.d. which simplifies the asymptotic analysis, nevertheless, as we will show in Appendix~\ref{sec:equivalence}, both models are asymptotically equivalent. Also, in practice, there is no difference between the two models when implementing our testing procedure.

In this paper, we use the notion of pooled data and pooled distributions. The pooled data corresponds to the original data after ignoring/removing the covariates $(c_i)_{i\in[n]}$, that is, the pooled data is $((X_i,\Delta_i))_{i\in[n]}$. We use the term pooled distributions to refer to the distributions of a randomly selected data-point. Particularly, under the random covariates model, the pooled distributions associated with the observed times $(X_i)_{i\in[n]}$ and the survival times $(T_i)_{i\in[n]}$, are the marginal distributions of $X_i$ and $T_i$, respectively, given by $H = \eta H_0 + (1-\eta)H_1$ and $F = \eta F_0 + (1-\eta)F_1$. Under the deterministic covariates model, the pooled  distributions associated with the observed times and the survival times correspond to $(n_0/n)H_0 + (n_1/n)H_1$ and $(n_0/n)F_0 + (n_1/n)F_1$, respectively. Notice that, since $(n_0/n)H_0 + (n_1/n)H_1\to H$ and $(n_0/n)F_0 + (n_1/n)F_1\to F$ as $n$ grows to infinity, the pooled distributions of both models are asymptotically equivalent. Additionally, we denote by $S(x)=1-F(x)$ and by $\Lambda(x) = \int_{(0,x]} S(s)^{-1}dF(s)$ the survival and cumulative hazard functions associated with the pooled distribution $F$, and we use the Kaplan-Meier estimator $\widehat{S}$ and the Nelson-Aalen estimator $\widehat{\Lambda}$ to approximate $S$ and $\Lambda$, respectively. Notice that $\widehat{S}$ and $\widehat{\Lambda}$ are computed using the pooled-data. 

\subsection{Counting processes}\label{sub:countingprocess}
In this work, we use the standard counting processes notation used in Survival Analysis. We define the \emph{individual, class and pooled} counting processes by $N^i(x)=\Delta_{i}\ind_{\{X_i\leq x\}}$ for $i\in[n]$, $N_c(x)=\sum_{i=1}^{n}\ind_{\{c_i=c\}}N^i(x)$ for $c\in\{0,1\}$, and $N(x)=\sum_{i=1}^n N^i(x)=N_0(x)+N_1(x)$, respectively. Similarly, we define the individual, class and pooled risk functions by $Y^i(x)=\ind_{\{X_{i}\geq x\}}$, $Y_c(x)=\sum_{i=1}^{n}\ind_{\{c_i=c\}}Y^i(x)$ and $Y(x)=\sum_{i=1}^n Y^i(x)= Y_0(x)+Y_1(x)$, respectively. By using the previous notation, we write the Nelson-Aalen estimator of $d\Lambda_c$  as $d\widehat \Lambda_c(x) = dN_c(x)/Y_c(x)$, and the pooled Nelson-Aalen estimator as $d\widehat \Lambda(x) = dN(x)/Y(x)$. To avoid ambiguities we assume that $0/0 = 0$, e.g., $d\widehat \Lambda(x) = 0$ if $Y(x) = 0$. Finally,  we define the process $L(x)=Y_0(x)Y_1(x)/Y(x)$, which appears frequently in our results, and we define $\tau_n = \max\{X_i: i \in [n]\}$ and $\tau=\sup\{x:1-H(x)>0\}$. Observe that $\tau_n\to\tau$ a.s. and that $N^i(x)=N^i(\tau_n)$, $N_c(x)=N_c(\tau_n)$ and $N(x)=N(\tau_n)$ for any $x\geq \tau_n$. Also, notice that all our observed times, $X_i$, are less than $\tau$ as they are generated from continuous distributions.

We assume that all random variables are defined on a common filtrated probability space $(\Omega,\mathcal{F},(\mathcal{F}_{x})_{x\geq 0},\Prob)$, where the sigma-algebra $\mathcal F_x$ is generated by 
\begin{align*}
\left\{\ind_{\{X_{i}\leq s,\Delta_{i}=0\}},\ind_{\{X_{i}\leq s,\Delta_{i}=1\}}:0\leq s\leq x,i \in [n]\right\},
\end{align*}
and the $\Prob$-null sets of $\mathcal{F}$. Under the deterministic covariates model, we define the individual $(\mathcal{F}_x)$-martingales associated to each data point, $(X_i,\Delta_i,c_i)$, as $M^i(x)=N^i(x)-\int_{(0,x]}Y^i(s)d\Lambda_{c_i}(s)$. Similarly, under the deterministic covariates model, we define the class and pooled martingales by $M_c(x)=\sum_{i=1}^{n}\ind_{\{c_i = c\}}M^i(x)$  and $M(x)=\sum_{i=1}^n M^i(x)= M_0(x)+M_1(x)$, respectively, where $c \in \{0,1\}$. Notice that under the random covariates model, $M^{i}(x)$, $M_c(x)$ and $M(x)$ are $(\mathcal{F}_x)$-martingales when conditioning on $(c_i)_{i\in[n]}$ (alternatively, we can include the sigma algebra associated with $(c_i)_{i\in[n]}$ to generate $\mathcal F_x$). We denote by $\langle M^i \rangle(x)$ and $[M^i](x)$, respectively, the predictable and quadratic variation processes associated with $M^i(x)$, which, in the context of continuous survival and censoring times, are given by $d\langle M^i \rangle(x) = Y^i(x) d\Lambda_{c_i}(x)$ and $d[M^i](x)= dN^i(x)$. For further information about counting processes and their applications in Survival Analysis we refer the reader to \citet{Flemming91}.

Notice that, since we consider continuous survival times, we can estimate $F(x)$ by either $\widehat{F}(x)$ or $\widehat F(x-)$. The main advantage of using the latter estimator is that $(\widehat F(x-))_{x\geq 0}$ is left-continuous, and thus, it is a predictable process with respect to the filtration $(\mathcal F_x)_{x\geq 0}$.

\subsection{The log-rank estimator}
The weighted log-rank statistic is defined by
\begin{eqnarray}\label{eqn:log-rank}
\LRT_n(\omega) = \frac{n}{n_0n_1}\int_0^{\tau_n} \omega(\widehat F(x-))L(x)\left(d\widehat \Lambda_0(x)-d\widehat \Lambda_1(x)\right),
\end{eqnarray}
where the function $\omega: [0,1) \to \R$ is referred to as \emph{weight function}. In the previous equation recall that $L(x)=Y_0(x)Y_1(x)/Y(x)$, $\widehat F$ is the pooled Kaplan-Meier estimator of the pooled distribution function $F$, and $\widehat \Lambda_0$ and $\widehat \Lambda_1$ are the class Nelson-Aalen estimators of the cumulative hazard functions $\Lambda_0$ and $\Lambda_1$, respectively. 

\citet{gill1980censoring} studied the asymptotic behaviour of the weighted log-rank statistic, $\LRT_n(\omega)$, under the null hypothesis $H_0:F_0 = F_1$, obtaining that
\begin{align}
\sqrt{\frac{n_0n_1}{n}}\LRT_n(\omega)&\overset{\mathcal{D}}{\to}N(0,\sigma^2),\label{eqn:nullasympnormal}
\end{align} 
where 
\begin{align*}
\sigma^2&=\int_0^\tau w(F_0(x))^2\frac{(1-G_0(x))(1-G_1(x))}{\eta(1-G_0(x))+(1-\eta)(1-G_1(x))}dF_0(x).
\end{align*}
Notice that, even under the null hypothesis, the censoring distributions $G_0$ and $G_1$ are not necessarily equal, hence the expression given for $\sigma^2$ cannot be simplified. Also, in most cases, the censoring distributions are unknown, and thus it is not possible to evaluate $\sigma^2$ to find rejection regions. To carry-out a testing procedure, we can use the asymptotic null distribution given in Equation \eqref{eqn:nullasympnormal}, replacing the asymptotic variance $\sigma^2$ by $\widehat \sigma^2=n/(n_0n_1)\int_0^{\tau_n} \omega(\widehat F(x-))^2 L(x) d\widehat \Lambda (x)$, which is an unbiased estimator of $\sigma^2$ under the null hypothesis. We refer the reader to \cite{gill1980censoring} for more details.

It is a well-known fact that weighted log-rank tests relate to score tests through the choice of a specific family of alternatives. Let
\begin{align}
\Lambda(x;\theta,\omega)&=\int_0^xe^{\theta\omega(F_0(s))}d\Lambda_0(s),\quad\theta\in\Theta,\label{eqn:familyAternal}
\end{align}
be a parametric model (indexed by $\theta$) for the cumulative hazard function, where $\omega:[0,1)\to\R$ is a fixed continuous function, and $\Theta$ is chosen as an open subset of $\R$ containing $0$. Then, under the assumption that $\Lambda_1(x)=\Lambda(x;\theta_1,\omega)$ for some $\theta_1 \in \Theta$,  testing the null hypothesis, $H_0:\Lambda_0 = \Lambda_1$, is equivalent to test $H_0:\theta = 0$. 

We can test $H_0:\theta=0$ by using a score test. The score statistic is computed in terms of the score function defined by $U(\theta)=\partial/\partial\theta\log L_n(\theta;\omega)$, where $L_n(\theta;\omega)$ is the likelihood function under the model of Equation $\eqref{eqn:familyAternal}$. For the Goodness-of-Fit problem, where $F_0$ and $\Lambda_0$ are known, the score statistic is given by
\begin{align*}
U(0)&=\int_0^{\tau_n} w(F_0(x)) Y_1(x) \left(d \Lambda_0(x)-d \widehat \Lambda_1(x)\right).
\end{align*}
In the Two-Sample problem, $\Lambda_0$ and $F_0$ are unknown, but they can be estimated using the \textbf{pooled} Nelson-Aalen and Kaplan-Meier estimators, obtaining 
\begin{align*}
\widehat U(0)&=\int_0^{\tau_n} w(\widehat F(x-)) L(x) \left(d \widehat\Lambda(x)-d \widehat \Lambda_1(x)\right).
\end{align*}
Then, a simple comparison shows that $n/(n_0n_1)\widehat U(0)=\LRT_n(\omega)$, from which we deduce the relationship between the weighted log-rank test defined in Equation \eqref{eqn:log-rank} and the score test associated to the parametric model defined in Equation~\eqref{eqn:familyAternal}. By the Neyman-Pearson Lemma, we deduce that the weighted log-rank test is the most powerful test for small departures from the null, that is, when $\theta \to 0$. Unfortunately, if the model in Equation~\eqref{eqn:familyAternal} is misspecified, little can be said about the performance of the test, indeed, it is well-known that, in some cases, the weighted log-rank statistic yields a test with asymptotically zero-power.

\section{An RKHS approach to the log-rank test}\label{sec:RKHSLR}

A standard approach to broaden the power of weighted log-rank tests, in order to achieve a more robust behaviour across a larger class of alternatives, is to combine several weighted log-rank tests into a single statistic. In this way, if one of the weight functions completely fails to differentiate between the null and alternative hypotheses, we can still rely on the remaining weight functions to help us to discriminate, and thus increase the overall power of our testing procedure. Two interesting questions that arise from this approach are: How do we combine these weighted log-rank tests? Which and how many weight functions do we need to choose? 

Selecting and combining weight functions efficiently for the problem at hand is very difficult as, in most cases, it requires that the user analyses the data in advance to select appropriate weight functions,  e.g., it is usual to check if the hazard functions cross or/and if they show late/early departures. Searching for important features in the data may be very time-consuming and, moreover, it is always possible that there are features that do not translate into an appropriate weight function. Also, it might happen that the user overlooks some important relations.

Instead of relying on the expertise of a user to identify important features, we can consider as many weight functions as possible to account for the heterogeneity in the data.  We can take this approach to the extreme by considering a potentially infinite family of weights.  While this approach solves the problem of choosing weight functions, naively choosing a particular family of weights  will lead to tests that, i) are hard to calibrate as we will need several data points to reach the correct Type-I error, ii) do not have an analytically tractable closed form for the test statistic, leading to tests that iii) are computationally expensive.

In order to overcome the previous difficulties, we consider weight functions belonging to the unit ball of a reproducing kernel Hilbert space $\mathcal H$, and propose as test statistic $Z_n$, which is defined by
\begin{align}
Z_n&=\sup_{\omega\in\mathcal{H},\|\omega\|^2_{\mathcal{H}}\leq 1}\LRT_n(\omega)^2.\label{eqn:supStat}
\end{align}
While a priori there is not a good reason to choose this particular family of weights,  we will show that it has very nice features that translate into desirable properties of our testing procedure; among others, the test statistic has a simple closed form, allowing simple computations as well as an economic Wild Bootstrap implementation.

\subsection{Reproducing kernel Hilbert spaces}\label{sec:RHKSintro}

A reproducing kernel Hilbert space (RKHS) is a Hilbert space of functions $\omega:[0,1)\to \R$ satisfying that the evaluation $\omega \to \omega(x)$ is continuous for every fixed $x \in [0,1)$ (it is worth mentioning that we can replace $[0,1)$ by any space).  By the Riesz representation theorem, for any $x\in[0,1)$, there exists a unique element $K_x \in \mathcal H$ such that $\omega(x) = \langle K_x, \omega\rangle_{\mathcal H}$ for all $\omega \in \mathcal H$, which is known as the \emph{reproducing kernel property}. Since $K_x \in \mathcal H$ for all $x\in[0,1)$, $K_x(y) = K_y(x) = \langle K_x, K_y \rangle_{\mathcal H}$ holds for any $x,y \in [0,1)$. Then, this result allows us to define the so-called \emph{reproducing kernel} $K:[0,1)^2\to  \R$ as
\begin{align}\label{eqn:repKernel}
K(x,y) =  \langle K_x, K_y \rangle_{\mathcal H}.
\end{align}
From now on, in order to ease the notation, we write $K(x,\cdot)$ instead of $K_x(\cdot)$, even though the former induces a slight abuse of notation.
 
For every RKHS $\mathcal H$ with inner product $\langle \cdot, \cdot \rangle_{\mathcal H}$ there exists a unique symmetric positive-definite reproducing kernel $K:[0,1)^2\to\R$ satisfying Equation~\eqref{eqn:repKernel}. Conversely, by the Moore-Aronszajn Theorem \citep{Aronszajn1950}, for any  symmetric positive-definite kernel function $K:[0,1)^2\to \R$, there exists a unique RKHS $\mathcal H$ for which $K$ is its reproducing kernel.

Given a finite signed measure $\mu$ on $[0,1)$, we define the kernel mean embedding of $\mu$ into $\mathcal H$ as
\begin{align}\label{eqn:defiMKE}
\mu \to  \int_{[0,1)} K(y,\cdot)\mu(dy) \in \mathcal H,
\end{align}
where the previous integral has to be understood as a Bochner integral. A sufficient condition to guarantee the existence of such an embedding is that  $\int_{[0,1)} \sqrt{K(x,x)}|\mu|(dx)<\infty$ \citep[Lemma 3]{gretton2012kernel}. A kernel $K$ is said to be \emph{characteristic} if the mean kernel embedding, defined in Equation~\eqref{eqn:defiMKE}, is injective on the space of probability distributions (i.e., distinct probability measures are embedded as different elements of $\mathcal H$). Furthermore, a continuous kernel $K$ is said to be \emph{$c$-universal} if the mean kernel embedding is injective on the set of finite signed  measures. Clearly, a continuous $c$-universal kernel is characteristic. Most of the standard kernels used in applications are $c$-universal, e.g., the Gaussian kernel, $K(x,y) = \exp\{-(x-y)^2/\sigma^2\}$, and the Ornstein–Uhlenbeck kernel, $K(x,y) = \exp\{-|x-y|/|\sigma|\}$.  For more information about characteristic and $c$-universal kernels we refer the reader to \cite{fukumizu2009}, \cite{maudet2017}, \cite{simon2018}
and references therein.

\subsection{An RKHS log-rank test}
Recall that 
\begin{align*}
Z_n=\sup_{\omega\in\mathcal{H},\|\omega\|^2_{\mathcal{H}}\leq 1}\LRT_n(\omega)^2,
\end{align*}
where $\mathcal H$ is an RKHS of functions $\omega:[0,1) \to \R$ with reproducing kernel $K:[0,1)^2\to\R$. In this section, we show that it is possible to find a closed-form expression for our test statistic, $Z_n$, in terms of the reproducing kernel $K$ associated with $\mathcal{H}$.  This result is formally stated in the following theorem:
\begin{theorem}\label{thm:ZndoubleIntegral} 
\begin{align}
Z_n&=\left(\frac{n}{n_0n_1}\right)^2\int_0^{\tau_n}\int_0^{\tau_n}K(\widehat F(x-),\widehat F(y-))L(x)L(y)\left(d\widehat{\Lambda}_0(x)-d\widehat{\Lambda}_1(x)\right)\left(d\widehat{\Lambda}_0(y)-d\widehat{\Lambda}_1(y)\right).\label{eqn:ExplicitZ}
\end{align}
\end{theorem}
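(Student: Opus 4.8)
The plan is to exploit the fact that, for fixed data, the weighted log-rank statistic $\LRT_n(\omega)$ is a bounded linear functional of the weight $\omega$, identify its Riesz representer $g_n \in \mathcal H$, and then maximise $\langle \omega, g_n\rangle_{\mathcal H}^2$ over the unit ball by Cauchy--Schwarz. The first step is to insert the reproducing kernel property pointwise inside the integral in \eqref{eqn:log-rank}: for every $x \le \tau_n$ we have $\widehat F(x-) \in [0,1)$ (this is exactly why the left-continuous version $\widehat F(\cdot-)$ appears in the definition of $\LRT_n$), so $K(\widehat F(x-),\cdot) \in \mathcal H$ and $\omega(\widehat F(x-)) = \langle \omega, K(\widehat F(x-),\cdot)\rangle_{\mathcal H}$.

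The second step is to pull the inner product out of the integral. Since $d\widehat\Lambda_0$ and $d\widehat\Lambda_1$ are purely atomic measures supported on the (a.s.\ finite) set of observed event times, the integral $\int_0^{\tau_n}$ is in fact a finite sum, so linearity and continuity of $\langle \omega, \cdot\rangle_{\mathcal H}$ give
\[
\LRT_n(\omega) = \left\langle \omega,\, g_n \right\rangle_{\mathcal H}, \qquad g_n := \frac{n}{n_0 n_1}\int_0^{\tau_n} K(\widehat F(x-),\cdot)\, L(x)\left(d\widehat\Lambda_0(x) - d\widehat\Lambda_1(x)\right),
\]
where $g_n$ is a finite linear combination of the functions $K(\widehat F(x-),\cdot)$ and hence a genuine element of $\mathcal H$. (If one wished to state the identity for non-atomic integrators, $g_n$ would be a Bochner integral and the interchange would follow from the standard fact that bounded linear maps commute with Bochner integrals, under the integrability condition $\int \sqrt{K(x,x)}\,|\mu|(dx) < \infty$ recalled in Section~\ref{sec:RHKSintro}.)

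The third step is the optimisation: by Cauchy--Schwarz in $\mathcal H$, $\sup_{\|\omega\|_{\mathcal H} \le 1} \langle \omega, g_n\rangle_{\mathcal H}^2 = \|g_n\|_{\mathcal H}^2$, attained at $\omega = g_n/\|g_n\|_{\mathcal H}$ when $g_n \neq 0$ (and both sides vanish when $g_n = 0$); hence $Z_n = \|g_n\|_{\mathcal H}^2$. It remains to expand $\|g_n\|_{\mathcal H}^2 = \langle g_n, g_n\rangle_{\mathcal H}$ by bilinearity --- again a finite double sum --- and apply the reproducing property once more in the form $\langle K(\widehat F(x-),\cdot), K(\widehat F(y-),\cdot)\rangle_{\mathcal H} = K(\widehat F(x-), \widehat F(y-))$, which produces exactly \eqref{eqn:ExplicitZ}. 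I do not expect a real obstacle here: the argument is just ``linear functional $+$ Riesz $+$ Cauchy--Schwarz'', and the only point needing care --- that the $\mathcal H$-valued integral defining $g_n$ is well defined and commutes with $\langle \omega, \cdot\rangle_{\mathcal H}$ --- is trivial in the present setting because the integrators have finitely many atoms almost surely, so everything reduces to finite-sum manipulations.
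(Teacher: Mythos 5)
Your proposal is correct and follows essentially the same route as the paper: the reproducing property applied pointwise, linearity to identify the Riesz representer $g_n$ (the paper's $\phi_0^n-\phi_1^n$ from Lemma~\ref{lemma:LRembed}), Cauchy--Schwarz over the unit ball, and a final expansion of the squared norm via $\langle K(\widehat F(x-),\cdot),K(\widehat F(y-),\cdot)\rangle_{\mathcal H}=K(\widehat F(x-),\widehat F(y-))$. Your explicit remark that the integrators are finitely atomic, so the interchange reduces to finite sums, matches the paper's observation that the embeddings are finite sums of elements of $\mathcal H$.
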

By using the definition of the class Nelson-Aalen estimators, Equation~\eqref{eqn:ExplicitZ} can be rewritten as 
\begin{align}
Z_n&=\left(\frac{n}{n_0n_1}\right)^2\sum_{i=1}^n \sum_{j=1}^n K(\widehat F(X_i-), \widehat F(X_j-)) L(X_i)L(X_j) (-1)^{c_i+c_j}\frac{\Delta_i}{Y_{c_i}(X_i)}\frac{\Delta_j}{Y_{c_j}(X_j)},\label{eqn:ZnComputationImplementation}
\end{align}
which is a quadratic form with matrix-representation given by
\begin{align}\label{eqn:Z_nMatrix}
Z_n= \left(\frac{n}{n_0n_1}\right)^2\mathbf{V}^{\intercal} \widehat{ \mathbf K}\mathbf{V},
\end{align}
where $\mathbf{V} \in \R^n$ is defined by $\mathbf{V}_j = L(X_j)(-1)^{c_j}\Delta_j/Y_{c_j}(X_j)$  and  $\widehat{\mathbf{K}}\in \R^n\times \R^n$ is the matrix defined by $\widehat{\mathbf{K}}_{ij} = K(\widehat F(X_i),\widehat F(X_j))$. 

In order to prove Theorem~\ref{thm:ZndoubleIntegral}, we introduce an alternative embedding of the data, which is different to the kernel mean embedding of Equation \eqref{eqn:defiMKE}. Given a measure  $\nu$ (not necessarily a probability measure), we define the  embedding $\phi$ of a finite signed measure $\nu$ into $\mathcal H$ by 
\begin{align*}
\phi_{\nu}(\cdot)&=\int_{0}^{\tau} K(F(y),\cdot)\nu(dy).
\end{align*}
In practice, since the pooled distribution $F$ is unknown, we replace it by the pooled Kaplan-Meier estimator $\widehat{F}(x-)$ (recall that the survival time are continuous). Then, by using the previous definition, we introduce 
\begin{align}
\phi_{0}^n(\cdot) = \frac{n}{n_0n_1}\int_0^{\tau_n} K( \widehat F(y-),\cdot) L(y)d\widehat{\Lambda}_0(y),\quad\text{and}\quad\phi_{1}^n(\cdot) = \frac{n}{n_0n_1}\int_0^{\tau_n} K(\widehat F(y-),\cdot) L(y) d\widehat\Lambda_1(y),\label{eqn:Embeddingsn}
\end{align}
which are the corresponding embeddings of two empirical measures, $\nu^n_0$ and $\nu^n_1$, into $\mathcal{H}$, where $\nu^n_0$ and $\nu^n_1$ are defined for any $s<t$ by
\begin{align}
\nu_0^n((s,t]) = \frac{n}{n_0n_1}\int_{s}^t L(x)d\widehat{\Lambda}_0(x),\quad&\text{and}\quad \nu^n_1((s,t]) = \frac{n}{n_0n_1}\int_{s}^t L(x)d\widehat{\Lambda}_1(x),\label{eqn:measuresn}
\end{align}
respectively. Notice that $\phi_0^n$ and $\phi_1^n$ are always well-defined,  meaning that $\phi_0^n\in\mathcal{H}$ and $\phi_1^n\in\mathcal{H}$, since they are just finite sums of elements of $\mathcal H$ (observe that $K(\widehat F(y-), \cdot) \in \mathcal H$ for any fixed $y$). 

By using the previous embeddings, we obtain an inner product representation of the weighted log-rank statistic, which will be used in the proof of Theorem \ref{thm:ZndoubleIntegral}.
\begin{lemma}[Log-rank representation]\label{lemma:LRembed}
For any $\omega\in\mathcal{H}$,
\begin{align*}
\LRT_n(\omega)&=\left\langle w, \phi_{0}^n-\phi_{1}^n \right\rangle_{\mathcal H},
\end{align*}
and then
\begin{align*}
    Z_n = \|\phi_0^n-\phi_1^n\|_{\mathcal H}^2.
\end{align*}
\end{lemma}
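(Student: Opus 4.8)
The plan is to prove the inner-product identity first and then obtain the expression for $Z_n$ as an immediate consequence of the Cauchy--Schwarz inequality. The only ingredient needed for the first part is the reproducing kernel property together with the fact that the integrals defining $\LRT_n(\omega)$, $\phi_0^n$ and $\phi_1^n$ are really finite sums, since $d\widehat\Lambda_0$ and $d\widehat\Lambda_1$ are discrete measures supported on the observed event times.

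Concretely, I would start from the definition
\begin{align*}
\LRT_n(\omega) = \frac{n}{n_0n_1}\int_0^{\tau_n} \omega(\widehat F(x-))L(x)\left(d\widehat\Lambda_0(x)-d\widehat\Lambda_1(x)\right),
\end{align*}
and replace the scalar evaluation $\omega(\widehat F(x-))$ by $\langle \omega, K(\widehat F(x-),\cdot)\rangle_{\mathcal H}$ using the reproducing property. Because the integrating measure $L(x)(d\widehat\Lambda_0(x)-d\widehat\Lambda_1(x))$ is a finite signed combination of point masses, the integral is a finite linear combination of terms $\langle \omega, K(\widehat F(X_i-),\cdot)\rangle_{\mathcal H}$, so linearity of the inner product in its second argument lets me pull $\omega$ out:
\begin{align*}
\LRT_n(\omega) = \left\langle \omega,\ \frac{n}{n_0n_1}\int_0^{\tau_n} K(\widehat F(x-),\cdot)L(x)\left(d\widehat\Lambda_0(x)-d\widehat\Lambda_1(x)\right)\right\rangle_{\mathcal H} = \left\langle \omega, \phi_0^n - \phi_1^n\right\rangle_{\mathcal H},
\end{align*}
where the last equality is just the definitions in \eqref{eqn:Embeddingsn}; note $\phi_0^n-\phi_1^n\in\mathcal H$ is already guaranteed in the text, since it is a finite sum of elements $K(\widehat F(X_i-),\cdot)$. (If one wants to phrase this for a genuine Bochner integral rather than a finite sum, the same step goes through because $\langle \omega,\cdot\rangle_{\mathcal H}$ is a bounded linear functional and therefore commutes with Bochner integration; but the finite-sum observation makes this unnecessary here.)

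For the second claim, I would write
\begin{align*}
Z_n = \sup_{\|\omega\|_{\mathcal H}\le 1}\LRT_n(\omega)^2 = \sup_{\|\omega\|_{\mathcal H}\le 1}\left\langle \omega, \phi_0^n-\phi_1^n\right\rangle_{\mathcal H}^2.
\end{align*}
By Cauchy--Schwarz the right-hand side is at most $\|\phi_0^n-\phi_1^n\|_{\mathcal H}^2$, and if $\phi_0^n-\phi_1^n\neq 0$ the value is attained by taking $\omega = (\phi_0^n-\phi_1^n)/\|\phi_0^n-\phi_1^n\|_{\mathcal H}$, which lies in the unit ball; if $\phi_0^n-\phi_1^n=0$ both sides are $0$. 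Hence $Z_n = \|\phi_0^n-\phi_1^n\|_{\mathcal H}^2$.

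There is essentially no hard step here: the result is a direct application of the reproducing property and the variational characterisation of the Hilbert-space norm. The only point requiring a line of care is the justification that the inner product commutes with the (finite) integral defining the embeddings, and that $\phi_0^n-\phi_1^n$ genuinely lies in $\mathcal H$ so that Cauchy--Schwarz and the supremum-attaining argument apply — both of which follow from the discreteness of the Nelson--Aalen estimators and have already been noted in the surrounding text. The expansion of $\|\phi_0^n-\phi_1^n\|_{\mathcal H}^2$ into the double integral of Theorem~\ref{thm:ZndoubleIntegral} is then a routine bilinear computation using $\langle K(\widehat F(x-),\cdot),K(\widehat F(y-),\cdot)\rangle_{\mathcal H}=K(\widehat F(x-),\widehat F(y-))$.
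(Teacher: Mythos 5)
Your proposal is correct and follows essentially the same route as the paper: apply the reproducing property, pull $\omega$ out of the (finite-sum) integral by linearity to get $\LRT_n(\omega)=\langle\omega,\phi_0^n-\phi_1^n\rangle_{\mathcal H}$, and then identify the supremum over the unit ball with $\|\phi_0^n-\phi_1^n\|_{\mathcal H}^2$. Your explicit Cauchy--Schwarz/attainment argument merely spells out what the paper summarises as ``the last equality holds since $\mathcal H$ is a Hilbert space.''
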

\begin{proof}
Since $\widehat F(x-) \in [0,1)$ for any $x\leq \tau_n$, the reproducing property yields $\omega(\widehat F(x-)) = \langle \omega(\cdot), K(\widehat F(x-),\cdot) \rangle_{\mathcal H}$. Then, by the linearity of the inner  product and integration,
\begin{align*}
\LRT_n(\omega) &= \frac{n}{n_0n_1}\int_0^{\tau_n} \omega(\widehat F(x-))L(x)\left(d\widehat \Lambda_0(x)-d\widehat \Lambda_1(x)\right)\nonumber\\
&= \frac{n}{n_0n_1}\int_0^{\tau_n} \langle w, K(\widehat F(x-),\cdot) \rangle_{\mathcal H}L(x) \left(d\widehat\Lambda_0(x) - d\widehat\Lambda_1(x)\right)\nonumber\\
&=  \left\langle w, \frac{n}{n_0n_1}\int_0^{\tau_n} K(\widehat F(x-),\cdot)L(x) \left(d\widehat\Lambda_0(x) - d\widehat\Lambda_1(x)\right) \right\rangle_{\mathcal H}\nonumber\\
&= \left\langle w, \phi^n_{0}-\phi^n_{1} \right\rangle_{\mathcal H}.
\end{align*}
Finally, by taking supremum over the unit ball, we have that 
\begin{align}
    Z_n&=\sup_{\omega\in\mathcal{H},\|\omega\|^2_{\mathcal{H}}\leq 1}\LRT_n(\omega)^2=\sup_{\omega\in\mathcal{H},\|\omega\|^2_{\mathcal{H}}\leq 1} \left\langle w, \phi^n_{0}-\phi^n_{1} \right\rangle_{\mathcal H}^2 = \|\phi_0^n-\phi_1^n\|_{\mathcal H}^2,
\end{align}
where the last equality holds since $\mathcal H$ is a Hilbert space.
\end{proof}

\begin{proof}[Proof of Theorem~\ref{thm:ZndoubleIntegral}]
By Equation \eqref{eqn:Embeddingsn}, we have  $(\phi_0^n-\phi_1^n)(\cdot) = \frac{n}{n_0n_1}\int_{0}^{\tau_n} K(\widehat F(x-),\cdot)L(x)( d\widehat \Lambda_0(x)-d\widehat \Lambda_1(x))$. Then, by using the linearity of the inner product and integration, and the reproducing kernel property, we deduce
\begin{align*}
&\|\phi_0^n-\phi_1^n\|_{\mathcal{H}}^2 \nonumber\\
&\quad= \left\langle \frac{n}{n_0n_1}\int_{0}^{\tau_n} K(\widehat F(x-), \cdot)L(x)\left(d\widehat \Lambda_0(x)-d\widehat \Lambda_1(x)\right),\frac{n}{n_0n_1}\int_{0}^{\tau_n} K(\widehat F(y-), \cdot)L(y)\left(d\widehat \Lambda_0(y)-d\widehat \Lambda_1(y)\right)\right \rangle_{\mathcal{H}}\\
&\quad=\left(\frac{n}{n_0n_1}\right)^2 \int_0^{\tau_n}\int_0^{\tau_n} \left \langle K(\widehat F(x-), \cdot), K(\widehat F(y-), \cdot)\right\rangle_{\mathcal H} L(x)L(y)\left( d\widehat \Lambda_0(x)-d\widehat \Lambda_1(x)\right)\left( d\widehat \Lambda_0(y)-d\widehat \Lambda_1(y)\right)\\
&\quad=\left(\frac{n}{n_0n_1}\right)^2 \int_0^{\tau_n}\int_0^{\tau_n} K(\widehat F(x-), \widehat F(y-)) L(x)L(y)\left(d\widehat \Lambda_0(x)-d\widehat \Lambda_1(x)\right)\left(d\widehat \Lambda_0(y)-d\widehat \Lambda_1(y)\right).
\end{align*}
\end{proof}

\subsection{Recovering existing tests}\label{sec:Recovering}
Our testing approach is based on fixing an RKHS of functions $\mathcal{H}$, which is done by fixing a kernel $K$. We show that for specific choices of the reproducing kernel $K$, we recover some previously known tests.
\subsubsection{Weighted log-rank tests}\label{sec:weilogranktest}
In order to recover the classical weighted log-rank test $\LRT_n(\omega)$, we set $K(x,y) = \omega(x)\omega(y)$. Notice that $K$ is symmetric and non-negative definite for any $\omega$. Then, by evaluating $Z_n$ using Equation \eqref{eqn:ExplicitZ}, we obtain
\begin{align*}
Z_n = \LRT_n(\omega)^2.
\end{align*}
\subsubsection{Pearson-type tests}\label{sec:pearsontest}
Consider the partition of the interval $(0,1]$ given by $I_j = (j/k,(j+1)/k]$, where $j \in \{0,\ldots, k-1\}$ and $k>0$ is an integer. We recover Pearson-type testing approaches by choosing the kernel $K(x,y) = \sum_{j=0}^{k-1} \omega(x)\omega(y)\ind_{I_j \times I_j}(x,y),$ where $\omega: [0,1) \to \R$ is a given weight function. Then, by evaluating $Z_n$ using Equation \eqref{eqn:ExplicitZ}, we get
\begin{align*}
Z_n = \sum_{j=0}^{k-1} \LRT_n\left(\omega \ind_{\{I_j\}}\right)^2=\sum_{j=0}^{k-1}\left(\int_{I_j}\omega(\widehat F(x-))L(x)\left(d\widehat{\Lambda}_0(x)-d\widehat{\Lambda}_1(x)\right)\right)^2.
\end{align*}
Particularly, if we choose $\omega \equiv 1$, we recover 
\begin{align}\label{eqn:pearsonTest}
Z_n &=\sum_{j=0}^{k-1}\left(\sum_{X_{i}\in I_j}\frac{Y_1(X_{i})}{Y(X_{i})}\Delta_{i}\ind_{\{c_i = 0\}}-\sum_{X_{i}\in I_j}\frac{Y_0(X_{i})}{Y(X_{i})}\Delta_{i}\ind_{\{c_i=1\}}\right)^2,
\end{align}
which compares observed frequencies on $I_j$ between the two groups. We can go further and consider the normalised version of Pearson-type tests by choosing the kernel as 
\begin{align*}
K(x,y) = \sum_{j=0}^{k-1}\frac{\omega(x)\omega(y)}{\sigma_j(\omega)^2}\ind_{I_j \times I_j}(x,y),
\end{align*}
where $\sigma_j(\omega)^2=\frac{n}{n_0 n_1}\int_{I_j} \omega(\widehat F(x-))^2L(x) d\widehat \Lambda (x)$. Notice that in this case $ K(x,y)$ is a random kernel, but it converges to a deterministic one since $\sigma_j(\omega)$ converges almost surely to a constant when $n$ tends to infinity.

\subsubsection{Projection-type test}\label{sec:projectiontest}
Projection-type tests were introduced by \citet{Brendel2013weighted}, and recently revisited by  \citet{ditzhaus2018more}. Consider a finite number of weight functions $w_1,\ldots, w_k$ such that $w_i\circ F \in L^2(\nu)$ for all $i\in\{1,\ldots, k\}$,  where $\nu$ is the measure given by 
\begin{align*}
\nu(dx)& =  \frac{(1-H_0(x))(1-H_1(x))}{1-H(x)}d\Lambda(x).
\end{align*}
The projection-type testing approach considers the following test statistic
\begin{align*}
\sum_{i=1}^k \LRT(\tilde w_i)^2,
\end{align*}
where $\{\tilde w_i\}_{i=1}^k$ is an orthonormal base of the subspace $U$ of $L^2(\nu)$ generated by $w_1\circ F,\ldots, w_k \circ F$. We refer the reader to \citet{Brendel2013weighted} for a detailed explanation. \citet{Brendel2013weighted} and \citet{ditzhaus2018more}  recommend using weights with some meaning,  for instance,  $\omega(x)=x-1/2$  is used to detect a cross between two hazard functions around the median of the pooled survival time distribution, and  $\omega(x)=x^p (1-x)^q$ is used to detect early and/or late differences between the hazard functions, depending on the parameters $p$ and $q$. Nevertheless, observe that, in terms of projections, the meaning of the functions does not matter as, for instance, projecting over the subspace generated by $\{1,x-1/2, x(1-x), x^2(1-x)\}$ is the same as projecting over the subspace generated by $\{1,x,x^2, x^3\}$. 
 
Our approach can be seen as a natural generalisation of the previous method. Indeed, we can recover the previous test statistic by choosing the kernel 
\begin{align*}
K(x,y) = \sum_{i=1}^k \tilde w_i(x) \tilde w_i(y).    
\end{align*}
To compute the orthonormal basis $\{\tilde \omega_i\}_{i=1}^k$, we just need to compute the $k\times k$ matrix $\boldsymbol P$, where $\boldsymbol P_{ij} = \langle w_i, w_i\rangle_{ L^2(\nu)} = \int_0^\tau \omega_i(F(x))\omega_j(F(x)) d\nu(x)$, and then we set $\tilde \omega_i(\cdot) = \sum_{j}\boldsymbol P^{-1}_{ij}w_j(\cdot)$. Notice that the matrix $\boldsymbol P$ may not have an inverse, meaning that the functions $\omega_i$ are not linearly independent in $L^2(\nu)$. In such a case we compute its Moore-Penrose pseudo-inverse. Since, in practice, we do not have enough information to compute $\boldsymbol P$, we can estimate it (under the null hypothesis) by $ \widehat{\boldsymbol P}_{ij} = n/(n_0 n_1)\int_0^\tau \omega_i(\widehat F(x-))\omega_j(\widehat F(x-)) L(x) d\widehat \Lambda(x)$. Finally, observe that this procedure generates a random kernel that converges to a deterministic one when $n$ tends to infinity.

\section{Asymptotic Properties}\label{sec:limitdistri}
In this section we study some asymptotic properties of our test statistic $Z_n$. Before introducing our main results, we state some standard and very reasonable conditions that are needed in our proofs. We first state some smoothness, and  moment conditions regarding the kernel $K$.

\begin{condition}\label{con:conditions}
Let $K:[0,1)^2\to\R$ be a continuous kernel, and let $X,X',Y,Y'$ be independent random variables such that $X,X' \sim F_0$ and $Y,Y' \sim F_1$. Then, we assume that
\begin{enumerate}[i)]
\item $\E\left( K(F(X),F(Y))^2(1-G_0(X))(1-G_1(Y))\right)<\infty$,
\item $\E\left( K(F(X),F(X'))^2(1-G_0(X))(1-G_0(X'))\right)<\infty$,
\item $\E\left(K(F(X),F(X))(1-G_0(X))\right)<\infty$,
\item $\E\left( K(F(Y),F({Y'}))^2(1-G_1(Y))(1-G_1(Y'))\right)<\infty$, and
\item $\E\left(K(F(Y),F(Y))(1-G_1(Y))\right)<\infty$.
\end{enumerate}
\end{condition}
In the previous conditions, recall that $F=\eta F_0+(1-\eta)F_1$  is the pooled distribution, and  $F=F_0$  under the null hypothesis. In addition, we need a technical condition to deal with the randomness generated by  $\widehat F(x-)$, which is an analogue of the conditions needed in Theorem 4.2.1 of \citet{gill1980censoring}.

\begin{condition}\label{con:ExtraConditions}
Assume that for every $\varepsilon>0$, it holds that

\begin{enumerate}[i)]
    \item \label{con:ExtraConditions3}$ \lim_{t\to \tau} \limsup_{n\to \infty} \Prob\left( \int_{t}^{\tau} K(\widehat F(x-),\widehat F(x-))(1-G_0(x))dF_0(x)> \varepsilon\right)=0$, and
    \item \label{con:ExtraConditions4}$ \lim_{t\to \tau} \limsup_{n\to \infty} \Prob\left( \int_{t}^{\tau} K(\widehat F(x-),\widehat F(x-))(1-G_1(x))dF_1(x)> \varepsilon\right)=0$.
\end{enumerate}
\end{condition}
Note that our conditions are not easy to verify in practice as they require knowledge of the distributions involved,
 however, the conditions are trivially satisfied for continuous and bounded kernels on $[0,1)^2$, which is a property that most standard kernels, such as the Gaussian  kernel, satisfy.

Finally, in order to ease notation, we present our results in terms of the functions $\psi: \R_+ \to \R$ and $\psi^*: \R_+ \to \R$, defined by
\begin{align}
\psi(x)&=\frac{(1-G_0(x))(1-G_1(x))}{\eta(1-G_0(x))+(1-\eta)(1-G_1(x))},\label{Definition:psi}
\end{align}
and
\begin{align}
\psi^*(x)&=\frac{(1-G_0(x))(1-G_1(x))}{\eta(1-H_0(x))+(1-\eta)(1-H_1(x))}.\label{eqn:psistar}
\end{align}

\subsection{Limit distribution under the null hypothesis}
Our first result establishes that, under the null hypothesis, $(n_0n_1/n)Z_n$ converges in distribution to a limit random variable $Z$, when the number of data points tends to infinity.
\begin{theorem}[Limit distribution under the null]\label{thm:FinalDistr}
Assume Conditions \ref{con:conditions} and \ref{con:ExtraConditions} hold. Then, under the null hypothesis, we have that
\begin{eqnarray}
\frac{n_0n_1}{n}Z_n&\overset{\mathcal{D}}{\to}&Z = \int_0^{\tau}K(F_0(x),F_0(x))\psi(x)dF_0(x)+\frac{1}{\eta(1-\eta)}Y,
\end{eqnarray}
as $n$ approaches infinity, where
$Y=\sum_{i=1}^\infty\lambda_i(\xi_i^2-1)$, $\xi_1,\xi_2,\ldots$ are a collection of (potentially infinity) i.i.d. standard normal random variables, and $\lambda_1,\lambda_2,\ldots$ are non-negative constants. Additionally, the mean and variance of $Z$ are respectively given by
\begin{align*}
\E\left(Z\right)=\int_0^{\tau}K(F_0(x),F_0(x))\psi(x)dF_0(x)\quad\text{and}\quad&\Var\left(Z\right)=2\int_0^{\tau}\int_0^{\tau}K(F_0(x),F_0(y))^2\psi(x)\psi(y)dF_0(x)dF_0(y).
\end{align*}

\end{theorem}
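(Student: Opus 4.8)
The plan is to rewrite everything, under $H_0:\Lambda_0=\Lambda_1=:\Lambda$, as a double stochastic integral against a single square-integrable martingale. Since $d\widehat\Lambda_c(x)-d\Lambda(x)=dM_c(x)/Y_c(x)$ on $\{Y_c>0\}$ and the common compensator cancels,
\[
L(x)\bigl(d\widehat\Lambda_0(x)-d\widehat\Lambda_1(x)\bigr)=\frac{Y_1(x)}{Y(x)}\,dM_0(x)-\frac{Y_0(x)}{Y(x)}\,dM_1(x)=:dM^*(x),
\]
a mean-zero $(\mathcal F_x)$-martingale with $d\langle M^*\rangle(x)=L(x)\,d\Lambda(x)$, so that $\LRT_n(\omega)=\tfrac{n}{n_0n_1}\int_0^{\tau_n}\omega(\widehat F(x-))\,dM^*(x)$ and, by Theorem~\ref{thm:ZndoubleIntegral},
\[
\frac{n_0n_1}{n}Z_n=\frac{n}{n_0n_1}\int_0^{\tau_n}\!\!\int_0^{\tau_n}K(\widehat F(x-),\widehat F(y-))\,dM^*(x)\,dM^*(y).
\]
Viewing this as the finite double sum in Equation~\eqref{eqn:ZnComputationImplementation}, I would split it into its diagonal and off-diagonal parts, $\tfrac{n_0n_1}{n}Z_n=D_n+R_n$, and treat them separately.

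\textbf{Diagonal term.} Here $D_n=\tfrac{n}{n_0n_1}\int_0^{\tau_n}K(\widehat F(x-),\widehat F(x-))\,d[M^*](x)$. The first step is to replace $[M^*]$ by $\langle M^*\rangle$, their difference being a martingale whose predictable variation is controlled by Lenglart's inequality together with the moment Condition~\ref{con:conditions}; this gives $D_n=\tfrac{n}{n_0n_1}\int_0^{\tau_n}K(\widehat F(x-),\widehat F(x-))L(x)\,d\Lambda(x)+o_P(1)$. Then, using the uniform a.s.\ consistency $\sup_x|\widehat F(x-)-F_0(x)|\to0$, the uniform a.s.\ limits of $Y_c(\cdot)/n$ and $Y(\cdot)/n$, and the resulting limit $\tfrac{n}{n_0n_1}L(x)\,d\Lambda(x)\to\psi(x)\,dF_0(x)$, a dominated-convergence argument yields $D_n\to\int_0^t K(F_0,F_0)\psi\,dF_0$ for each fixed $t<\tau$; Condition~\ref{con:ExtraConditions} is exactly what makes the contribution of $(t,\tau_n]$ uniformly negligible, so letting $t\uparrow\tau$ identifies the constant $\int_0^\tau K(F_0,F_0)\psi\,dF_0$, which is finite by Condition~\ref{con:conditions}.

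\textbf{Off-diagonal term and the main obstacle.} This is the degenerate part, $R_n=\tfrac{2n}{n_0n_1}\int_0^{\tau_n}\int_0^{x-}K(\widehat F(x-),\widehat F(y-))\,dM^*(y)\,dM^*(x)$, and its treatment is the crux of the proof. Following the route announced in the introduction, I would show that $R_n$ is asymptotically equivalent to the off-diagonal part $\tfrac1n\sum_{i\neq j}h(\zeta_i,\zeta_j)$ of a completely degenerate $V$-statistic built from the i.i.d.\ triples $\zeta_i=(X_i,\Delta_i,c_i)$, where $h$ is obtained from the summand of Equation~\eqref{eqn:ZnComputationImplementation} by replacing $\widehat F(\cdot-)$ with $F(\cdot)$ and the empirical ratio $Y_{1-c_i}(X_i)/Y(X_i)$ with its deterministic limit; the algebraic identities that make $\LRT_n$ mean-zero under $H_0$ translate into $\E[h(\zeta_1,\zeta_2)\mid\zeta_1]=0$ a.s., i.e.\ complete degeneracy. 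The delicate step — and the one I expect to be the main obstacle — is the remainder bound: the error between $R_n$ and this $V$-statistic must be shown to be $o_P(1)$ after the degenerate normalisation $\tfrac{n}{n_0n_1}$ (of order $(\eta(1-\eta)n)^{-1}$), for which the crude termwise bound fails, and one must exploit that each error term is itself a (double) stochastic integral against $M^*$, so that Lenglart's inequality, the moments of Condition~\ref{con:conditions} and the boundary control of Condition~\ref{con:ExtraConditions} show that its predictable variation vanishes; this is precisely the ``double integrals with respect to martingales'' machinery advertised in the abstract. Granting the approximation, the classical limit theorem for completely degenerate $V$-statistics gives $\tfrac1n\sum_{i\neq j}h(\zeta_i,\zeta_j)\overset{\mathcal{D}}{\to}\sum_{i\geq1}\kappa_i(\xi_i^2-1)$, with $(\xi_i)$ i.i.d.\ $N(0,1)$ and $\kappa_i\geq0$ the eigenvalues of the self-adjoint, trace-class operator $g\mapsto\E[h(\zeta_1,\cdot)g(\zeta_1)]$; setting $\lambda_i=\eta(1-\eta)\kappa_i$ absorbs the stated factor $\tfrac1{\eta(1-\eta)}$.

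\textbf{Identification of the moments.} Combining the two parts yields $\tfrac{n_0n_1}{n}Z_n\overset{\mathcal{D}}{\to}\int_0^\tau K(F_0,F_0)\psi\,dF_0+\sum_i\kappa_i(\xi_i^2-1)=Z$, the asserted form. Since $\E[\xi_i^2-1]=0$ we get $\E Z=\int_0^\tau K(F_0,F_0)\psi\,dF_0$, which equals the trace $\E[h(\zeta_1,\zeta_1)]=\sum_i\kappa_i$ (this is the content of the diagonal computation, via $\sum_k e_k(u)^2=K(u,u)$ for any orthonormal basis $(e_k)$ of $\mathcal H$). For the variance, $\Var Z=2\sum_i\kappa_i^2=2\,\E[h(\zeta_1,\zeta_2)^2]$, and a direct computation of this expectation — using that the law of $\zeta_i$ reweighted by $(Y_{1-c_i}(X_i)/Y(X_i))^2\ind_{\{\Delta_i=1\}}$ has ``density'' proportional to $\psi\,dF_0$, together with $\sum_k e_k(u)e_k(v)=K(u,v)$ — gives $2\int_0^\tau\int_0^\tau K(F_0(x),F_0(y))^2\psi(x)\psi(y)\,dF_0(x)\,dF_0(y)$, as claimed. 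Finiteness of the trace, hence $\sum_i\kappa_i<\infty$ and well-definedness of the limiting series $Y$, is guaranteed by Condition~\ref{con:conditions}.
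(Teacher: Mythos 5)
Your overall strategy is the paper's: reduce $\tfrac{n_0n_1}{n}Z_n$ to a (completely degenerate) $V$-statistic by replacing $\widehat F(\cdot-)$, $\tfrac{n}{n_0n_1}L(\cdot)$ and $Y_c(\cdot)/n$ with their limits, control the replacement errors by viewing them as double stochastic integrals against martingales and applying a Lenglart-type bound (this is exactly the role of the paper's Theorem~\ref{thm:newGeneralTheorem}), and then invoke the classical limit theorem for degenerate $V$-statistics; the moment identities you state are the ones the paper derives from Proposition~\ref{thm:PropJ}. Your explicit diagonal/off-diagonal split is only a cosmetic difference: the paper keeps the diagonal inside the $V$-statistic and reads off the constant $\E(J(s_1,s_1))$ from the $V$-statistic limit theorem rather than proving a separate law of large numbers for it, but the content is the same.

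There is one genuine omission. Your limit theorem for the off-diagonal part requires the triples $\zeta_i=(X_i,\Delta_i,c_i)$ to be i.i.d., which holds only under the \emph{random} covariates model; under the paper's default \emph{deterministic} covariates model the triples are independent but not identically distributed (group-$0$ and group-$1$ observations have different laws, since $G_0\neq G_1$ in general even under $H_0$), so the degenerate $V$-statistic theorem does not apply directly. The paper spends an entire step (Section~\ref{sec:equivalence}, Lemma~\ref{lemma:coupling}) constructing a coupling of the two models under which the two $V$-statistics differ by $o_p(n^{-1})$, using the conditional mean-zero and second-moment bounds of Proposition~\ref{thm:PropJ} to control the $O_p(\sqrt n)$ mismatched observations; your proposal silently assumes i.i.d.\ sampling and would need this (or an equivalent non-identically-distributed version of the degenerate limit theorem) to cover the stated setting. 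A second, minor point: in your trace identity you appeal to $\sum_k e_k(u)^2=K(u,u)$ for an orthonormal basis of $\mathcal H$, but the eigenfunctions $(e_k)$ relevant to $\sum_i\kappa_i$ live in $L^2$ of the law of $\zeta_1$, not in $\mathcal H$; the correct route is the standard trace formula $\sum_i\kappa_i=\E[h(\zeta_1,\zeta_1)]$ for the positive trace-class integral operator, which is what the paper uses.
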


In the previous theorem, the constants $\lambda_i$ are  the eigenvalues of an integral operator in the $L^2$ space associated with the measure induced by the triple $(T_i, \Delta_i, c_i)$ under the random covariates model. More details about the limit distribution are shown in Section \ref{sec:proofMain} along with the proof of Theorem~\ref{thm:FinalDistr}.

\subsection{Power under alternatives}

We continue by studying the asymptotic behaviour of our test statistic under the alternative hypothesis, that is, under the assumption that $H_1: F_0 \neq F_1$ holds. To this end, we first prove that $Z_n$ (without considering any scaling) converges to a deterministic value. 

Define the measures $\nu_0$ and $\nu_1$ on $\R_+$ by
\begin{align}
\nu_0(A)=\int_{A}\psi^*(s)S_1(s)dF_0(s)\quad&\text{and}\quad
\nu_1(A)=\int_{A}\psi^*(s)S_0(s)dF_1(s),\label{eqn:measuresInf}
\end{align}
where $\psi^*$ is defined in Equation \eqref{eqn:psistar}, and define their embeddings into $\mathcal{H}$ by
\begin{align}\label{eqn:fancyEmbedding1}
\phi_{0}(\cdot) = \int_0^\tau K(F(y),\cdot)d\nu_0(y)\quad\text{and}\quad\phi_{1}(\cdot) = \int_0^\tau K(F(y),\cdot)d\nu_1(y),
\end{align}
respectively. The measures $\nu_0$ and $\nu_1$ can be understood as the  population measures, respectively, of the empirical measures $\nu_0^n$ and $\nu_1^n$ defined in Equation \eqref{eqn:measuresn}. We will show that, under appropriate conditions, the embeddings $\phi_0^n$ and $\phi_1^n$, defined in Equation \eqref{eqn:Embeddingsn}, converge to $\phi_0$ and $\phi_1$, respectively, where convergence is with respect to the norm of $\mathcal H$. This fact together with Theorem~\ref{thm:ZndoubleIntegral} yields the following result:

\begin{theorem}\label{thm:convergenceAlternative}
Assume Conditions \ref{con:conditions} and \ref{con:ExtraConditions} hold, then
\begin{eqnarray}
Z_n\overset{\Prob}{\to}\|\phi_0-\phi_1\|^2_{\mathcal{H}}.
\end{eqnarray}
\end{theorem}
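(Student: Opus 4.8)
The plan is to deduce the result from the stronger claim that $\|\phi_c^n-\phi_c\|_{\mathcal H}\overset{\Prob}{\to}0$ for each $c\in\{0,1\}$, exactly as announced before the statement. Granting this, Lemma~\ref{lemma:LRembed} gives $Z_n=\|\phi_0^n-\phi_1^n\|_{\mathcal H}^2$, and the triangle inequality yields $\bigl|\,\|\phi_0^n-\phi_1^n\|_{\mathcal H}-\|\phi_0-\phi_1\|_{\mathcal H}\,\bigr|\le\|\phi_0^n-\phi_0\|_{\mathcal H}+\|\phi_1^n-\phi_1\|_{\mathcal H}\overset{\Prob}{\to}0$, so $Z_n\overset{\Prob}{\to}\|\phi_0-\phi_1\|_{\mathcal H}^2$ by the continuous mapping theorem. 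The first step towards the claim is a Doob--Meyer decomposition of the empirical embeddings: writing $d\widehat\Lambda_0=dN_0/Y_0$, $dN_0=dM_0+Y_0\,d\Lambda_0$ and $L=Y_0Y_1/Y$, one obtains $\phi_0^n=R_0^n+D_0^n$, where
\[
R_0^n(\cdot)=\frac{n}{n_0n_1}\int_0^{\tau_n}K(\widehat F(y-),\cdot)\,\frac{Y_1(y)}{Y(y)}\,dM_0(y),\qquad D_0^n(\cdot)=\frac{n}{n_0n_1}\int_0^{\tau_n}K(\widehat F(y-),\cdot)\,L(y)\,d\Lambda_0(y),
\]
and symmetrically $\phi_1^n=R_1^n+D_1^n$ (with $dM_1$ replacing $dM_0$). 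It then suffices to show $\|R_c^n\|_{\mathcal H}\overset{\Prob}{\to}0$ and $\|D_c^n-\phi_c\|_{\mathcal H}\overset{\Prob}{\to}0$ for $c\in\{0,1\}$.

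For the martingale part, $R_0^n$ is a Hilbert-space-valued stochastic integral against the square-integrable martingale $M_0$ with left-continuous (hence predictable) integrand, so the It\^o isometry, together with $\|K(x,\cdot)\|_{\mathcal H}^2=K(x,x)$ and $d\langle M_0\rangle=Y_0\,d\Lambda_0$, gives
\[
\E\|R_0^n\|_{\mathcal H}^2=\left(\frac{n}{n_0n_1}\right)^2\E\int_0^{\tau_n}K(\widehat F(x-),\widehat F(x-))\,\frac{Y_1(x)^2}{Y(x)^2}\,Y_0(x)\,d\Lambda_0(x);
\]
equivalently, this is the diagonal of the martingale double integral $\langle R_0^n,R_0^n\rangle_{\mathcal H}$, the off-diagonal contributions averaging to zero because the individual martingales $M^i$ are mutually orthogonal. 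Since $\frac{n}{n_0n_1}L=O(1)$ uniformly and $\frac{n}{n_0n_1}\frac{Y_1}{Y}=O(n^{-1})$ uniformly, the integrand is $O(n^{-1})$ times a quantity that, by Condition~\ref{con:conditions}(iii) and Condition~\ref{con:ExtraConditions}, is dominated by an integrable function; hence $\E\|R_0^n\|_{\mathcal H}^2=O(n^{-1})\to0$ and $\|R_0^n\|_{\mathcal H}\overset{\Prob}{\to}0$, and likewise for $R_1^n$.

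For the drift part, expand $\|D_0^n-\phi_0\|_{\mathcal H}^2=\|D_0^n\|_{\mathcal H}^2-2\langle D_0^n,\phi_0\rangle_{\mathcal H}+\|\phi_0\|_{\mathcal H}^2$ into scalar double (resp.\ single) integrals of $K$ against the finite measures $\frac{n}{n_0n_1}L\,d\Lambda_0$ and $\nu_0$. The ingredients are $\tau_n\to\tau$ a.s., $\sup_x|\widehat F(x-)-F(x)|\to0$ a.s.\ (Glivenko--Cantelli for the pooled Kaplan--Meier estimator), and $\frac{n}{n_0n_1}L(x)\to\frac{(1-H_0(x))(1-H_1(x))}{1-H(x)}$ uniformly on compact subsets of $[0,\tau)$; together with the identities $1-H_c=S_c(1-G_c)$ and $d\Lambda_0=dF_0/S_0$, which give $\frac{(1-H_0)(1-H_1)}{1-H}\,d\Lambda_0=\psi^*S_1\,dF_0=d\nu_0$, these show the measure $\frac{n}{n_0n_1}L\,d\Lambda_0$ converges to $\nu_0$. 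Feeding this and the uniform convergence of $\widehat F(\cdot-)$ into a dominated-convergence argument (domination near $\tau$ being supplied by Conditions~\ref{con:conditions} and~\ref{con:ExtraConditions}, and using continuity of $K$) yields $\|D_0^n\|_{\mathcal H}^2\overset{\Prob}{\to}\int_0^\tau\int_0^\tau K(F(x),F(y))\,d\nu_0(x)\,d\nu_0(y)=\|\phi_0\|_{\mathcal H}^2$ and $\langle D_0^n,\phi_0\rangle_{\mathcal H}\overset{\Prob}{\to}\|\phi_0\|_{\mathcal H}^2$, hence $\|D_0^n-\phi_0\|_{\mathcal H}^2\overset{\Prob}{\to}0$; the case $c=1$ is identical with $\nu_1=\psi^*S_0\,dF_1$. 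Combining the two parts, $\|\phi_c^n-\phi_c\|_{\mathcal H}\le\|R_c^n\|_{\mathcal H}+\|D_c^n-\phi_c\|_{\mathcal H}\overset{\Prob}{\to}0$, which proves the claim and hence the theorem.

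The main obstacle is the behaviour near the right endpoint $\tau$: on any $[0,t]$ with $t<\tau$ all of the above is routine (Glivenko--Cantelli together with risk sets bounded below), but $d\Lambda_0$ and $K(\widehat F(\cdot-),\widehat F(\cdot-))$ may both be unbounded near $\tau$, so the contribution of $(t,\tau]$ must be shown to be uniformly small in $n$; this is exactly the purpose of Condition~\ref{con:ExtraConditions} (the right-censored analogue of the conditions in Theorem~4.2.1 of \citet{gill1980censoring}) and of the moment bounds in Condition~\ref{con:conditions}. A secondary technical point is the reduction of the martingale double integral to its diagonal, which rests on predictability of the integrands and orthogonality of the $M^i$ --- the double-integral-with-respect-to-martingales technique highlighted in the abstract.
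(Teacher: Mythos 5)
Your proposal follows essentially the same route as the paper: the theorem is reduced to $\|\phi_c^n-\phi_c\|_{\mathcal H}\overset{\Prob}{\to}0$ (the paper's Lemma~\ref{lemma:convergenceAlteLem}) plus the triangle inequality, and your splitting $\phi_0^n=R_0^n+D_0^n$ is exactly the paper's decomposition through the intermediate embedding $\widehat\phi_0$ (your $D_0^n$ is the paper's $\widehat\phi_0$, and $R_0^n=\phi_0^n-\widehat\phi_0$). The treatment of the drift part --- uniform convergence on compacts, tail control near $\tau$ via Condition~\ref{con:ExtraConditions}, dominated convergence, and the identification $\psi^*S_0S_1\,d\Lambda_0=d\nu_0$ --- also matches the paper's steps, merely merging its two intermediate comparisons ($\widehat\phi_0$ versus $\phi_0'$ and $\phi_0'$ versus $\phi_0$) into one.

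The one place where there is a genuine gap is the martingale part: you assert $\E\|R_0^n\|_{\mathcal H}^2=O(n^{-1})$ by dominating the diagonal integrand with an integrable function supplied by Conditions~\ref{con:conditions} and~\ref{con:ExtraConditions}. But Condition~\ref{con:ExtraConditions} is a statement about \emph{probabilities} of tail integrals of $K(\widehat F(x-),\widehat F(x-))$, not about their expectations, and Condition~\ref{con:conditions} controls only $K(F(x),F(x))$; nothing in the hypotheses bounds $\E\int_0^{\tau_n}K(\widehat F(x-),\widehat F(x-))(1-G_0(x))\,dF_0(x)$, so the $L^2$/It\^o-isometry step is not justified as written. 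The paper sidesteps this by working in probability: the double martingale integral $\|\phi_0^n-\widehat\phi_0\|_{\mathcal H}^2$ is majorised by its diagonal compensator via the Lenglart--Rebolledo inequality (Theorem~\ref{thm:newGeneralTheorem}), so only $n_0^{-1}\int_0^{\tau_n}K(\widehat F(x-),\widehat F(x-))(1-G_0(x))\,dF_0(x)=o_p(1)$ is required, and this does follow from the stated conditions. Substituting that Lenglart step for your expectation computation repairs the argument; everything else goes through.
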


At this point, it should be clear that our test is consistent under the alternative hypothesis if  $\phi_0 \neq \phi_1$ whenever $\nu_0 \neq \nu_1$.  We will show that  if $F_0 \neq F_1$ then  $\nu_0\neq \nu_1$, however, this result does not immediately extrapolate to  $\phi_0$ and $\phi_1$. A sufficient condition to ensure that $\phi_0$ and $\phi_1$ are different when $F_0\neq F_1$  is that the kernel $K$ is c-universal (see definition in Section~\ref{sec:RHKSintro}).

\begin{corollary}[Consistency]\label{thm:obnibus}
Suppose that Conditions \ref{con:conditions} and \ref{con:ExtraConditions} hold, and that the kernel $K$ is c-universal. Additionally, assume that $1-G_c(x)=0$ implies $S_c(x) = 0$, for any $c \in \{0,1\}$. Then, under the alternative hypothesis, we have that $n_0n_1/nZ_n\to\infty$, which implies that 
\begin{align*}
\Pr\left(\frac{n_0 n_1}{n}Z_n > Q_n(1-\alpha)\right) \to 1,
\end{align*} 
where $Q_n(1-\alpha)$ denotes the $(1-\alpha)$-quantile of the random variable $(n_0n_1/n)Z_n$ under the null hypothesis.
\end{corollary}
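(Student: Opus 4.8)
The plan is to combine Theorem~\ref{thm:convergenceAlternative}, which gives $Z_n \overset{\Prob}{\to} \|\phi_0-\phi_1\|_{\mathcal H}^2$, with a proof that $\|\phi_0-\phi_1\|_{\mathcal H}^2>0$ whenever $F_0\neq F_1$. Once this is established the divergence $(n_0n_1/n)Z_n\overset{\Prob}{\to}\infty$ is immediate, since $n_0n_1/n\to\infty$, and the power statement follows by controlling the null quantiles $Q_n(1-\alpha)$ using Theorem~\ref{thm:FinalDistr}. So the crux is to show $\phi_0\neq\phi_1$ under the alternative, which I would split into two implications: (a) $F_0\neq F_1 \Rightarrow \nu_0\neq\nu_1$, and (b) $\nu_0\neq\nu_1\Rightarrow\phi_0\neq\phi_1$.

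For (b), write $\rho=\nu_0-\nu_1$, a finite signed measure on $\R_+$, and let $F_\#\rho$ denote its image measure under the pooled distribution function $F$. A change of variables in $\phi_0-\phi_1=\int_0^\tau K(F(y),\cdot)\,\rho(dy)$ shows that $\phi_0-\phi_1$ is exactly the kernel mean embedding of $F_\#\rho$; since $K$ is $c$-universal this embedding is injective on finite signed measures, so $\phi_0=\phi_1$ iff $F_\#\nu_0=F_\#\nu_1$. It then remains to check that pushing forward by $F$ does not collapse distinct $\nu_0,\nu_1$. This uses that $F$ is continuous and that $\nu_0,\nu_1\ll F$ (because $\nu_c\ll F_c$ and $dF\ge\eta\,dF_0$, $dF\ge(1-\eta)\,dF_1$): the open set on which $F$ is locally constant is a countable union of constancy intervals, hence $F$-null, so $\rho$ is concentrated on the set where $F$ is injective up to a countable (hence $\rho$-null) exceptional set, and therefore $F_\#\rho=0$ forces $\rho=0$.

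For (a), the extra censoring assumption is what does the work. Using $dF_c=S_c\,d\Lambda_c$ one computes $d\nu_0-d\nu_1=\psi^*S_0S_1\,(d\Lambda_0-d\Lambda_1)$. I would first argue that $F_0\neq F_1$ forces $\Lambda_0\neq\Lambda_1$ already on $[0,\tau)$: otherwise $S_0=S_1$ on $[0,\tau)$, and for $x>\tau$ the identity $1-H(x)=\eta S_0(x)(1-G_0(x))+(1-\eta)S_1(x)(1-G_1(x))=0$ together with ``$1-G_c(x)=0\Rightarrow S_c(x)=0$'' forces $S_0(x)=S_1(x)=0$, so $F_0=F_1$ everywhere, a contradiction. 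Set $t_0=\sup\{t:\Lambda_0=\Lambda_1 \text{ on } [0,t]\}<\tau$. Since $S_0=S_1$ on $[0,t_0]$ and $F_0\neq F_1$ we have $S_0(t_0)=S_1(t_0)>0$; and if $1-G_c(t_0)=0$ for some $c$, the assumption would give $S_c(t_0)=0$, a contradiction, so $1-G_0(t_0),1-G_1(t_0)>0$. By continuity, $g:=\psi^*S_0S_1$ is bounded below by a positive constant on some interval $[t_0,t_0+\varepsilon]\subseteq[0,\tau)$, while $R:=\Lambda_0-\Lambda_1$ is continuous, vanishes at $t_0$, and is not identically zero on $(t_0,t_0+\varepsilon)$ by definition of $t_0$; hence the signed measure $g\,dR=d\nu_0-d\nu_1$ cannot vanish on $(0,\tau)$, i.e.\ $\nu_0\neq\nu_1$.

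Putting things together, under the alternative $\delta:=\|\phi_0-\phi_1\|_{\mathcal H}^2>0$, so by Theorem~\ref{thm:convergenceAlternative} $Z_n\overset{\Prob}{\to}\delta$ and thus $(n_0n_1/n)Z_n\overset{\Prob}{\to}\infty$. Under the null, Theorem~\ref{thm:FinalDistr} gives $(n_0n_1/n)Z_n\overset{\mathcal D}{\to}Z$ with $Z$ almost surely finite, which makes the null quantiles bounded: fixing a continuity point $M$ of the law of $Z$ with $\Prob(Z>M)<\alpha$ yields $\Prob((n_0n_1/n)Z_n>M)\to\Prob(Z>M)<\alpha$, so $Q_n(1-\alpha)\le M$ for all large $n$. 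Combining $(n_0n_1/n)Z_n\overset{\Prob}{\to}\infty$ with $\sup_n Q_n(1-\alpha)<\infty$ gives $\Prob((n_0n_1/n)Z_n>Q_n(1-\alpha))\to1$. I expect the main obstacle to be the measure-theoretic bookkeeping in steps (a) and (b): making rigorous that the region $\{g>0\}$ genuinely overlaps the region where $\Lambda_0$ and $\Lambda_1$ differ — which is precisely where ``$1-G_c(x)=0\Rightarrow S_c(x)=0$'' is unavoidable — and that pushing forward by the possibly non-strictly-increasing $F$ preserves $\nu_0\neq\nu_1$.
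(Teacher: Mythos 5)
Your proof is correct and follows essentially the same route as the paper's: reduce to $\phi_0\neq\phi_1$ via Theorem~\ref{thm:convergenceAlternative}, push the measures forward through $F$ so that $c$-universality gives injectivity of the embedding, and use the identity $d\nu_0-d\nu_1=\psi^*S_0S_1(d\Lambda_0-d\Lambda_1)$ together with the assumption ``$1-G_c(x)=0\Rightarrow S_c(x)=0$'' to conclude $\nu_0\neq\nu_1$. The only differences are organisational: you localise at the first disagreement time of the hazards where the paper argues globally by contradiction (deducing $\lambda_0=\lambda_1$ from equality of the densities of $\nu_0,\nu_1$), and you treat the possible non-injectivity of $F$ more carefully than the paper's one-line claim that $\pi_0=\pi_1$ implies $\nu_0=\nu_1$.
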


We remark that the previous result uses the $(1-\alpha)$-quantile of $(n_0n_1/n)Z_n$ under the null hypothesis, which  converges to a finite quantity due to Theorem~\ref{thm:FinalDistr}.  

Notice that Corollary~\ref{thm:obnibus} establishes consistency for all the alternatives under the assumption of a $c$-universal kernel $K$,  nevertheless, even if the kernel $K$ is not $c$-universal, we  can still ensure consistency for particular alternatives, as we show in the following proposition.
\begin{proposition}
Assume that Conditions \ref{con:conditions} and \ref{con:ExtraConditions} hold, and suppose that $d\Lambda_1(x) = e^{\theta \bar \omega(F_0(x))}d\Lambda_0(x)$ for some $\bar \omega \in \mathcal H$ and $\theta \neq 0$. Then $(n_0n_1/n)Z_n \to \infty$, and thus the test is consistent for such an alternative.
\end{proposition}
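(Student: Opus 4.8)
The plan is to derive the divergence from Theorem~\ref{thm:convergenceAlternative} together with a one–dimensional lower bound obtained by testing the difference of embeddings against the very weight $\bar\omega$ defining the alternative. By Theorem~\ref{thm:convergenceAlternative}, under Conditions~\ref{con:conditions} and~\ref{con:ExtraConditions} we have $Z_n\overset{\Prob}{\to}\|\phi_0-\phi_1\|_{\mathcal{H}}^2$, and since $n_0/n\to\eta$, $n_1/n\to 1-\eta$ we also have $n_0n_1/n\to\infty$. Consequently, as soon as $\|\phi_0-\phi_1\|_{\mathcal{H}}>0$ is established, it follows that $(n_0n_1/n)Z_n\overset{\Prob}{\to}\infty$, and the consistency assertion $\Prob((n_0n_1/n)Z_n>Q_n(1-\alpha))\to 1$ then follows verbatim from the argument in the proof of Corollary~\ref{thm:obnibus}, using that $Q_n(1-\alpha)$ converges to the finite $(1-\alpha)$-quantile of the limit variable $Z$ of Theorem~\ref{thm:FinalDistr}. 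So the whole problem reduces to showing $\phi_0\neq\phi_1$.

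To obtain $\phi_0\neq\phi_1$ I would test $\phi_0-\phi_1$ against $\bar\omega$ itself, which is legitimate since $\bar\omega\in\mathcal{H}$. Cauchy--Schwarz in $\mathcal{H}$ gives $\|\phi_0-\phi_1\|_{\mathcal{H}}\geq|\langle\bar\omega,\phi_0-\phi_1\rangle_{\mathcal{H}}|/\|\bar\omega\|_{\mathcal{H}}$, where $\|\bar\omega\|_{\mathcal{H}}\in(0,\infty)$: finiteness because $\bar\omega\in\mathcal{H}$, and strict positivity because, by the reproducing property, $\|\bar\omega\|_{\mathcal{H}}=0$ would force $\bar\omega\equiv 0$ as a function and hence $\Lambda_1=\Lambda_0$, i.e.\ the null. (Equivalently, and this is the link with the weighted log-rank test recovered in Section~\ref{sec:weilogranktest}, Lemma~\ref{lemma:LRembed} gives $\LRT_n(\bar\omega)=\langle\bar\omega,\phi_0^n-\phi_1^n\rangle_{\mathcal{H}}$, so $Z_n\geq \LRT_n(\bar\omega)^2/\|\bar\omega\|_{\mathcal{H}}^2$, and one is reduced to showing that the weighted log-rank statistic with weight $\bar\omega$ does not have deterministic limit $0$ against the alternative $\Lambda(\cdot;\theta,\bar\omega)$.) Hence it remains to show $\langle\bar\omega,\phi_0-\phi_1\rangle_{\mathcal{H}}\neq 0$.

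Next I would compute this functional explicitly. By the reproducing property and linearity of the Bochner integral, $\langle\bar\omega,\phi_c\rangle_{\mathcal{H}}=\int_0^\tau\bar\omega(F(y))\,d\nu_c(y)$, and using $dF_c=S_c\,d\Lambda_c$ in \eqref{eqn:measuresInf} one gets $d\nu_0-d\nu_1=\psi^*\,S_0\,S_1\,(d\Lambda_0-d\Lambda_1)$; substituting the postulated alternative, $d\Lambda_0-d\Lambda_1=(1-e^{\theta\bar\omega(F_0)})\,d\Lambda_0$, this yields
\[
\langle\bar\omega,\phi_0-\phi_1\rangle_{\mathcal{H}}=\int_0^\tau \bar\omega(F(x))\,\bigl(1-e^{\theta\bar\omega(F_0(x))}\bigr)\,\psi^*(x)\,S_0(x)\,S_1(x)\,d\Lambda_0(x).
\]

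The remaining task — and the step I expect to be the real obstacle — is to argue that this integral does not vanish. The factor $\psi^*\,S_0\,S_1$ is nonnegative, and on the set where $\bar\omega(F_0(x))\neq 0$ (which has strictly positive $d\Lambda_0$-measure, since otherwise $\Lambda_1=\Lambda_0$ and we would be under the null) the factor $1-e^{\theta\bar\omega(F_0(x))}$ is nonzero with sign $-\operatorname{sign}(\theta)\operatorname{sign}(\bar\omega(F_0(x)))$. What must be controlled is the sign of the extra factor $\bar\omega(F(x))$: because the embeddings are built from the \emph{pooled} $F$ rather than from $F_0$, one has to rule out that positive and negative contributions cancel. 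This is immediate when $\bar\omega$ does not change sign on $[0,1)$, since then $\bar\omega(F(x))$ and $\bar\omega(F_0(x))$ always share a sign and the integrand is sign-definite and not $d\Lambda_0$-a.e.\ zero. In general I would exploit that $F=\eta F_0+(1-\eta)F_1$ always lies between $F_0$ and $F_1$, which are themselves tied together through $d\Lambda_1=e^{\theta\bar\omega(F_0)}\,d\Lambda_0$, to perform the sign bookkeeping and conclude non-cancellation; once the integral is shown to be nonzero we get $\langle\bar\omega,\phi_0-\phi_1\rangle_{\mathcal{H}}\neq 0$, hence $\phi_0\neq\phi_1$, and the proof is complete.
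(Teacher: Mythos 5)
Your reduction is, at bottom, the same one the paper uses: by Lemma~\ref{lemma:LRembed} and Cauchy--Schwarz, $Z_n \ge \LRT_n(\bar\omega)^2/\|\bar\omega\|_{\mathcal H}^2$, so everything hinges on showing that the single weighted log-rank statistic with the very weight $\bar\omega$ defining the alternative does not degenerate, i.e.\ that $\langle\bar\omega,\phi_0-\phi_1\rangle_{\mathcal H}\neq 0$. Your computation of that inner product as $\int_0^\tau \bar\omega(F(x))\bigl(1-e^{\theta\bar\omega(F_0(x))}\bigr)\psi^*(x)S_0(x)S_1(x)\,d\Lambda_0(x)$ is correct, and you handle the normalisation by $\|\bar\omega\|_{\mathcal H}$ more carefully than the paper does (the paper compares the supremum over the unit ball against $\LRT_n(\bar\omega)^2$ without rescaling $\bar\omega$).

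The genuine gap is the final step, and you have named it yourself: you do not prove that this integral is nonzero. The integrand has sign $-\operatorname{sign}(\theta)\,\operatorname{sign}(\bar\omega(F(x)))\,\operatorname{sign}(\bar\omega(F_0(x)))$ times a nonnegative factor, and when $\bar\omega$ changes sign the mismatch between the pooled $F$ and $F_0$ genuinely allows the integrand to change sign; ``sign bookkeeping'' using the fact that $F$ lies between $F_0$ and $F_1$ is a hope, not an argument, and for a fixed non-small $\theta$ cancellation is not excluded by anything you write. To be fair, the paper does not close this step either: its entire justification is the bare assertion that $\liminf_{n}\LRT_n(\bar\omega)^2>0$ under this alternative, invoked as a known property of weighted log-rank tests --- which is clear when $\bar\omega$ is sign-definite, or for $\theta$ small, where $1-e^{\theta\bar\omega(F_0)}=-\theta\bar\omega(F_0)+O(\theta^2)$ and $F\to F_0$ make the limit $\approx-\theta\int_0^\tau\bar\omega(F_0)^2\psi^* S_0 S_1\,d\Lambda_0\neq 0$, but is not obvious in full generality. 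So you have correctly located where the real difficulty lies, but your proposal as written does not resolve it.
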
 
Note that if $d\Lambda_1(x) = e^{\theta \bar \omega(F_0(x))}d\Lambda_0(x)$ for some $\bar \omega \in \mathcal H$ and $\theta \neq 0$, then $\liminf_{n\to \infty}\LRT_n(\bar \omega)^2>0$. In this case, 
\begin{align*}
\liminf_{n \to \infty} Z_n = \liminf_{n \to \infty}\sup_{\omega \in \mathcal H, \|\omega\|_{\mathcal{H}} \leq 1} \LRT_n(\omega)^2 \geq \liminf_{n \to \infty}\LRT_n(\bar \omega)^2>0.
\end{align*}
Therefore $(n_0n_1/n)Z_n \to \infty$, and thus the test is consistent for such a particular alternative. In general, any argument that ensures consistency of $\LRT_n(\omega)^2$ for some $\bar \omega$ in $\mathcal H$ also applies to our test statistic $Z_n$, as long as the limit distribution of $(n_0n_1/n)Z_n$ exists under the null hypothesis.

\section{Wild Bootstrap Implementation}\label{sec:Wild}

Given $\alpha \in (0,1)$, we construct a statistical test of level $\alpha$ by rejecting the null hypothesis if $(n_0 n_1/n)Z_n > Q_n(1-\alpha)$, where $Q_n(1-\alpha)$ is the $(1-\alpha)$-quantile of the distribution of $(n_0 n_1/n)Z_n$ under the null hypothesis. As the distribution of $(n_0 n_1/n)Z_n$ is usually unknown, we can use its asymptotic distribution, given in Theorem \ref{thm:FinalDistr}, to approximate the rejection region. Unfortunately, excluding exceptional cases, the limiting distribution is rather complex, and thus computing the asymptotic $(1-\alpha)$- quantile is very hard. In order to carry out our test, we introduce a simple Wild Bootstrap implementation of our testing procedure.

Recall, from Equation~\eqref{eqn:ZnComputationImplementation}, that our test statistic $Z_n$ can be written as follows:
\begin{align}
Z_n&=\left(\frac{n}{n_0n_1}\right)^2\sum_{i=1}^n \sum_{j=1}^n K(\widehat F(X_i-), \widehat F(X_j-)) L(X_i)L(X_j) (-1)^{c_i+c_j}\frac{\Delta_i}{Y_{c_i}(X_i)}\frac{\Delta_j}{Y_{c_j}(X_j)}.
\end{align}
Notice that the expression given for $Z_n$ looks like a V-statistic. However, what prevents $Z_n$ from being a V-statistic is that it depends on the functions $\widehat{F}(x-)$, $Y_{0}(x)/n_0$ and $Y_1(x)/n_1$, and $n/(n_0n_1)L(x)$, which are all random functions that depend on all data points. In Appendix~\ref{sec:nicerRep} and Appendix \ref{sec:equivalence}, we prove that we can replace these functions by their respective limits plus some small additive error term that vanishes sufficiently fast as the number of observations grows. From this result, we deduce that $Z_n$ can be approximated by a V-statistic, suggesting that the standard  Wild Bootstrap sampling scheme of \citet{dehling94random} can be used  to approximate the asymptotic distribution of  $n/(n_0n_1) Z_n$.  

Let $\mathcal W = (W_i)_{i\in[n]}$ be a sequence of i.i.d. random variables satisfying that $\E(W_i) = 0$ and $\E( W_i^2) = 1$, and such that they are independent of any other source of randomness, in particular, from the observations. Then, we define the Wild Bootstrap statistic associated with $Z_n$ by
\begin{align}
Z_n^{\mathcal W}&=\left(\frac{n}{n_0n_1}\right)^2\sum_{i=1}^n\sum_{j=1}^n W_i W_j K(\widehat F(X_i-), \widehat F(X_j-)) L(X_i)L(X_j)(-1)^{c_i+c_j} \frac{\Delta_i}{Y_{c_i}(X_i)}\frac{\Delta_j}{Y_{c_j}(X_j)}.\label{eqn:wildBoostrapDefi}
\end{align}
Similarly to the expression given in Equation~\eqref{eqn:Z_nMatrix}, we can write  $Z_n^{\mathcal W}$ as
\begin{align}\label{eqn:Z_nWMatrix}
Z_n^{\mathcal W}= \left(\frac{n}{n_0n_1}\right)^2(\mathbf{W}{\mathbf{V}})^{\intercal} \widehat{ \mathbf K}({\mathbf{W}\mathbf{V}}),
\end{align}
where $\mathbf{W}\in\R^n\times\R^n$ is the diagonal matrix whose entries are given by $\mathbf{W_{jj}} = W_j$. Also, recall that $\mathbf{V} \in \R^n$ is defined by $\mathbf{V}_j = L(X_j)(-1)^{c_j}\Delta_j/Y_{c_j}(X_j)$  and  $\widehat{\mathbf{K}}\in \R^n\times \R^n$ is defined by $\widehat{\mathbf{K}}_{ij} = K(\widehat F(X_i),\widehat F(X_j))$. This matrix expression is very convenient for the computational implementation of our method.

The following theorem states the correctness of our Wild Bootstrap approach:

\begin{theorem}\label{thm:wildbootstrap}
Assume  that Conditions~\ref{con:conditions} and \ref{con:ExtraConditions} are satisfied. Then, under the null hypothesis, it holds that for any $x \in \R_+$,
\begin{align*}
\Prob\left(\frac{n_0n_1}{n}Z_n^{\mathcal W} \geq x \bigg| (X_i, \Delta_i, c_i)_{i\in[n]} \right) \overset{\Prob}{\to}  \Prob( Z \geq x),
\end{align*}
where $Z$ is the random variable defined in Theorem~\ref{thm:FinalDistr}.
\end{theorem}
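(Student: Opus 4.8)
The plan is to reduce the Wild Bootstrap statistic $(n_0n_1/n)Z_n^{\mathcal W}$ to a \emph{multiplier-bootstrapped degenerate V-statistic} in the i.i.d.\ triples $Z_i=(T_i,\Delta_i,c_i)$ and then invoke the conditional bootstrap limit theory for such statistics developed by \citet{dehling94random}. Throughout we write $\mathcal D_n=(X_i,\Delta_i,c_i)_{i\in[n]}$ and work in the random-covariates model, which by Appendix~\ref{sec:equivalence} is asymptotically equivalent to the deterministic one. Recall from the proof of Theorem~\ref{thm:FinalDistr} (Appendices~\ref{sec:nicerRep} and~\ref{sec:equivalence}) that, under the null, replacing the data-dependent functions $\widehat F(x-)$, $Y_0(x)/n_0$, $Y_1(x)/n_1$ and $(n/(n_0n_1))L(x)$ by their deterministic limits turns $(n_0n_1/n)Z_n$ into a genuine V-statistic $n^{-1}\sum_{i,j}h(Z_i,Z_j)$ up to a remainder that is $o_p(1)$, where $h$ is a fixed symmetric kernel, square-integrable by Condition~\ref{con:conditions} and degenerate off the diagonal, with $\E(h(Z,Z))=\int_0^\tau K(F_0(x),F_0(x))\psi(x)\,dF_0(x)$ and integral-operator eigenvalues giving rise to the $\tfrac1{\eta(1-\eta)}Y$ term of Theorem~\ref{thm:FinalDistr}. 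The whole point of the present proof is that this reduction survives the insertion of the weights $\mathcal W$.

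Performing the same substitutions inside $Z_n^{\mathcal W}$ yields $(n_0n_1/n)Z_n^{\mathcal W}=n^{-1}\sum_{i,j}W_iW_j h(Z_i,Z_j)+R_n^{\mathcal W}$, with $R_n^{\mathcal W}=n^{-1}\sum_{i,j}W_iW_j r_{ij}$ and $(r_{ij})$ exactly the data-dependent, weight-free array of error terms already handled in the un-bootstrapped argument. Conditionally on $\mathcal D_n$, the independence of $\mathcal W$ from the data together with $\E W_i=0$ and $\E W_i^2=1$ gives $\E(R_n^{\mathcal W}\mid\mathcal D_n)=n^{-1}\sum_i r_{ii}$ and $\Var(R_n^{\mathcal W}\mid\mathcal D_n)\le C\,n^{-2}\sum_{i,j}r_{ij}^2$ for a finite constant $C$ (a mild moment condition on the multiplier weights, met by all standard choices, is used here); hence $R_n^{\mathcal W}\overset{\Prob}{\to}0$ conditionally as soon as $n^{-1}\sum_i|r_{ii}|\overset{\Prob}{\to}0$ and $n^{-2}\sum_{i,j}r_{ij}^2\overset{\Prob}{\to}0$. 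These are the natural $\ell^2$-in-the-indices strengthenings of the bound already obtained for $n^{-1}\sum_{i,j}r_{ij}$, and they follow from the same martingale and empirical-process estimates, since each error array is itself a kernel-weighted quadratic form in data-dependent vectors. Checking that the remainder controls of the appendices are quantitatively strong enough in this quadratic sense — not only after summation over the indices — is the step I expect to be the main technical obstacle.

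For the leading term, split $n^{-1}\sum_{i,j}W_iW_j h(Z_i,Z_j)=n^{-1}\sum_i W_i^2 h(Z_i,Z_i)+n^{-1}\sum_{i\neq j}W_iW_j h(Z_i,Z_j)$. The diagonal piece has conditional mean $n^{-1}\sum_i h(Z_i,Z_i)$, which tends almost surely to $\E(h(Z,Z))=\int_0^\tau K(F_0(x),F_0(x))\psi(x)\,dF_0(x)$ by the strong law, and conditional variance $o_p(1)$, so it converges conditionally in probability to the deterministic shift of Theorem~\ref{thm:FinalDistr}. Since $h$ is degenerate with $\E(h(Z_1,Z_2)^2)<\infty$, \citet{dehling94random} applies to the off-diagonal piece and gives, conditionally on $\mathcal D_n$ and in probability, $n^{-1}\sum_{i\neq j}W_iW_j h(Z_i,Z_j)\overset{\mathcal D}{\to}\tfrac1{\eta(1-\eta)}Y$, the same limit — with the same eigenvalues $\lambda_k$ — as for the un-bootstrapped degenerate U-statistic in Theorem~\ref{thm:FinalDistr}: in the spectral representation $h(z,z')=\sum_k\lambda_k e_k(z)e_k(z')$, the bootstrap turns coordinate $k$ into $n^{-1}(\sum_i W_i e_k(Z_i))^2-n^{-1}\sum_i W_i^2 e_k(Z_i)^2\overset{\mathcal D}{\to}\xi_k^2-1$. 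Combining the three contributions with a conditional Slutsky argument, and using that conditional weak convergence in probability is equivalent to: along every subsequence there is a further subsequence along which the conditional laws converge weakly almost surely, we obtain $\Prob\big((n_0n_1/n)Z_n^{\mathcal W}\ge x\mid\mathcal D_n\big)\overset{\Prob}{\to}\Prob(Z\ge x)$ for every $x$, the limiting law of $Z$ being continuous.
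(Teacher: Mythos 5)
Your proposal follows essentially the same route as the paper: reduce $(n_0n_1/n)Z_n^{\mathcal W}$ to the multiplier-weighted degenerate V-statistic $n^{-1}\sum_{i,j}W_iW_jJ(s_i,s_j)$ via the same substitutions used for Theorem~\ref{thm:FinalDistr}, and then invoke Theorem~3.1 of \citet{dehling94random}. If anything, you are more careful than the paper, which simply asserts that the remainder is handled ``by following exactly the same steps'' without addressing the conditional (quadratic-in-the-indices) control of the error array that you correctly flag as the delicate point.
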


With the previous ingredients we are ready to describe our testing procedure, which relies on approximating the $(1-\alpha)$-quantile of the distribution of $Z_n$ by the quantiles of $Z_n^{\mathcal W}$. Since we can freely sample independent copies  from $Z_n^{\mathcal W}$ given the data points, we can estimate the quantile by Monte Carlo simulations. Our algorithm is as follows:

\begin{enumerate}[i)]
\item Set level $\alpha\in (0,1)$ of the test and let $N$ be a large integer,
\item  Sample $N$ independent copies of the Wild Bootstrap statistic using Equation~\eqref{eqn:Z_nWMatrix},
\item  Compute the $(1-\alpha)$-quantile of the previous sample and call it $Q_n^{\mathcal{W}}(1-\alpha)$,
\item  Compute the test statistic $Z_n$ using Equation~\eqref{eqn:Z_nMatrix},
\item  Reject the null-hypothesis if $Z_n>Q_n^{\mathcal{W}}(1-\alpha)$, otherwise do not reject it.
\end{enumerate}

\section{Simulations}\label{sec:Simu}
We perform an empirical evaluation of our methods in which the ground truth is known. To this end, we consider two different settings: a \emph{proportional hazard functions} setting (in which the classic log-rank test is provably the most powerful), and a \emph{time-dependent hazard functions} setting, including Weibull and periodic hazard functions.  All our experiments consider the same null cumulative hazard function $\Lambda_0(t)=t$, corresponding to the cumulative hazard function of an exponential random variable with mean 1. We choose $\Lambda_1(t)$ belonging to one of the following parametric families:
\begin{enumerate}[i)]
\item \textbf{Proportional hazards}: for this case we consider $\Lambda_1(t)$ belonging to the parametric family given by $\Lambda(t;\theta) = \theta t$ with $\theta\in[0,2]$. Observe that $\Lambda_0(t)$ is recovered when $\theta = 1$.
\item \textbf{Weibull (polynomial) hazards}: we consider $\Lambda_1(t)$ belonging to a family of Weibull cumulative hazard functions $\Lambda(t;\theta) =  t^{\theta}$ with $\theta \in [0,2]$. Notice that $\theta = 1$ recovers the null. 
\item \textbf{Periodic hazards}: we consider $\Lambda_1(t)$ belonging to the family  given by  $\Lambda(t;\theta) = t-\sin(\pi \theta t)/(\pi\theta)$ with $\theta\in(0,15]$. Notice that  $\lim_{\theta\to\infty}\Lambda(t;\theta)=\Lambda_0(t)=t$, recovering the null hypothesis.
\end{enumerate}
Figure \ref{Figure:Chf} shows the behaviour of the cumulative hazard functions of the parametric families previously described, for different values of $\theta$.

\begin{figure}
\begin{center}
\includegraphics[width=0.95\linewidth]{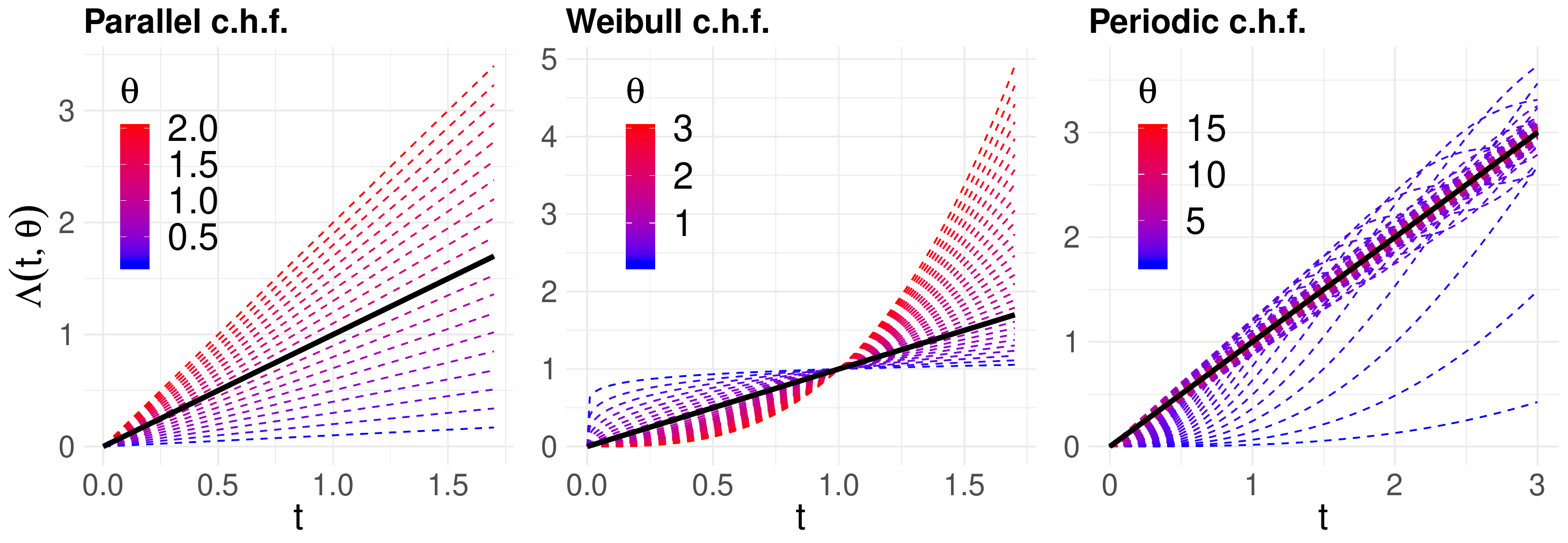}
\end{center}
\caption{Set of cumulative hazard functions for the proportional, Weibull and periodic experiments. Solid black line denotes the null cumulative hazard $\Lambda_0(t)=t$. Observe that each family recovers the null cumulative hazard function $\Lambda_0(t)$ for some specific value $\theta_0$. }\label{Figure:Chf}
\end{figure}

\subsection{Implementation}

\subsubsection{Kernels}\label{Sec:Kernels}
The heart of our testing-approach is undoubtedly the kernel function. In Section~\ref{sec:Recovering} we showed that specific choices of the kernel function lead to some existing tests.  In our experiments we use some of these choices as well as a kernel that  is $c$-universal. We define kernels in the following categories:
\begin{enumerate}
\item \textbf{Log-rank kernels (LRP and LRC)}: In Section~\ref{sec:weilogranktest}, we considered kernels of the form $K(x,y)=\omega(x)\omega(y)$, recovering the well-known weighted log-rank tests. In our experiment we choose $\omega$  to be equal to i) $\omega_1(x) = 1$ and ii)  $\omega_2(x) = (x-1/2)$. For i) we recover the classical log-rank test (LRP), used to test proportional hazard functions, while for ii) we recover a weighted log-rank test (LRC), designed to discover if the hazard functions cross around the median of the distribution $F$. 

\item \textbf{Projection kernels (P2W and P4W)}: we follow the approach described in Section~\ref{sec:projectiontest}, which recovers the testing procedure of \citet{Brendel2013weighted}. In particular, we choose kernel functions based on the subspace generated by the weight functions i) $\{1,x\}$ and ii) $\{1,x,x^2, x^3\}$. We denote the kernels in i) and ii) by P2W and P4W, respectively, making a clear reference to the dimension of the subspace.

\item \textbf{Pearson-type kernels (Per4 and Per5)}: we consider the Pearson-type kernel functions defined in Section~\ref{sec:pearsontest}, particularly in Equation~\eqref{eqn:pearsonTest}. For our experiments, we partition the space into 4 and 5 disjoint regions, and denote the kernels by Per4 and Per5, respectively.

\item \textbf{Squared exponential kernel (SEK)}: we consider the squared exponential kernel (SEK) defined by $K(x,y) = \exp\{-(x-y)^2/\sigma^2\}$ with $\sigma=0.1$. This kernel is $c$-universal, hence by Corollary \ref{thm:obnibus}, it leads to an omnibus test. Better results may be obtained by optimising the parameter $\sigma$ for the problem at hand. A well-known heuristic is to choose $\sigma$ as the median of the pairwise differences of the observations \citet{Scholkopf2001}. \end{enumerate}.

\subsubsection{Computer Implementation}
Our experiments are implemented in \texttt{R} following the Wild Bootstrap approach described in Section \ref{sec:Wild}, choosing Rademacher random variables for the Wild Bootstrap weights $\mathcal{W}$. We compute our test statistic $Z_n$ and the Wild Bootstrap statistic $Z_n^{\mathcal{W}}$ by using the quadratic form expressions given in Equations~\eqref{eqn:Z_nMatrix} and~\eqref{eqn:Z_nWMatrix}, respectively. By using these expressions, it is possible to give a simple and fast implementation of our testing procedure, indeed, 1000 repetitions of our testing procedure (using 1000 Wild Bootstrap samples) takes just a couple of minutes in a standard commercial laptop, for $n_0=n_1 = 100$. For each experiment, we consider sample sizes of $30$ and $100$ observations per group, and we choose a censoring distribution generating $10\%$ and $30\%$ of censored observations.

\subsection{Type-I error}
In our first experiment, we verify that our tests achieve a correct Type-I error of $\alpha=0.05$ for each combination of sample size and censoring percentage,  for  the kernels described in Section~\ref{Sec:Kernels}. For each different combination of parameters, we run our test in 1000 simulated datasets. Table \ref{table:significance5} shows the results. In general, the Wild Bootstrap approach has no problem reaching the correct level for even censoring percentages. Uneven censoring causes a few problems, so the user should be careful when applying this method in the latter setting. Arguably, the squared exponential  kernel (SEK) is the most robust. It is worth recalling that projection tests use a random kernel (it depends on the data points) which might impact its performance. Additionally, while not reported here, we tested other significance levels, obtaining similar results.

\begin{table}
\centering
\begin{tabular}{llll|rr|rr|rr|r|}

\cline{5-11}
                           &                            &                          &       & \multicolumn{2}{c|}{Log-Rank}                        & \multicolumn{2}{c|}{Projection}                          & \multicolumn{2}{c|}{Pearson-type}                    & \multicolumn{1}{c|}{SEK}      \\ \hline
\multicolumn{1}{|l}{$n_0$} & \multicolumn{1}{l|}{$n_1$} & $\text{cen}_0$                    & $\text{cen}_1$ & \multicolumn{1}{l}{LRP} & \multicolumn{1}{l|}{LRC} & \multicolumn{1}{l}{P2W} & \multicolumn{1}{l|}{P4W} & \multicolumn{1}{l}{Per4} & \multicolumn{1}{l|}{Per5} & \multicolumn{1}{l|}{SEK} \\ \hline
\multicolumn{1}{|l}{30}    & \multicolumn{1}{l|}{30}    & 10\%                     &10\%  &4.5                      &5                         & 5.1                       &4.7                          & \textbf{6.1}                      &\textbf{5.7}                       &5                             \\
\multicolumn{1}{|l}{}      & \multicolumn{1}{l|}{}      & 10\%                     &30\%  &5.4                      &4.4                       &4.5                       &4.1                          & 5                        &5.1                       &4.2                           \\
\multicolumn{1}{|l}{}      & \multicolumn{1}{l|}{}      & 30\%                     &30\%  &4.3                      & 4.7                       &4.2                       & 4.4                          & 4.7                      & 4.4                       & 4.2                           \\ \hline
\multicolumn{1}{|l}{30}    & \multicolumn{1}{l|}{100}   & 10\%                     & 10\%  & 3.9                      &\textbf{ 5.7}                      & 5                         & 5.1                          & 5.2                      & 4.9                       & 5.2                           \\
\multicolumn{1}{|l}{}      & \multicolumn{1}{l|}{}      & 10\%                     & 30\%  & \textbf{6.7}                      & 4.6                       & \textbf{7.1} & 5.1                          & 5.2                      & 4.5                       & 4.7                           \\
\multicolumn{1}{|l}{}      & \multicolumn{1}{l|}{}      & 30\%                     & 10\%  &5                          &\textbf{7.3}                        &\textbf{6.1}                           &4.9                              &\textbf{5.9} &\textbf{5.6}                         &5                               \\
\multicolumn{1}{|l}{}      & \multicolumn{1}{l|}{}      & 30\%                     & 30\%  & 3.9                      & 5.1                       & 3.8                       & 3.9                          & 4.4                      & 4.8                       & 3.8                           \\ \hline
\multicolumn{1}{|l}{100}   & \multicolumn{1}{l|}{100}   & 10\%                     & 10\%  & 4.7                      & 4.2                       & 5.2                       & \textbf{5.9}                          & 4.9                      & 4.7                       & 4.3                           \\
\multicolumn{1}{|l}{}      & \multicolumn{1}{l|}{}      & 10\%                     & 30\%  & 5                        & \textbf{5.9}                      & 5.4                       & \textbf{6.6}                         & \textbf{6.1} & \textbf{5.8} &\textbf{6.1}                          \\
\multicolumn{1}{|l}{}      & \multicolumn{1}{l|}{}      & \multicolumn{1}{r}{30\%} & 30\%  & 5.1                      & 5.3                       & 5.5                       & 4.7                          & 5.1                      & 4.8                       & 4.8                           \\ \hline
\end{tabular}
\caption{\label{table:significance5}Significance Level at $\alpha = 5\%$. $n_c$ stands for the number of data points of group $c$, and $\text{cen}_c$ indicates the percentage of censored data. }
\end{table}

\subsection{Power Simulations}
We provide an empirical evaluation of our testing procedure for each of the settings described in Section~\ref{sec:Simu}: proportional, Weibull and periodic hazard functions.  As previously mentioned, the null cumulative hazard function is given by $\Lambda_0(t)=t$. The power is estimated by repeating our testing procedure over 1000 simulated datasets for each combination of sample size and censoring percentage. Results for the proportional, Weibull and periodic settings are shown in Figures \ref{fig:prophazads}, \ref{fig:polyhazards}, and \ref{fig:periodichazads}, respectively.
We give a few comments and remarks about our experiments.
\begin{enumerate}
\item We report a very small fraction of our experiments as all the results are qualitatively the same as the ones shown in Figures \ref{fig:prophazads}, \ref{fig:polyhazards}, and \ref{fig:periodichazads}.

\item We only report results for the  kernel Per5 since the kernel Per4 has an almost identical behaviour.

\item The LRP test is equivalent to the score test for the proportional  hazard functions model; thus, it can be deduced that it is the most powerful test for local alternatives under this model. The previous statement is supported by Figure \ref{fig:prophazads}, where we observe an excellent performance of the LRP test in the setting of proportional hazard functions alternatives. Observe that the LRP test loses all of its power (nearly zero power) for the Weibull and periodic hazard alternatives as shown in Figures \ref{fig:polyhazards}, and \ref{fig:periodichazads}.

\item The LRC test is equivalent to the weighted log-rank test with weight $w(x) = x-1/2$. It can be observed in Figure \ref{fig:polyhazards}, that the LRC test has a very good performance in the setting of Weibull hazard alternatives, but its power is relatively low  in the other two settings, as shown in Figures \ref{fig:prophazads} and \ref{fig:periodichazads}. An explanation for this behaviour follows from the fact that the LRC kernel is designed to be optimal at detecting a cross around the median of the pooled distribution $F$, and thus, it will not be a good fit for the proportional hazards experiment. Also, for more complex hazard functions, such as those described in the periodic setting, we can observe more than one cross occurring, which explains the poor performance of the LRC test.

\item We  observe different behaviours for the projection kernels P2W and P4W. In the proportional hazard functions setting, the P2W test has the best performance after the LRP test, which is explained by the fact that this kernel is constructed considering the subspace generated by $\{1,x\}$, where the weight $\omega(x)=1$ is known to be optimal for proportional hazard alternatives. While P4W also includes this weight (it is generated by $\{1,x,x^2,x^3\}$), the fact that it considers a larger space of possible alternatives  makes the test more data-expensive resulting in a loss of power. For the Weibull hazard functions setting, we observe that both projection tests, P2W and P4W, have an overall good performance. This behaviour can be explained by the fact  that both tests consider projections on polynomials, and the Weibull hazard functions are, indeed, polynomials. In the periodic case, both kernels have a substandard behaviour due to the fact that the hazard structure is rather different to a polynomial of finite degree. Note that with more data it seems that the tests do not improve when compared to the best kernel, in this case, the squared exponential kernel SEK.

\item The Pearson-type kernel Per5 has a consistent behaviour, being neither too good nor too bad. Disadvantages are that the user needs to specify beforehand a partition of the space.

\item The squared exponential kernel, SEK, gives overall good results. In one hand, while in the presence of more `structured' data, i.e., proportional or Weibull (polynomial) hazard functions, simpler kernels give better results, the SEK still has a good performance. On the other hand, in presence of  more `complex' data, its performance is better than other kernels. In general, the SEK performs better than Pearson-type kernels, suggesting that the SEK is a better choice for an all-around kernel. 

\item In general, it seems that for nice structured data, simpler kernels (leading to simpler methods)  have better results. On the other hand, for complex data, a more complex kernel seems to be a good option. Overall, we think that the SEK is the best option as it has a robust behaviour in simple data-scenarios, and it outperforms other tests in more complex scenarios. Also, as shown in Table \ref{table:significance5}, this kernel has close to no problems reaching the correct Type-I error.
\end{enumerate}

\begin{figure}
   \includegraphics[width=\textwidth]{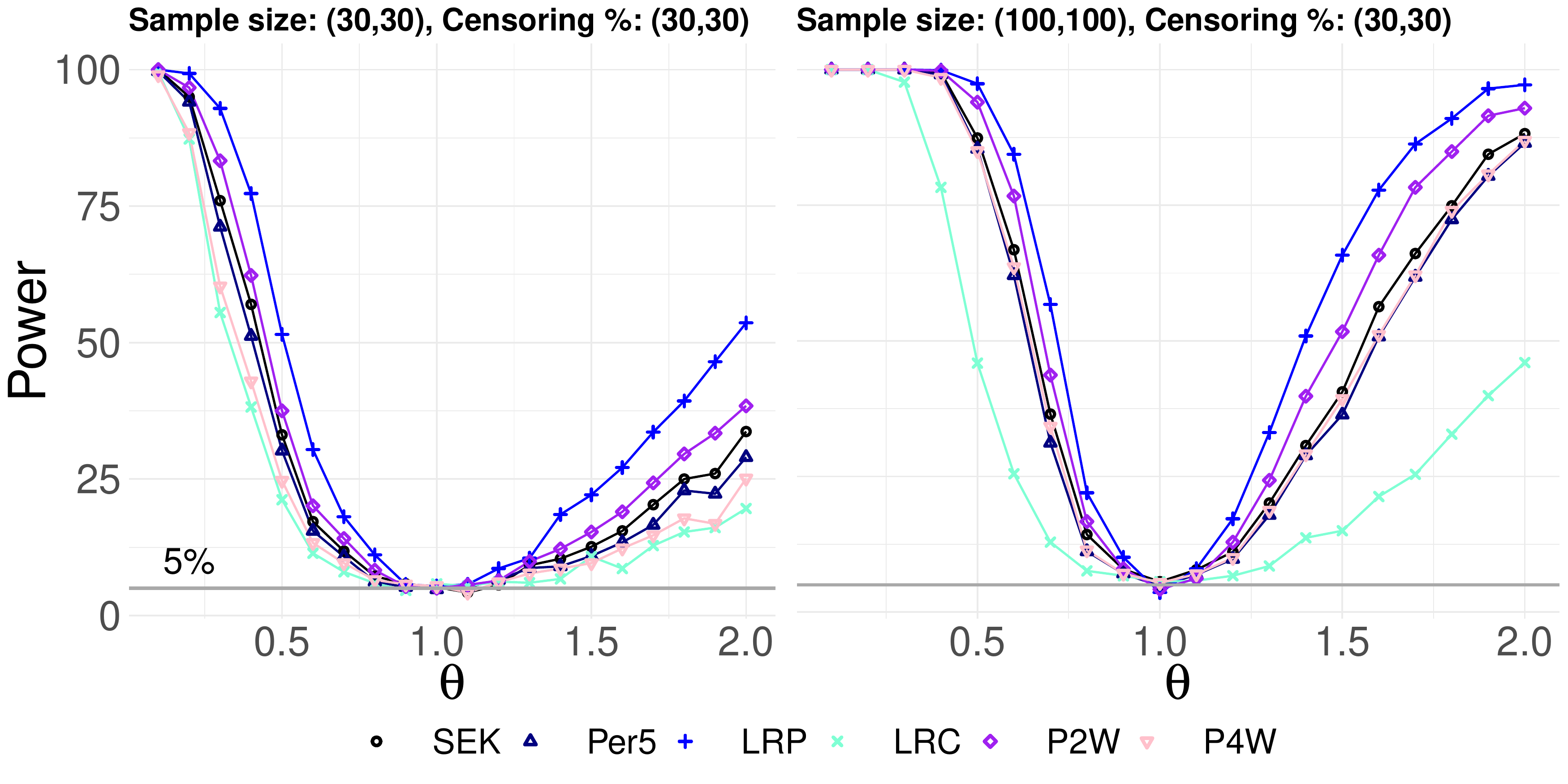}
  \caption{Proportional hazards experiment. Left: group sizes 30 and censoring percentage $30\%$. Right group sizes 100 and censoring percentage $30\%$.}
  \label{fig:prophazads}
\end{figure}

\begin{figure}
   \includegraphics[width=\textwidth]{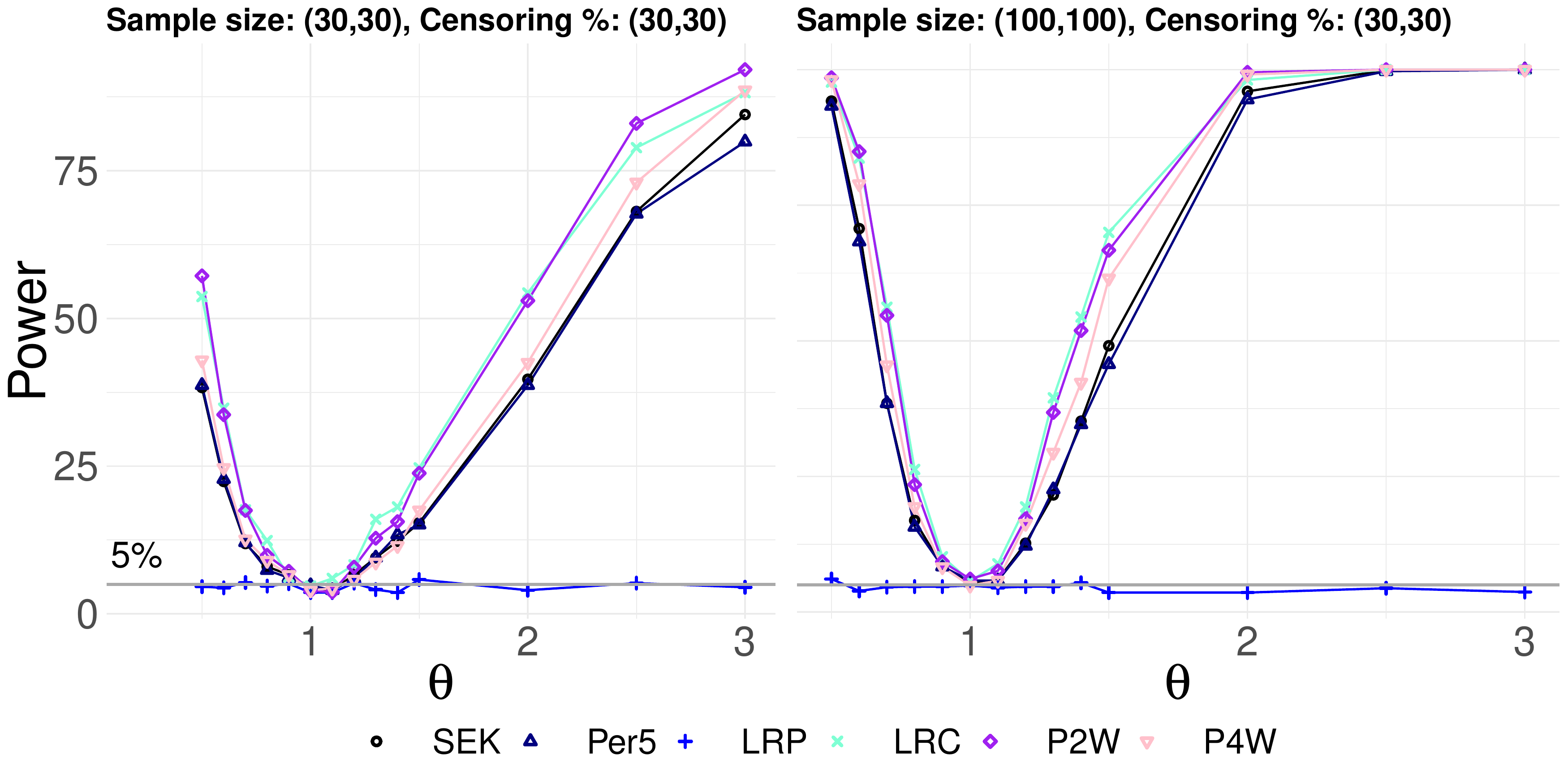}
  \caption{Polynomial (Weibull) hazards experiment. Left: group sizes 30 and censoring percentage $30\%$. Right group sizes 100 and censoring percentage $30\%$.}
	\label{fig:polyhazards}
\end{figure}

\begin{figure}
   \includegraphics[width=\textwidth]{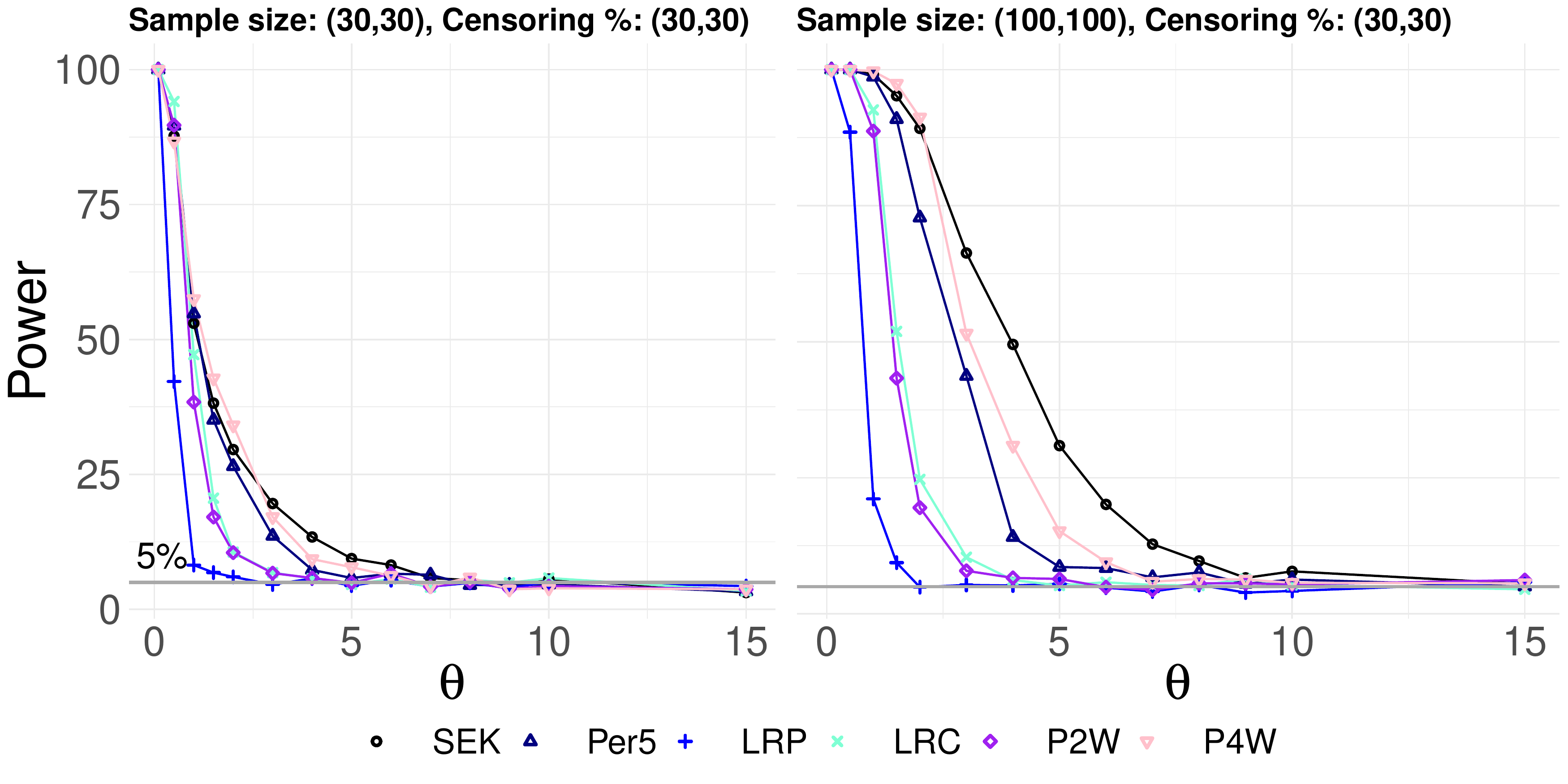}
  \caption{Periodic hazards experiment. Left: group sizes 30 and censoring percentage $30\%$. Right group sizes 100 and censoring percentage $30\%$.}
  \label{fig:periodichazads}
\end{figure}

\section{Real data}\label{sec:Real}
We consider the Gastrointestinal Tumor Study Group data (GTSG), \citet{stablein1981analysis}, available in the \texttt{`coin' R-package}. The data considers a randomised clinical trial in the treatment of locally advanced, non-resectable gastric carcinoma. In this study, 42 patients were treated by using chemotherapy alone, while 45 patients were treated by using a combination of chemotherapy and radiation therapy. The aim of the study is to detect differences between these treatments. Kaplan-Meier curves for each group are shown in Figure \ref{Figure:Real}. The null hypothesis is that there is no difference between the treatments. We apply our test considering the 7 different kernels described in Section \ref{Sec:Kernels}. The corresponding p-values (approximated by using our Wild Bootstrap approach) are shown in Table \ref{Table:Real data}.

\begin{figure}
\begin{center}
\includegraphics[width=1\linewidth]{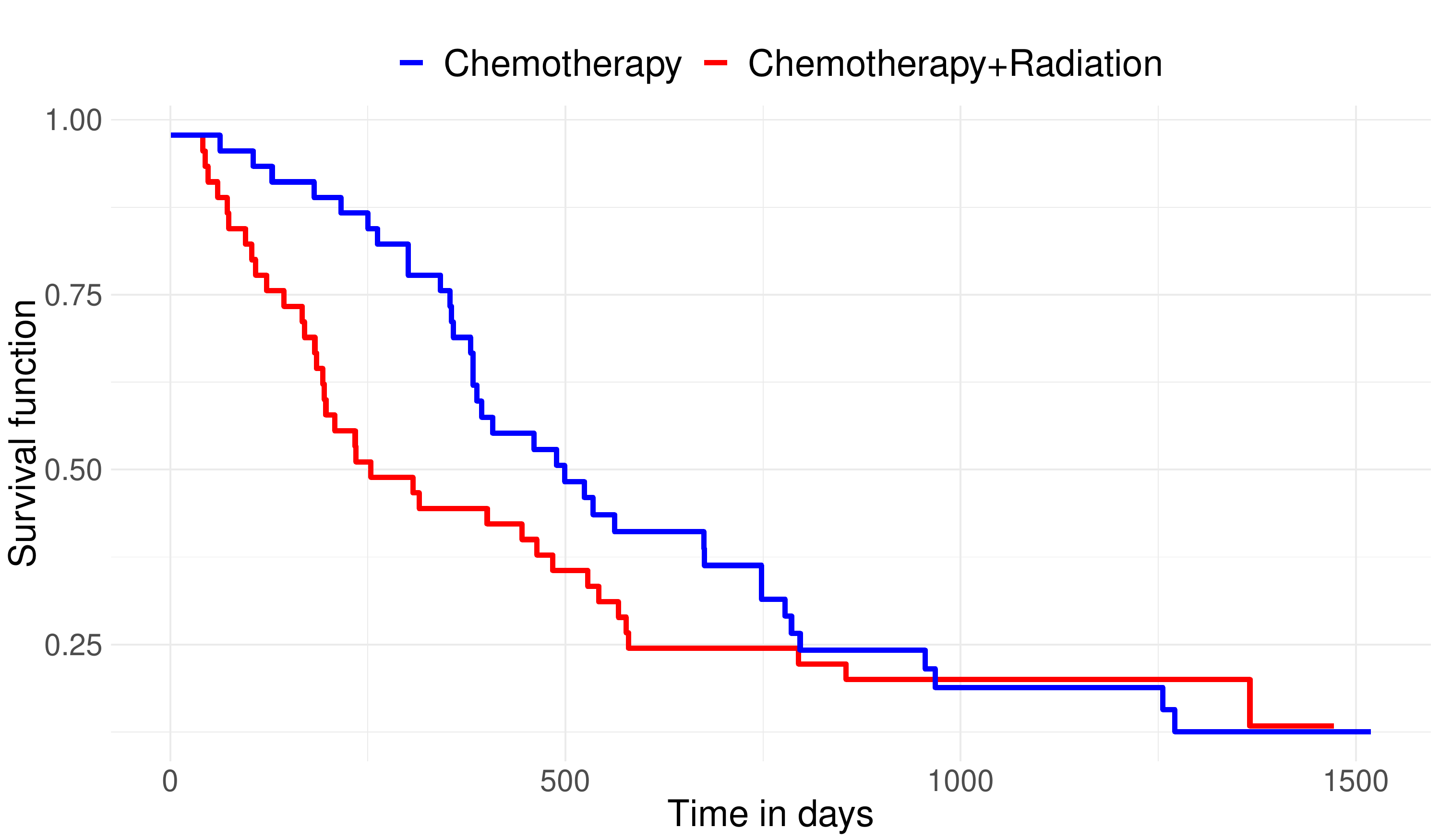}
\end{center}
\caption{Kaplan-Meier curves for the patients treated by using chemotherapy alone versus a combination of chemotherapy and radiation therapy  }\label{Figure:Real}
\end{figure}
% latex table generated in R 3.4.4 by xtable 1.8-3 package
% Wed Jan 30 16:24:04 2019

\begin{table}

\begin{center}
\begin{tabular}{rrrrrrrrr}
  \hline
  &SEK & Per4&Per5 & LRP & LRC&P2W&P4W \\ 
  \hline
\text{p-values} & 0.0053 &0.0151& 0.0222& 0.2531 & 0.0011 &0.0051&0.0228\\ 
   \hline
\end{tabular}
\caption{\label{Table:Real data}$p$-values obtained (using the Wild Bootstrap approach) for the comparison of the treatments `Chemotherapy' and `Chemotherapy+Radiation' for the GTSG data.}
\end{center}
\end{table}

All the tests, except for the classical log-rank test, reject the null hypothesis at  $5\%$ level of significance. This outcome is quite reasonable as the survival functions cross, as shown in Figure \ref{Figure:Real}. Indeed, the smallest $p$-values are given by the SEK, LRC and the P2W  tests which reject the null hypothesis at  $1\%$ level of significance. This is not a surprising behaviour of LRC and P2W, as these  kernels are tailored to detect crossings. The SEK  also performs very well which is quite satisfying as this kernel is not particularly designed for the setting of crossing hazards.

\section{Conclusion}
We have introduced an RKHS-based testing procedure for the standard problem of two-sample hypothesis testing in the framework of right censored data. Our test statistic is the supremum of weighted log-rank statistics, where the weights belong to the unit ball of an RHKS. While our test statistic is apparently very complex, its evaluation becomes analytically tractable due to the reproducing property of reproducing kernel Hilbert spaces. Indeed, our test statistic can be written as a quadratic form as in Equation~\eqref{eqn:Z_nMatrix}, and it can be fully characterised by a kernel function $K$. This simple structure allows us to derive asymptotic properties of our test, and it suggests that the standard Wild bootstrap approach can be used to approximate the rejection region.
We also showed that our test can be seen as a natural infinitely-dimensional generalisation of other well-known test statistics based on log-rank statistics. Finally, we performed a simulation study which compares the results of the test statistic for different kernel functions.

To finish the paper we discuss some of our results and potential research ideas. First, as shown in Theorem~\eqref{thm:FinalDistr}, the asymptotic distribution of our test statistic under the null hypothesis is, in general,  intractable, and thus we need to rely on bootstrapping techniques to approximate the rejection region. While, for us, the natural option is to consider a Wild Bootstrap approach, it is also possible to consider a permutation approach as the one used for weighted log-rank tests \citep{Neuhaus1993Conditional}. In the case of equal censoring, the permutation approach has the advantage of being finitely exact, however, in the general case of censoring, the permutation approach is not directly applicable since the limit distribution  of our test statistic depends on the censoring distribution of each group. For weighted log-rank tests, this problem is fixed by standardising the test statistic leading to an asymptotically distribution-free test. Unfortunately, it is not clear how to standardise (if possible) our test statistic, and we think this is a non-trivial task, especially for infinitely-dimensional kernels, hence, we leave this problem as future research. 

A natural question to ask is: How do we choose a kernel function? Unfortunately, nobody can provide an answer for such a question yet and choosing kernel functions is an active field of research in Statistics and Machine Learning, as it is an issue that happens in several other contexts such as Gaussian Processes inference,  support vector machine, kernel regression, etc. In practice, our simulation study suggests that simple (finite-dimensional) kernels, e.g., polynomials, are good enough if we know in advance that the hazard function has a relatively simple form, whereas complex kernels are better suited for complex hazard functions. In general, the squared exponential kernel is a very safe choice as it performs well in both settings. Nevertheless, we should not expect it to perform well in every setting, as \citet{janssen2000global} proved that for a finite number of data points,  any nonparametric test has preferences for a finite-dimensional subspace of alternatives, and outside this subspace, the power of the test is almost flat in balls of alternatives. This means that, for a fixed number of data points, it is possible to find alternatives for which the squared exponential kernel has fails, however, due to the complex form of the kernel, it may be difficult to construct such an alternative, and we do not expect that to happen in practical settings. Note that, by choosing a reproducing kernel, Theorem~\ref{thm:FinalDistr} shows that our test fixes most of its power on a finite number of directions, which are given by the eigenfunctions with larger eigenvalues in the spectral decomposition of the kernel, and thus, with a finite number of  data points we expect the test to concentrate its power around the first $m$  eigenfunctions with larger eigenvalues, where $m$ is some constant depending on $n$, and as long as $n$ grows, $m$ should be growing as well.

Finally, we give a few comments about our technical results. Our asymptotic analysis uses the fact that our test statistic is a double stochastic integral (Equation~\eqref{eqn:ExplicitZ}), and particularly, under the null hypothesis, those stochastic integrals are with respect to martingales. Our main tool to study these objects is Theorem~\ref{thm:newGeneralTheorem}, which allows us to control double stochastic integrals with respect to a special class of random integrands $h$, which includes our integrand $K(\widehat F(x-),\widehat F(y-))L(x)L(y)$, as well as several others. Our theorem is powerful enough to study other type of collection of log-rank statistics, for example, we can consider the test statistic 
\begin{align*}
    \sup_{\omega \in \mathcal H, \|\omega\|_{\mathcal H}^2\leq 1} \int_0^{\tau_n} \omega(Y(x)/n)L(x)(d\widehat \Lambda_0(x)-d\widehat \Lambda_1(x)),
\end{align*}
which is similar to our test statistic (c.f., Equations~\eqref{eqn:supStat} and~\eqref{eqn:log-rank}), but replacing $\widehat S(x-)$ by $Y(x)/n$ (See  \cite[Chapter 7]{Flemming91}). In general, replacing $\widehat S(x-)$ by any reasonable predictable process gives a test statistic that can be analysed by our methods, and furthermore, our tool can be applicable to an even larger class of processes. The analysis of other type of test statistics is subject of future research.

\section*{Correspondence}
Nicolas Rivera.\\
Address: Computer Laboratory, University of Cambridge.
William Gates Building, 15 JJ Thomson Ave, Cambridge CB3 0FD, UK.\\
e-mail: nr454 at cam dot ac dot uk

\bibliographystyle{plainnat}
\bibliography{ref}
\newpage

\section*{Appendix}
\appendix

\section{Preliminary Results}
\subsection{Some Results for Counting Processes}
\begin{proposition}\label{prop:supConverSH}
The following results hold:
\begin{enumerate}[i)]\item $\lim_{n\to \infty} \sup_{t\leq \tau} |\widehat S(t)-S(t)| = 0 $ a.s.,\label{prop:supConverSH1}
\item $\lim_{n \to \infty}\sup_{t \leq \tau} |Y(t)/n-H(t-)| = 0$  a.s., \label{prop:supConverSH2}
\end{enumerate}
\end{proposition}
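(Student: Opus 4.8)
The plan is to deduce both statements from classical Glivenko--Cantelli theory, working throughout in the random covariates model, where the triples $(X_i,\Delta_i,c_i)$ are i.i.d.; the deterministic covariates model then follows from the asymptotic equivalence of Appendix~\ref{sec:equivalence} (or by running the same argument within each group and combining via $n_0/n\to\eta$).

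For part (ii), first I would observe that $Y(t)/n = n^{-1}\sum_{i=1}^{n}\ind_{\{X_i\ge t\}} = 1-\widehat H(t-)$, where $\widehat H$ is the empirical distribution function of the i.i.d.\ sample $X_1,\dots,X_n$ with common distribution function $H$, so that $Y(\cdot)/n$ is the empirical version of the survival function $\Prob(X\ge\cdot)=1-H(\cdot-)$. Since for left limits $|\widehat H(t-)-H(t-)| = \lim_{s\uparrow t}|\widehat H(s)-H(s)| \le \sup_{s}|\widehat H(s)-H(s)|$, the uniform convergence of $Y(t)/n$ to $1-H(t-)$ over $t\le\tau$ is immediate from the Glivenko--Cantelli theorem.

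For part (i), which is the uniform strong consistency of the pooled Kaplan--Meier estimator on $[0,\tau]$, the central tool is the Duhamel (product-integral) identity
\[
\widehat S(t)-S(t) = -\int_{(0,t]}\widehat S(s-)\,\frac{S(t)}{S(s)}\,\bigl(d\widehat\Lambda(s)-d\Lambda(s)\bigr),
\]
valid wherever $S>0$, where $\Lambda$ is the probability limit of the pooled Nelson--Aalen estimator $\widehat\Lambda$ (under the null hypothesis $\Lambda=-\log S$). I would then argue in two regimes. On a compact interval $[0,\tau-\varepsilon]$: the maps $s\mapsto\widehat S(s-)$ and $s\mapsto S(t)/S(s)$ are monotone on $(0,t]$, bounded by $1$, and of total variation at most $1$, so integration by parts bounds the right-hand side by $C\sup_{s\le\tau-\varepsilon}|\widehat\Lambda(s)-\Lambda(s)|$ for a universal constant $C$; and $\sup_{s\le\tau-\varepsilon}|\widehat\Lambda(s)-\Lambda(s)|\to 0$ a.s., since $\widehat\Lambda(s)=\int_0^s dN(u)/Y(u)$, the process $N(s)/n$ converges uniformly a.s.\ to the sub-distribution function of the uncensored times by a Glivenko--Cantelli argument, $Y(s)/n$ converges uniformly a.s.\ by part (ii), and $Y(s)/n$ is bounded away from $0$ on $[0,\tau-\varepsilon]$. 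On the left-neighbourhood $(\tau-\varepsilon,\tau]$: I would bound $\sup_{\tau-\varepsilon<t\le\tau}|\widehat S(t)-S(t)|$ by $S(\tau-\varepsilon)+\sup_{\tau-\varepsilon<t\le\tau}\widehat S(t)$ up to a vanishing error and then let $\varepsilon\downarrow 0$ after $n\to\infty$; alternatively, one may simply invoke the classical whole-line strong uniform consistency of the Kaplan--Meier estimator (Gill 1983; Stute and Wang 1993).

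The step I expect to be the main obstacle is the behaviour near $\tau$: there $S(s)$ and $Y(s)/n$ both tend to $0$, so although the factor $S(t)/S(s)$ remains bounded, the representation $\widehat\Lambda=\int dN/Y$ becomes delicate and one must rule out probability mass escaping into the tail --- this is exactly where a tail condition in the spirit of Condition~\ref{con:ExtraConditions}, or the cited whole-line consistency results, is required.
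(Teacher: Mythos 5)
The paper offers no argument here at all: item (i) is attributed to Stute and Wang (1993) and item (ii) to the Glivenko--Cantelli theorem, so your fallback citations are exactly the paper's ``proof''. Your treatment of part (ii) is correct and complete: $Y(t)/n$ is the empirical survival function at $t-$, and left limits are controlled by the uniform Glivenko--Cantelli bound.

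For part (i), your self-contained sketch is fine on compacts $[0,\tau-\varepsilon]$ (Duhamel identity plus uniform consistency of the Nelson--Aalen estimator where $Y/n$ is bounded away from zero), but the tail step as written does not close. Bounding $\sup_{\tau-\varepsilon<t\le\tau}|\widehat S(t)-S(t)|$ by $S(\tau-\varepsilon)+\sup_{\tau-\varepsilon<t\le\tau}\widehat S(t)$ and sending $\varepsilon\downarrow 0$ only yields zero when $S(\tau-)=0$; since $\tau=\sup\{x:(1-F(x))(1-G(x))>0\}$, censoring may exhaust the observation window first, so $S(\tau-)>0$ is possible and your bound then gives $2S(\tau-)>0$. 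The correct elementary route in that regime uses monotonicity of both $\widehat S$ and $S$ to reduce the tail supremum to $S(\tau-\varepsilon)-S(\tau-)$ plus the errors at the two endpoints, and the genuinely hard ingredient is that $\widehat S$ does not lose mass as $t\uparrow\tau$ --- which is precisely the content of the Stute--Wang/Gill results. Since you explicitly flag this obstacle and defer to those references, your proposal lands where the paper does; just be aware that the $\varepsilon$-argument you propose as a substitute is not valid in general.
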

Item \ref{prop:supConverSH1} is due to \citet{stute93TheStrong}, and item \ref{prop:supConverSH2} is the Glivenko-Cantelli theorem.

\begin{proposition}\label{prop:ProbBounds}
Let $\beta \in (0,1)$, then
\begin{enumerate}[i)]
\item $\Prob(\widehat S(t) \leq \beta^{-1}S(t),\quad \forall t \leq \tau_n) \geq 1-\beta,$
\label{prop:ProbBound1}
\item $\Prob\left(Y(t)/n \leq \beta^{-1}(1-H(t-)),\quad\forall t\leq\tau_n\right)\geq1-\beta,$ and \label{prop:ProbBound2}
\item $\Prob\left(Y(t)/n \geq \beta(1-H(t-)),\quad\forall t\leq\tau_n\right)\geq1-e(1/\beta) e^{-1/\beta}.$ \label{prop:ProbBound3} 
\end{enumerate}
i.e., i) $\sup_{t\leq \tau_n} \widehat S(t)/S(t) = O_p(1)$, ii) and  iii) $\sup_{t\leq \tau_n}Y(t)/(n(1-H(t-)) = \Theta_p(1)$.
\end{proposition}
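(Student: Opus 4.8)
\emph{Proof plan.} The plan is to treat the three inequalities separately; in each case the assertion reduces to a maximal inequality for an appropriate non-negative (super)martingale, and the concluding $O_p$ and $\Theta_p$ statements then follow at once by sending $\beta\downarrow0$ in i)--iii). The only probabilistic input I would use is the elementary maximal inequality: if $(Z_u)$ is a non-negative supermartingale with $Z_0=1$, then $\Prob(\sup_u Z_u\ge\lambda)\le 1/\lambda$ for every $\lambda>0$.

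For i), I would start from the product-integral representations $\widehat S(t)=\prod_{s\le t}(1-d\widehat\Lambda(s))$ and $S(t)=\prod_{s\le t}(1-d\Lambda(s))$ and invoke Duhamel's equation, which expresses $\widehat S(t)/S(t)-1$ as the stochastic integral $-\int_{(0,t]}\frac{\widehat S(s-)}{S(s)Y(s)}\,dM(s)$, where $M$ is a counting-process martingale and the integrand is left-continuous, hence predictable (on $[0,\tau_n]$ all quantities are well defined since $Y>0$ there). Consequently $t\mapsto\widehat S(t\wedge\tau_n)/S(t\wedge\tau_n)$ is a non-negative local martingale, hence a supermartingale with initial value $1$; as it is constant for $t\ge\tau_n$, the quantity $\sup_{t\le\tau_n}\widehat S(t)/S(t)$ is its global supremum, and the maximal inequality with $\lambda=\beta^{-1}$ gives i).

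For ii) and iii) I would pass to the random-covariates model, so that $X_1,\dots,X_n$ are i.i.d.\ with continuous c.d.f.\ $H$, apply the probability integral transform $U_i=H(X_i)$, and write $\widehat U_n$ for the empirical c.d.f.\ of $U_1,\dots,U_n$; then $Y(t)/n=1-\widehat U_n(H(t)-)$ and $1-H(t-)=1-H(t)$. A short conditional-expectation computation (given the configuration of points at levels $\le u$, the remaining points are uniform on $(u,1]$) shows that $u\mapsto(1-\widehat U_n(u))/(1-u)$ is a non-negative martingale on $[0,1)$ with value $1$ at $u=0$. For ii), Doob's maximal inequality at level $\beta^{-1}$ bounds the probability that its supremum over all $u\in[0,1)$ (a fortiori over $u\le U_{(n)}$) is at least $\beta^{-1}$ by $\beta$; translating back, this is exactly ii), uniformly over $t\le\tau_n$.

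Item iii) is the genuinely delicate one: it asks for a uniform \emph{lower} bound on $Y(t)/n$, i.e.\ for control of the \emph{infimum} over $u\le U_{(n)}$ of the above martingale, which eventually hits $0$. Writing $V_i=1-U_i$ with order statistics $V_{(1)}<\dots<V_{(n)}$, the complementary event becomes the boundary-crossing event $\{\,V_{(j)}>(j-1)/(\beta n)\text{ for some }j\in\{2,\dots,n\}\,\}$, i.e.\ the uniform empirical process falling below a line of slope $\beta<1$. The plan is to bound this probability by a classical exponential/boundary-crossing inequality for the uniform empirical (equivalently, quantile) process, of Daniels--Robbins--Pyke type as found in Shorack and Wellner, whose extremal constant is precisely $e\,\beta^{-1}e^{-1/\beta}$; an alternative route uses the spacings representation $V_{(j)}=S_j/S_{n+1}$ with $S_j$ a cumulative sum of i.i.d.\ exponentials, from which one can build an exponential supermartingale indexed by $j$ and apply optional stopping at the first crossing. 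Securing the sharp constant here is where the main work lies --- a crude union bound over $j$ is far too weak, diverging as $\beta\uparrow1$ --- whereas i) and ii) are essentially automatic once the correct (super)martingale has been identified. Finally, combining ii) and iii) with $\beta$ taken small gives $\sup_{t\le\tau_n}Y(t)/(n(1-H(t-)))=\Theta_p(1)$, and i) gives $\sup_{t\le\tau_n}\widehat S(t)/S(t)=O_p(1)$.
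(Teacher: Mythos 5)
Your proposal is correct in substance, but note that the paper does not actually prove this proposition: it only cites \citet{gill1983large} for items i) and iii) and \citet[Theorem 3.2.1]{gill1980censoring} for item ii). What you have written is essentially a reconstruction of the arguments underlying those citations, so for i) and ii) your write-up is \emph{more} self-contained than the paper's. For i), the Duhamel identity $\widehat S(t)/S(t)-1=-\int_0^t \widehat S(s-)S(s)^{-1}Y(s)^{-1}dM(s)$ does make the stopped ratio a non-negative local martingale, hence a supermartingale started at $1$, and the maximal inequality gives the bound $\beta$; this is exactly Gill's argument. One caveat, which applies equally to the paper's citation: this requires the pooled sample to be a genuine i.i.d.\ random-censorship sample so that $\widehat\Lambda-\Lambda$ is a martingale integral, which holds under $H_0$ (where $T_i\perp C_i$ marginally) and, via Remark~\ref{remark:limits}, always within each class, but not for the pooled data under general alternatives with unequal censoring. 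For ii), your observation that $u\mapsto (1-\widehat U_n(u))/(1-u)$ is a non-negative forward martingale with initial value $1$ is correct (the conditional-uniformity computation gives $\E[N_v/(1-v)\mid\mathcal F_u]=N_u/(1-u)$), and Doob's inequality yields the $1-\beta$ bound; this is the standard Daniels-type proof. For iii) you have correctly isolated where the real work lies and correctly reformulated the complementary event as the boundary-crossing event $\{V_{(j)}>(j-1)/(\beta n)\ \text{for some}\ j\}$, but you do not derive the constant $e\beta^{-1}e^{-1/\beta}$ yourself; you defer to the Wellner/Shorack--Wellner inequality, which is precisely what Gill (1983) does, so on this item you are exactly as citation-dependent as the paper. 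No gap relative to the paper's own treatment.
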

The proofs of items \ref{prop:ProbBound1} and \ref{prop:ProbBound3} are due \citet{gill1983large}, and  item \ref{prop:ProbBound2} follows from \citet[Theorem 3.2.1]{gill1980censoring}.

\begin{remark}\label{remark:limits}
Notice that the results of the previous propositions still hold if we replace $Y(t)$, $H(t)$ and $n$  by $Y_c(t), H_c(t)$ and $n_c$, respectively.
\end{remark}

\subsection{Double Martingales}\label{sec:specialMartingales} 
In our proofs, we will frequently encounter double martingale integral processes of the form:
\begin{align}
\int_{0}^t\int_0^t h_n(x,y)dW_n(x)dW_n(y), \quad t\geq 0,\label{eqn:DMprocss}    
\end{align}
where $h_n:\R^2\to \R$ is a sequence of symmetric positive-definite random functions, and $(W_n(t))_{t\geq0}$ is a sequence of $(\mathcal{F}_t)$-martingales. The aim of this section is to establish conditions under which such processes converge to zero in probability when evaluated at $t=\tau_n$. This result is formally stated in Theorem~\ref{thm:newGeneralTheorem}, and to prove it, we use some results introduced by \citet{FerRiv2020} for double integrals with respect to martingales.

\begin{definition}
\label{def:predicDsigmaAlgebra}
Define the $\sigma$-algebra  $\mathcal{P}$ on $C=\{(x,y): 0< x <y<\infty\}$ as the $\sigma$-algebra generated by sets of the form
\begin{align*}
(a_1,b_1]\times(a_2,b_2]\times X, \text{ where }0\leq  a_1\leq b_1 < a_2 \leq b_2,\text{ and } X\in \mathcal{F}_{a_2},
\end{align*}
and $\{0\}\times\{0\}\times X$, where $X\in\mathcal{F}_0$.
\end{definition}

 A stochastic process $(h(x,y))_{(x,y)\in C}$ is said to be $\mathcal P$-measurable if it is measurable with respect to $\mathcal P$. 
 
 The following proposition is a simple consequence of the definition of $\mathcal P$.
 \begin{proposition}\label{coll:h predictable}
 Let $k:\R^2 \to \R$ be a measurable function, and let $(h_1(t))_{t\geq 0}$ and $(h_2(t))_{t\geq 0}$ be $(\mathcal F_t)$-predictable stochastic processes. Then the process $(h(x,y))_{(x,y)\in C}$ given by $h(x,y) = k(h_1(x), h_2(y)))$ is $\mathcal P$-measurable.
\end{proposition}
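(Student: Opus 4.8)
The plan is to unwind both definitions and check membership in the $\sigma$-algebra $\mathcal P$ directly, reducing everything to the generating sets listed in Definition~\ref{def:predicDsigmaAlgebra}. Recall that a $(\mathcal F_t)$-predictable process $h_i$ is, by definition, measurable with respect to the predictable $\sigma$-algebra on $\R_+\times\Omega$ generated by sets of the form $(a,b]\times X$ with $X\in\mathcal F_a$ (plus $\{0\}\times X$ with $X\in\mathcal F_0$). So first I would observe that it suffices to treat the case where $h_1$ and $h_2$ are themselves indicators of such generating sets, i.e. $h_1(x,\omega)=\Ind_{(a_1,b_1]}(x)\Ind_{X_1}(\omega)$ and $h_2(y,\omega)=\Ind_{(a_2,b_2]}(y)\Ind_{X_2}(\omega)$ with $X_1\in\mathcal F_{a_1}$, $X_2\in\mathcal F_{a_2}$; a monotone-class / Dynkin argument then extends to general predictable $h_1,h_2$, and composition with a measurable $k:\R^2\to\R$ is handled by noting that $(x,y,\omega)\mapsto (h_1(x,\omega),h_2(y,\omega))$ is measurable into $\R^2$, so $k$ of it is measurable into $\R$ — the only real content is that the $\sigma$-algebra into which the pair map is measurable is $\mathcal P$, not just the product $\sigma$-algebra.

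Second, I would make the key geometric reduction: $\mathcal P$ lives on the open wedge $C=\{0<x<y<\infty\}$, so when we intersect with $C$ only the ordered part of the support matters. For the indicator building blocks above, the set $\{(x,y,\omega)\in C : h(x,y,\omega)\ne 0\}$ equals $\big((a_1,b_1]\times(a_2,b_2]\times (X_1\cap X_2)\big)\cap C$. I would split according to the relative position of the intervals $(a_1,b_1]$ and $(a_2,b_2]$. If $b_1\le a_2$, then on $C$ we automatically have $x<y$ throughout, so the set is exactly a generator of $\mathcal P$ provided $X_1\cap X_2\in\mathcal F_{a_2}$ — which holds since $X_1\in\mathcal F_{a_1}\subseteq\mathcal F_{a_2}$ (as $a_1\le b_1\le a_2$) and $X_2\in\mathcal F_{a_2}$. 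If the intervals overlap, I would further subdivide $(a_2,b_2]$ and, on the overlapping portion, intersect with $\{x<y\}$; the region $\{(x,y): a<x<y\le b\}$ can be written as a countable union of "staircase" rectangles $(a+(k-1)\delta, a+k\delta]\times(a+k\delta, b]$ over a dyadic refinement, each of which is a $\mathcal P$-generator (the left interval ends at or before the left endpoint of the right interval, and the relevant $X$ sits in the appropriate $\mathcal F$). Taking the limit over finer dyadic grids recovers the triangular region as a countable union of generators, hence a $\mathcal P$-measurable set.

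Third, having shown that the building-block processes are $\mathcal P$-measurable, I would invoke the standard functional monotone class theorem: the collection of pairs $(h_1,h_2)$ of bounded predictable processes for which $(x,y,\omega)\mapsto k(h_1(x,\omega),h_2(y,\omega))\Ind_C$ is $\mathcal P$-measurable is closed under linear combinations and bounded monotone limits and contains the indicators, hence contains all bounded predictable $h_1,h_2$; truncation then removes the boundedness assumption. For general measurable $k$, I would first do $k=\Ind_{A\times B}$ (then $k(h_1,h_2)=\Ind_{h_1\in A}\Ind_{h_2\in B}$, a product of a function of $h_1$ alone and a function of $h_2$ alone, each handled as above), then extend to indicators of general Borel sets in $\R^2$ by a monotone class argument on $k$, and finally to arbitrary measurable $k$ by the usual simple-function approximation.

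I expect the main obstacle to be the overlapping-intervals case in step two: one has to be careful that when the "time" coordinates $x$ and $y$ can be close together, the conditioning event attached to a small rectangle $(a,b]\times(c,d]$ with $b\le c$ really does belong to $\mathcal F_c$ (this is where the hypothesis $X\in\mathcal F_{a_2}$ in the generators, together with the nesting of the filtration, is used), and that the dyadic staircase decomposition of the triangle genuinely converges — both monotonically in a suitable sense and to the right limit set — so that no mass on the diagonal $\{x=y\}$ is left over. Since $C$ is the \emph{open} wedge, the diagonal is excluded, which is exactly what makes the staircase approximation exhaust the triangle; I would make sure to flag this point explicitly rather than gloss over it.
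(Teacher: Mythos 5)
Your argument is correct, and it supplies a proof where the paper gives none: the paper simply asserts Proposition~\ref{coll:h predictable} as ``a simple consequence of the definition of $\mathcal P$,'' so there is no authorial proof to compare against. The two points that carry all the content are exactly the ones you isolate: (i) the staircase decomposition of the ordered region $\{a<x<y\le b\}$ into countably many rectangles lying strictly below the diagonal, which works precisely because $C$ is the open wedge and hence the diagonal need not be covered; and (ii) the filtration nesting $\mathcal F_{a_1}\subseteq\mathcal F_{a_2}$, which places the $\Omega$-section $X_1\cap X_2$ in the correct $\mathcal F_{a_2}$ for each staircase rectangle. The monotone-class scaffolding (first over $h_1$ with $h_2$ a generator indicator fixed, then over $h_2$, then over $k$ via rectangles $A\times B$ and simple-function approximation, using that $\Ind_A\circ h_1$ is again predictable) is routine and closes the argument. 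One small imprecision: Definition~\ref{def:predicDsigmaAlgebra} requires the \emph{strict} inequality $b_1<a_2$, so your staircase pieces $(a+(k-1)\delta,a+k\delta]\times(a+k\delta,b]\times X$ are not literally generators; but writing $(a+k\delta,b]=\bigcup_m(a+k\delta+1/m,b]$ exhibits each piece as a countable union of genuine generators (the $\Omega$-section lies in $\mathcal F_{a+k\delta}\subseteq\mathcal F_{a+k\delta+1/m}$), so the conclusion that each piece is $\mathcal P$-measurable stands. With that one-line patch the proof is complete.
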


\begin{theorem}\label{thm:basicDoubleMartingale}
Let $h$ be a $\mathcal P$-measurable process, and let $W$ be a right-continuous $(\mathcal F_t)$-martingale. Assume that for all $t \geq 0$, it holds that
\begin{align}
\E \left(\int_{(0,t]} \int_{(0,y)} |h(x,y)||dW(x)||dW(y)|\right)< \infty. \label{eqn:condDMart}
\end{align}

Then, $(Z(t))_{t\geq 0}$, defined by $Z(t)=\int_0^t\int_{(0,y)}h(x,y)dW(x)dW(y)$, is an $(\mathcal{F}_t)$-martingale.
\end{theorem}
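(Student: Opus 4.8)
The plan is to exploit the way $\mathcal P$ is designed: for a $\mathcal P$-measurable integrand $h$, the inner integral $G(y):=\int_{(0,y)}h(x,y)\,dW(x)$, regarded as a process in $y$, is $(\mathcal F_t)$-predictable, so $Z(t)=\int_0^t G(y)\,dW(y)$ becomes the stochastic integral of a predictable process against the martingale $W$, and hypothesis~\eqref{eqn:condDMart} supplies exactly the integrability needed to conclude $Z$ is a bona fide martingale. I would set this up in the standard three stages --- generators, simple processes, monotone-class passage --- and finish with a truncation argument.

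First I would check the statement by hand for $h$ equal to the indicator of a generator of $\mathcal P$, say $h(x,y)=\Ind_{(a_1,b_1]}(x)\,\Ind_{(a_2,b_2]}(y)\,\Ind_X$ with $0\le a_1\le b_1<a_2\le b_2$ and $X\in\mathcal F_{a_2}$ (the degenerate generator $\{0\}\times\{0\}\times X$ contributes nothing, since both $\int_0^t$ and $\int_{(0,y)}$ omit the point $0$). Whenever $\Ind_{(a_2,b_2]}(y)\neq 0$ we have $b_1<a_2\le y$, so the inner integral is the constant $W(b_1)-W(a_1)$, which is $\mathcal F_{a_2}$-measurable; hence $G(y)=\Ind_X\bigl(W(b_1)-W(a_1)\bigr)\Ind_{(a_2,b_2]}(y)$ is an elementary predictable process, and $Z(t)=\Ind_X\bigl(W(b_1)-W(a_1)\bigr)\bigl(W(t\wedge b_2)-W(t\wedge a_2)\bigr)$ is an $(\mathcal F_t)$-martingale by the martingale-transform property, its integrability being part of the hypothesis. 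By linearity the conclusion extends to all simple $\mathcal P$-measurable $h$.

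Next I would invoke the functional monotone class theorem to reach every bounded $\mathcal P$-measurable $h$: the family of bounded $\mathcal P$-measurable processes for which $Z$ is a martingale is a vector space, contains the constants, contains the indicators of the generators --- a $\pi$-system generating $\mathcal P$ --- and is stable under bounded monotone limits, because $\E\,|Z_k(t)-Z(t)|\le\E\int_0^t\int_{(0,y)}|h_k-h|(x,y)\,|dW(x)|\,|dW(y)|\to 0$ by dominated convergence and an $L^1$-limit of martingales is a martingale. Finally, for an arbitrary $\mathcal P$-measurable $h$ obeying~\eqref{eqn:condDMart} I would apply the bounded case to the truncations $h_M=(h\wedge M)\vee(-M)$ and let $M\to\infty$: since $|h-h_M|\le 2|h|$ and $|h-h_M|\to 0$ pointwise, \eqref{eqn:condDMart} and dominated convergence yield $Z_M(t)\to Z(t)$ in $L^1$, and again the $L^1$-limit of martingales is a martingale.

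The step I expect to be the main obstacle is the monotone-class / truncation passage, specifically justifying the $L^1$-convergences $Z_k(t)\to Z(t)$ and $Z_M(t)\to Z(t)$: this rests on reading $\int_0^t\int_{(0,y)}|h|\,|dW(x)|\,|dW(y)|$ as a genuine finite iterated Lebesgue--Stieltjes integral and on interchanging limits with both integrations, which is harmless here because in this paper $W$ is a counting-process martingale with locally finite variation, so $|dW|$ is an honest finite measure on compacts. The supporting calculus for double integrals against martingales --- the precise definition of the iterated stochastic integral and the Fubini-type manipulations it obeys --- is the one developed by \citet{FerRiv2020}, which I would cite rather than redevelop.
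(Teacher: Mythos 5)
The paper does not actually prove this statement: it is imported from \citet{FerRiv2020}, so there is no internal proof to compare against. Judged on its own terms, your architecture (generators, linearity, monotone class, truncation) is reasonable, and your generator computation and your final truncation step are correct. But there is a genuine gap in the monotone-class closure step. To show that the class of bounded $\mathcal P$-measurable $h$ for which $Z$ is a martingale is stable under bounded monotone limits, you bound $\E|Z_k(t)-Z(t)|$ by $\E\int_{(0,t]}\int_{(0,y)}|h_k-h|\,|dW(x)|\,|dW(y)|$ and invoke dominated convergence. For uniformly bounded $h_k\uparrow h$ the only available dominating function is a constant, so this requires $\E\int_{(0,t]}\int_{(0,y)}|dW(x)|\,|dW(y)|<\infty$, i.e.\ roughly $\E\bigl[\bigl(\operatorname{Var}_{(0,t]}W\bigr)^2\bigr]<\infty$. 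That is \emph{not} among the hypotheses: condition~\eqref{eqn:condDMart} controls only the specific integrand $|h|$, and a bounded function need not be dominated by $|h|$. (The extra moment condition does hold for the compensated counting-process martingales used in this paper, so your argument proves enough for the paper's applications, but it does not prove the theorem as stated.)

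The fix is essentially the plan you announce in your first paragraph and then abandon: run the monotone-class argument at the level of \emph{measurability}, not of the martingale property. Show that $G(y)=\int_{(0,y)}h(x,y)\,dW(x)$ is an $(\mathcal F_y)$-predictable process --- for generators it is elementary predictable as you computed, and the class of bounded $h$ for which $G$ is predictable is closed under bounded monotone limits because the inner integrals converge \emph{pathwise} by bounded convergence with respect to the a.s.\ finite measure $|dW|$ on $(0,y)$, and pointwise limits of predictable processes are predictable; no expectations, hence no extra moment assumptions, enter. Unbounded $h$ is then handled by splitting into $h^+-h^-$ and taking monotone pathwise limits, well defined a.s.\ by~\eqref{eqn:condDMart}. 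Once $G$ is known to be predictable, $Z(t)=\int_0^tG(y)\,dW(y)$ is a martingale by the standard single-integral result for predictable integrands against finite-variation martingales, whose integrability hypothesis $\E\int_0^t|G(y)|\,|dW(y)|\le\E\int_{(0,t]}\int_{(0,y)}|h(x,y)|\,|dW(x)|\,|dW(y)|<\infty$ is exactly~\eqref{eqn:condDMart}.
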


In our proofs we are particularly interested in predictable positive-definite processes which are defined as following:
\begin{definition}\label{Def:predictablePDprocess}
 We say a process $(h(x,y))_{x,y\geq 0}$ is a \emph{predictable positive-definite} process if it satisfies the following properties:  i) $h(x,y)=h(y,x)$, ii) $h$ is positive definite, i.e., each realisation of the stochastic process is a positive definite function, iii) $(h(x,y))_{(x,y)\in C}$ is $\mathcal P$-measurable, and iv) $(h(x,x))_{x\geq 0}$ is predictable with respect to $(\mathcal{F}_x)_{x\geq0}$.
 \end{definition}

The next theorem, whose proof is deferred to Section~\ref{appendix:DefeProofs}, gives sufficient conditions under which the process of Equation \eqref{eqn:DMprocss} converges to zero in probability.

\begin{theorem}\label{thm:newGeneralTheorem}
Let $(h_n(x,y))_{x,y\geq 0}$ be a sequence of predictable positive-definite processes, let $W_n$ be a sequence of right-continuous $(\mathcal F_t)$-martingales with predictable and quadratic variation processes denoted by $\langle W_n\rangle$ and $[W_n]$, respectively, and suppose that
\begin{align}
     \E\left( \int_0^{t}\int_0^{t} |h_n(x,y)||dW_n(x)||dW_n(y)|\right)<\infty\label{eqn:conditionTheorem17}
\end{align}
holds for all $n$ large enough, and for all $t \in (0,\tau)$.

Then, if 
\begin{align*}
    \int_0^{\tau_n} h_n(x,x)d\langle W_n\rangle (x) = o_p(1),
\end{align*}
we have that
\begin{align*}
    \int_0^{\tau_n} \int_0^{\tau_n}h_n(x,y) dW_n(x)dW_n(y) = o_p(1),
\end{align*}
and the same holds if we replace $o_p(1)$ by $O_p(1)$.
\end{theorem}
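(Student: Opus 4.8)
The plan is to decompose the double integral into a "diagonal" part and an "off-diagonal" part, handle the off-diagonal part with the martingale structure furnished by Theorem~\ref{thm:basicDoubleMartingale}, and control the diagonal part by the hypothesis $\int_0^{\tau_n} h_n(x,x)\,d\langle W_n\rangle(x) = o_p(1)$ together with a comparison between the predictable and the optional (quadratic) variation of $W_n$. Concretely, first I would write, for each fixed $n$,
\begin{align*}
\int_0^{\tau_n}\!\!\int_0^{\tau_n} h_n(x,y)\,dW_n(x)\,dW_n(y) = 2\int_0^{\tau_n}\!\!\int_{(0,y)} h_n(x,y)\,dW_n(x)\,dW_n(y) + \int_0^{\tau_n} h_n(x,x)\,d[W_n](x),
\end{align*}
using the symmetry $h_n(x,y)=h_n(y,x)$ and the fact that the "double-diagonal" contribution of a stochastic integral against $dW_n(x)\,dW_n(y)$ collapses to an integral against the quadratic variation $d[W_n]$; for right-continuous martingales the jumps of $W_n$ are exactly what feeds the diagonal, so this identity is the correct bookkeeping. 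Condition~\eqref{eqn:conditionTheorem17} is precisely what makes each piece (and the truncated-at-$t$ version) well defined and finite in expectation, and lets me invoke Theorem~\ref{thm:basicDoubleMartingale}.

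Next I would deal with the off-diagonal term. By Theorem~\ref{thm:basicDoubleMartingale}, the process $M_n(t) = \int_0^t\int_{(0,y)} h_n(x,y)\,dW_n(x)\,dW_n(y)$ is an $(\mathcal F_t)$-martingale for each $n$ (here I use that $h_n$ is $\mathcal P$-measurable and the integrability hypothesis). I would then bound its size at $t=\tau_n$ by controlling its predictable variation: iterating the conditional-variance formula, $\langle M_n\rangle$ reduces to an expression of the form $\int_0^{\cdot}\big(\int_{(0,y)} h_n(x,y)^2\,d\langle W_n\rangle(x)\big)\,d\langle W_n\rangle(y)$ plus jump corrections, and positive-definiteness of $h_n$ lets me dominate the inner integrand on the diagonal — i.e. pointwise $|h_n(x,y)| \le h_n(x,x)^{1/2} h_n(y,y)^{1/2}$ — so that $\langle M_n\rangle(\tau_n)$ is bounded (up to constants) by $\big(\int_0^{\tau_n} h_n(x,x)\,d\langle W_n\rangle(x)\big)^2 = o_p(1)^2$. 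Then Lenglart's inequality (or a Doob/Chebyshev argument applied to the localized martingale) upgrades $\langle M_n\rangle(\tau_n) = o_p(1)$ to $M_n(\tau_n) = o_p(1)$, and likewise $O_p$ to $O_p$. This is the step I expect to follow the template already used by \citet{FerRiv2020}, which is why the excerpt cites their double-martingale machinery.

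Finally, the diagonal term $\int_0^{\tau_n} h_n(x,x)\,d[W_n](x)$: here I cannot use the hypothesis directly, since it involves $d[W_n]$ rather than $d\langle W_n\rangle$. The standard device is to write $[W_n] = \langle W_n\rangle + ([W_n]-\langle W_n\rangle)$, where the second summand is itself a martingale; since $h_n(x,x)$ is $(\mathcal F_x)$-predictable and nonnegative, $\int_0^{\cdot} h_n(x,x)\,d([W_n]-\langle W_n\rangle)(x)$ is a martingale whose predictable variation is controlled by $\int_0^{\cdot} h_n(x,x)^2\,d\langle[W_n]-\langle W_n\rangle\rangle$, which in the counting-process setting (jumps of size one) is again dominated by something of order $\int_0^{\tau_n} h_n(x,x)\,d\langle W_n\rangle(x) = o_p(1)$ after a boundedness/truncation argument on $h_n(x,x)$. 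Applying Lenglart once more gives that this martingale part is $o_p(1)$, while the remaining piece $\int_0^{\tau_n} h_n(x,x)\,d\langle W_n\rangle(x)$ is $o_p(1)$ by assumption; adding the three contributions completes the proof, with the $O_p$ statement following by the identical argument. The main obstacle, and the place where care is needed, is making the reduction of $\langle M_n\rangle$ rigorous — keeping track of the jump terms in the iterated conditional variance and justifying the pointwise diagonal domination of $|h_n(x,y)|$ uniformly enough to pass to the integrated bound — since everything downstream is a routine application of Lenglart's inequality once that bound is in hand.
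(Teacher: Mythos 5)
Your decomposition $R(t)=\int_0^t\int_0^t h_n\,dW_n\,dW_n = \int_0^t h_n(x,x)\,d[W_n](x) + 2\int_0^t\int_{(0,y)}h_n(x,y)\,dW_n(x)\,dW_n(y)$ is exactly the paper's starting point, but the way you then control the two pieces contains a genuine gap. For the off-diagonal martingale $M_n(t)=\int_0^t\int_{(0,y)}h_n(x,y)\,dW_n(x)\,dW_n(y)$, its predictable variation is
\begin{align*}
\langle M_n\rangle(t)=\int_0^t\Bigl(\int_{(0,y)}h_n(x,y)\,dW_n(x)\Bigr)^2 d\langle W_n\rangle(y),
\end{align*}
i.e.\ the \emph{square of a stochastic integral against $dW_n$}, not an integral of $h_n^2$ against $d\langle W_n\rangle$; your claimed reduction to $\int\bigl(\int h_n^2\,d\langle W_n\rangle\bigr)d\langle W_n\rangle$ only holds after taking expectations (It\^o isometry), and there is no pathwise bound of $\langle M_n\rangle(\tau_n)$ by $\bigl(\int_0^{\tau_n}h_n(x,x)\,d\langle W_n\rangle(x)\bigr)^2$. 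Since the hypothesis is only $\int_0^{\tau_n}h_n(x,x)\,d\langle W_n\rangle(x)=o_p(1)$ --- convergence in probability, with no $L^1$ or $L^2$ control --- you cannot pass through expectations either, so the Lenglart step for $M_n$ has nothing to feed on. A similar problem affects your treatment of $\int h_n(x,x)\,d([W_n]-\langle W_n\rangle)$, where the "boundedness/truncation argument on $h_n(x,x)$" presumes a uniform bound on $h_n$ that is not among the hypotheses.

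The idea you are missing is that positive definiteness of $h_n$ makes the \emph{whole} process $R(t)$ nonnegative, so the Lenglart--Rebolledo inequality can be applied to $R$ directly rather than to each piece. The paper observes that $Z=M_n$ is a mean-zero martingale (by Theorem~\ref{thm:basicDoubleMartingale} and optional stopping) and that $D(t)=\int_0^t h_n(x,x)\,d[W_n](x)$ is an increasing adapted process compensated by $A(t)=\int_0^t h_n(x,x)\,d\langle W_n\rangle(x)$; hence $\E(R(T))=\E(A(T))$ for every bounded stopping time $T$. Because $R\geq 0$ and $A$ is increasing and predictable, Lenglart--Rebolledo gives $\Prob(\sup_{t\le T}R(t)\ge\varepsilon)\le\delta/\varepsilon+\Prob(A(T)\ge\delta)$, and $A(\tau_n)=o_p(1)$ (resp.\ $O_p(1)$) immediately yields $R(\tau_n)=o_p(1)$ (resp.\ $O_p(1)$). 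No bound on $\langle M_n\rangle$ is ever needed: the martingale part only has to vanish in expectation, which it does for free. Without the observation that $R\ge 0$, your route requires quantitative control of $\langle M_n\rangle$ that the hypotheses do not supply.
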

Note that Equation~\eqref{eqn:conditionTheorem17} holds trivially due to the simple nature of our martingales, hence, we will not verify this conditions in our applications of the theorem.

\section{Analysis under the Null Hypothesis: Proof of Theorem~\ref{thm:FinalDistr}}\label{sec:proofMain}
The proof of Theorem~\ref{thm:FinalDistr} is split  into three mains steps:
\begin{itemize}
\item[i)] We find a cleaner asymptotic expression for our test statistic under the null hypothesis. In particular, we show that $Z_n$ can be rewritten as
\begin{align*}
Z_n &=\left(\frac{n}{n_0n_1}\right)^2\sum_{i=1}^n\sum_{j=1}^n \int_0^{\tau_n}\int_0^{\tau_n} K(\widehat F(x-),\widehat F(y-))\frac{(-1)^{c_i+c_j}L(x)L(y)}{Y_{c_i}(x)Y_{c_j}(y)}dM^i(x) dM^j(y).
\end{align*}
Then, by using Theorem \ref{thm:newGeneralTheorem}, we prove that $\widehat{F}(x-)$, $n/(n_0n_1)L(x)$, and $Y_{c}(x)/n$ can be replaced by their respective limits, $F_0(x)$, $\psi(x)S_0(x)$, and $(1-H_{c})(x)\eta^{c-1}(1-\eta)^{-c}$, up to a small additive term that decreases to zero in probability, obtaining that
\begin{align}\label{eqn:nZnfinalform1}
Z_n = \frac{1}{n^2}\sum_{i}^n\sum_{j=1}^n \int_0^{\tau_n}\int_0^{\tau_n} \frac{K(F_0(x),F_0(y))\psi(x)\psi(y)}{(1-G_{c_i}(x))(1-G_{c_j}(y))}\frac{1}{\eta^2}\left(\frac{-\eta}{(1-\eta)}\right)^{c_i+c_j}dM^i(x) dM^j(y)+o_p\left(n^{-1} \right).
\end{align}
\item[ii)]  We prove that the deterministic and random covariates models are asymptotically equivalent in the sense that our test statistic $nZ_n$ has the same asymptotic distribution (when it exists) under both models.

\item[iii)] We obtain the limit distribution of $nZ_n$ under the random covariates model. Our results translate to the deterministic covariates model by using the result of the previous item.
\end{itemize}

\subsection{Step i: Finding a simpler asymptotic representation}\label{sec:nicerRep}

For this step, we work under the deterministic covariates model, but notice that our analysis can be extended to the random covariates model by conditioning on the number of random covariates with value 0, say $N_0$, and by noticing that $N_0/n \to \eta$ almost surely.

From Theorem~\ref{thm:ZndoubleIntegral} and Lemma~\ref{lemma:LRembed}, we have that
\begin{align}
Z_n &=\left\|\frac{n}{n_0n_1}\int_0^{\tau_n}K(\widehat F(x-), \cdot)L(x)d\Lambda^*(x)\right\|_{\mathcal{H}}^2\\
&=\left(\frac{n}{n_0n_1}\right)^2 \int_0^{\tau_n}\int_0^{\tau_n} K(\widehat F(x-), \widehat F(y-)) L(x)L(y)d\Lambda^*(x) d\Lambda^*(y),\label{randombjf81}
\end{align}
where $d\Lambda^*(x)=d\widehat\Lambda_0(x)-d\widehat\Lambda_1(x)$, and
\begin{align}
d\Lambda^*(x)=\frac{dM_0(x)}{Y_0(x)}-\frac{dM_1(x)}{Y_1(x)} = \sum_{i=1}^n (-1)^{c_i}\frac{dM^i(x)}{Y_{c_i}(x)}\label{eqn:propertydLstar}
\end{align}
holds under the null hypothesis. In particular, notice that, under the null hypothesis, $(\Lambda^{*}(x))_{x\geq 0}$ is an $(\mathcal{F}_x)$-martingale with predictable and quadratic variation processes given by 
\begin{align}\label{eqn:propertydLstarCompe}
d\langle \Lambda^* \rangle(x)= \frac{d\Lambda_0(x)}{L(x)} \quad \text{ and }\quad d[\Lambda^*](x)= \frac{dN_0(x)}{Y_{0}(x)^2}+\frac{dN_1(x)}{Y_{1}(x)^2},
\end{align}
respectively. Then,  by substituting  Equation~\eqref{eqn:propertydLstar} in Equation~\eqref{randombjf81}, we can rewrite our test statistic as
\begin{align*}
Z_n &=\left(\frac{n}{n_0n_1}\right)^2\sum_{i=1}^n\sum_{j=1}^n \int_0^{\tau_n}\int_0^{\tau_n} K(\widehat F(x-),\widehat F(y-))L(x)L(y)(-1)^{c_i+c_j}\frac{dM^i(x) dM^j(y)}{Y_{c_i}(x)Y_{c_j}(y)}.
\end{align*}

The main result of this section is the following:
\begin{theorem}\label{thm:cleaning}
Assume Conditions~\ref{con:conditions} and \ref{con:ExtraConditions} hold. Then, under the null hypothesis, it holds that
\begin{align*}
nZ_n = \frac{1}{n}\sum_{i=1}^n\sum_{j=1}^n \int_0^{\tau_n}\int_0^{\tau_n} \frac{K(F_0(x),F_0(y))\psi(x)\psi(y)}{(1-G_{c_i}(x))(1-G_{c_j}(y))}\eta^{-2}\left(\frac{-\eta}{(1-\eta)}\right)^{c_i+c_j}dM^i(x) dM^j(y)+o_p\left(1 \right).
\end{align*}
\end{theorem}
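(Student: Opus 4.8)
The plan is to pass from the "empirical" integrand $K(\widehat F(x-),\widehat F(y-))L(x)L(y)(n/(n_0n_1))^2$ appearing in $nZ_n$ to the "deterministic" integrand $K(F_0(x),F_0(y))\psi(x)\psi(y)\eta^{-2}(\cdots)^{c_i+c_j}/((1-G_{c_i}(x))(1-G_{c_j}(y)))$ by peeling off one random factor at a time and controlling each replacement error with Theorem~\ref{thm:newGeneralTheorem}. Concretely, write the difference $nZ_n - (\text{target})$ as a telescoping sum of double stochastic integrals $\int_0^{\tau_n}\int_0^{\tau_n} R_n(x,y)\, d\Lambda^*(x)\,d\Lambda^*(y)$ (or, after expanding $d\Lambda^* = \sum_i(-1)^{c_i}dM^i/Y_{c_i}$, as sums over $i,j$ of integrals against $dM^i\,dM^j$), where each error kernel $R_n$ has the form $K$ evaluated at appropriate arguments, times a product of a "converging" random prefactor minus its limit. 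The key structural point is that for each such term the integrand is a \emph{predictable positive-definite} process in the sense of Definition~\ref{Def:predictablePDprocess} — symmetry and positive-definiteness come from $K$ being a kernel, $\mathcal P$-measurability and predictability of the diagonal come from Proposition~\ref{coll:h predictable} together with the fact that $\widehat F(x-)$, $L(x)$, $Y_c(x)$ are left-continuous adapted (hence predictable) — so Theorem~\ref{thm:newGeneralTheorem} applies once we check the diagonal condition $\int_0^{\tau_n} h_n(x,x)\,d\langle\Lambda^*\rangle(x) = o_p(1)$, using $d\langle\Lambda^*\rangle(x) = d\Lambda_0(x)/L(x)$ from Equation~\eqref{eqn:propertydLstarCompe}.

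The steps, in order, would be: (1) Use Proposition~\ref{prop:supConverSH}(i) to replace $\widehat F(x-)$ by $F_0(x)$ inside $K$; the error kernel is $(K(\widehat F(x-),\widehat F(y-)) - K(F_0(x),F_0(y)))$ times the rest, and on the diagonal we bound it via uniform continuity of $K$ on compacts away from $\tau$, handling the region near $\tau$ with Condition~\ref{con:ExtraConditions}. (2) Replace $L(x) = Y_0(x)Y_1(x)/Y(x)$, scaled by $n/(n_0n_1)$, by its limit $\psi(x)S_0(x)$; here use Proposition~\ref{prop:supConverSH}(ii), Proposition~\ref{prop:ProbBounds}, and Remark~\ref{remark:limits} so that the random prefactor is $\Theta_p(1)$ and its difference from the limit is $o_p(1)$ uniformly on $[0,\tau_n]$, modulo a tail near $\tau$. (3) Replace each $Y_{c_i}(x)/n$ in the denominators by $(1-H_{c_i}(x))\eta^{-(1-c_i)}(1-\eta)^{-c_i}$ (equivalently rewrite $\psi(x)S_0(x)/((1-H_{c_i}(x))\cdots)$ as $\psi(x)/((1-G_{c_i}(x))\cdots)$ using $1-H_c = S_c(1-G_c)$ and $S_0=S_1$ under the null), again via Glivenko–Cantelli and the uniform bounds. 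At each stage Condition~\ref{con:conditions} guarantees the relevant $L^1$/$L^2$ integrability so that the deterministic limiting integrals (and hence the diagonal quantities $\int h_n(x,x)d\langle\Lambda^*\rangle$) are finite and the dominated-type convergence arguments go through; Equation~\eqref{eqn:conditionTheorem17} is automatic as remarked after Theorem~\ref{thm:newGeneralTheorem}.

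The main obstacle is the behaviour near the right endpoint $\tau$: the random prefactors $L(x)$, $Y_c(x)/n$, $\widehat S(x-)$ are only controlled by \emph{ratio} bounds ($O_p$ and $\Theta_p$, not uniform closeness) as $x\to\tau$, so the naive "sup of the difference times total mass" estimate fails there. This is exactly why Condition~\ref{con:ExtraConditions} is imposed: it lets us split each integral at a cutoff $t<\tau$, bound the $[0,t]$ part by uniform convergence on a compact where $1-H$ is bounded below, and bound the $(t,\tau]$ part using the diagonal condition of Theorem~\ref{thm:newGeneralTheorem} together with Condition~\ref{con:ExtraConditions} and the $\Theta_p(1)$ ratio bounds, then send $t\to\tau$ after $n\to\infty$. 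Assembling the three replacement errors, each shown to be $o_p(n^{-1})$ after the $n$-scaling (equivalently the unscaled double integral is $o_p(n^{-1})$, so $n$ times it is $o_p(1)$), and collecting the constants from $(n/(n_0n_1))^2$ and the $(-1)^{c_i+c_j}$ signs into $\eta^{-2}(-\eta/(1-\eta))^{c_i+c_j}$ using $n_0/n\to\eta$, $n_1/n\to 1-\eta$, yields the claimed expression.
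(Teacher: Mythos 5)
Your proposal follows essentially the same route as the paper's proof: the same three successive replacements (of $\widehat F(x-)$ by $F_0(x)$, of $\tfrac{n}{n_0n_1}L(x)$ by $\psi(x)S_0(x)$, and of $Y_{c}(x)/n$ by its limit), each error controlled by Theorem~\ref{thm:newGeneralTheorem} applied to a predictable positive-definite integrand, with the diagonal condition verified by splitting at a cutoff $t<\tau$, using uniform convergence on $[0,t]$ and Conditions~\ref{con:conditions} and~\ref{con:ExtraConditions} for the tail. The one point to tighten is that the positive-definite error kernel in your step (1) is not the plain difference $K(\widehat F(x-),\widehat F(y-))-K(F_0(x),F_0(y))$ (which is not positive definite) but the four-term kernel $\left\langle K(F_0(x),\cdot)-K(\widehat F(x-),\cdot),\,K(F_0(y),\cdot)-K(\widehat F(y-),\cdot)\right\rangle_{\mathcal H}L(x)L(y)$ arising from $\|\phi-\widehat\phi\|_{\mathcal H}^2$, which is exactly how the paper sets it up via the triangle inequality on norms.
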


We split the proof of Theorem \ref{thm:cleaning} into three parts: 
\begin{enumerate}
\item[1)] We prove that the pooled Kaplan-Meier estimator, $\widehat{F}(x-)$ can be replaced by its limit $F_0(x)$ (recall $F_0$ is continuous, and thus $F_0(x) = F_0(x-))$,
\item[2)] we prove that $nL(x)/(n_0n_1)$ can be replaced by its limit $\psi(x)S_0(x)$, and 
\item[3)] we prove that $Y_0(x)/n$ and $Y_1(x)/n$ can be replaced by their limits, $\eta (1-H_0(x))$  and $(1-\eta) (1-H_1(x))$, respectively.
\end{enumerate}

For the proof of Theorem~\ref{thm:cleaning} we will recurrently use the following fact: let $q:\R\to [0,1)$ and let $\mu$ be a measure on $[0,\infty)$, then
\begin{align*}
    \left\|\int_0^{\infty} K(q(x),\cdot)\mu(dx)\right\|_{\mathcal H}^2 = \int_0^{\infty}\int_0^{\infty}K(q(x),q(y))\mu(dx)\mu(dy),
\end{align*}
which is a straightforward consequence of the linearity of the inner product and integration, and the reproducing property (this argument was used in the proof of Theorem~\ref{thm:ZndoubleIntegral}).

\subsection{Proof of Theorem~\ref{thm:cleaning}}
\textbf{Part 1): Replacement of $\widehat{F}(x-)$ by $F_0(x)$}\\
The following Lemma proves that $K(\widehat{F}(x-),\widehat{F}(y-))$ can be replaced by $K(F_0(x),F_0(y))$, up to a small error.
\begin{lemma}\label{lemma:cleaning1}
Assume Conditions \ref{con:conditions} and  \ref{con:ExtraConditions}. Then, under the null hypothesis, it holds that
\begin{align*}
nZ_n &= n\left(\frac{n}{n_0n_1}\right)^2 \int_0^{\tau_n}\int_0^{\tau_n} K(F_0(x),F_0(y)) L(x)L(y)d\Lambda^*(x) d\Lambda^*(y)+o_p(1).
\end{align*}
\end{lemma}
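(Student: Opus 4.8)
The plan is to control the difference
\[
D_n = n\left(\frac{n}{n_0n_1}\right)^2\int_0^{\tau_n}\!\!\int_0^{\tau_n}\Big(K(\widehat F(x-),\widehat F(y-))-K(F_0(x),F_0(y))\Big)L(x)L(y)\,d\Lambda^*(x)\,d\Lambda^*(y)
\]
and show $D_n=o_p(1)$. The natural route is to write the integrand difference $h_n(x,y)=\big(K(\widehat F(x-),\widehat F(y-))-K(F_0(x),F_0(y))\big)\frac{n}{n_0n_1}L(x)\frac{n}{n_0n_1}L(y)$ as a (signed) combination of predictable positive-definite processes and apply Theorem~\ref{thm:newGeneralTheorem} with $W_n=\Lambda^*$ scaled by $\sqrt n$. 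Note $h_n$ itself is not positive definite, but the trick — already used implicitly via the embedding picture of Lemma~\ref{lemma:LRembed} — is that $D_n = n\|\phi^{\widehat F}_n\|^2 - n\|\phi^{F_0}_n\|^2$ where $\phi^{q}_n(\cdot)=\frac{n}{n_0n_1}\int_0^{\tau_n}K(q(x),\cdot)L(x)\,d\Lambda^*(x)$, and by the polarization/parallelogram identity $D_n = n\langle \phi^{\widehat F}_n-\phi^{F_0}_n,\ \phi^{\widehat F}_n+\phi^{F_0}_n\rangle_{\mathcal H}$, which by Cauchy–Schwarz is bounded by $n\|\phi^{\widehat F}_n-\phi^{F_0}_n\|_{\mathcal H}\cdot(\|\phi^{\widehat F}_n\|_{\mathcal H}+\|\phi^{F_0}_n\|_{\mathcal H})\sqrt n$. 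So it suffices to show (a) $n\|\phi^{F_0}_n\|_{\mathcal H}^2 = O_p(1)$ and $n\|\phi^{\widehat F}_n\|^2_{\mathcal H}=nZ_n=O_p(1)$, and (b) $n\|\phi^{\widehat F}_n-\phi^{F_0}_n\|_{\mathcal H}^2 = o_p(1)$.

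For (a), $n\|\phi^{F_0}_n\|^2_{\mathcal H}$ is a genuine double martingale integral of a predictable positive-definite process $K(F_0(x),F_0(y))\frac{n}{n_0n_1}L(x)\frac{n}{n_0n_1}L(y)$ against $\sqrt n\,\Lambda^*$; its diagonal term is $n\left(\frac{n}{n_0n_1}\right)^2\int_0^{\tau_n}K(F_0(x),F_0(x))L(x)^2\,d\langle\Lambda^*\rangle(x) = n\left(\frac{n}{n_0n_1}\right)^2\int_0^{\tau_n}K(F_0(x),F_0(x))L(x)\,d\Lambda_0(x)$, which converges in probability to $\frac1{\eta(1-\eta)}\int_0^\tau K(F_0,F_0)\psi\,dF_0<\infty$ by Proposition~\ref{prop:supConverSH}, Proposition~\ref{prop:ProbBounds} and Condition~\ref{con:conditions}(iii),(v), with the tail near $\tau$ handled by Condition~\ref{con:ExtraConditions}; then Theorem~\ref{thm:newGeneralTheorem} (in its $O_p(1)$ form, after subtracting the diagonal) gives $n\|\phi^{F_0}_n\|^2_{\mathcal H}=O_p(1)$. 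The bound $nZ_n=O_p(1)$ will follow once (b) is established, or can be obtained the same way.

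The crux is (b). Expanding the square,
\[
n\|\phi^{\widehat F}_n-\phi^{F_0}_n\|^2_{\mathcal H} = n\left(\frac{n}{n_0n_1}\right)^2\int_0^{\tau_n}\!\!\int_0^{\tau_n} g_n(x,y)\,L(x)L(y)\,d\Lambda^*(x)\,d\Lambda^*(y),
\]
where $g_n(x,y)=\langle K(\widehat F(x-),\cdot)-K(F_0(x),\cdot),\,K(\widehat F(y-),\cdot)-K(F_0(y),\cdot)\rangle_{\mathcal H}$ is itself a \emph{predictable positive-definite} process (it is a Gram kernel in the variables $x,y$, symmetric, positive definite, $\mathcal P$-measurable since $\widehat F(x-)$ is left-continuous hence predictable, and its diagonal is predictable). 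So Theorem~\ref{thm:newGeneralTheorem} applies directly and reduces the claim to showing the diagonal term is $o_p(1)$:
\[
n\left(\frac{n}{n_0n_1}\right)^2\int_0^{\tau_n} \big\|K(\widehat F(x-),\cdot)-K(F_0(x),\cdot)\big\|_{\mathcal H}^2 L(x)^2\,d\langle\Lambda^*\rangle(x) = \frac{n}{n_0n_1}\,\frac{n}{n_0 n_1}\int_0^{\tau_n} a_n(x)\,L(x)\,d\Lambda_0(x),
\]
with $a_n(x)=K(\widehat F(x-),\widehat F(x-))-2K(\widehat F(x-),F_0(x))+K(F_0(x),F_0(x))$. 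Since $\frac{n}{n_0n_1}L(x)\to\psi(x)S_0(x)$ uniformly (by Proposition~\ref{prop:ProbBounds} and Remark~\ref{remark:limits}) and $\sup_{t\le\tau}|\widehat F(t-)-F_0(t)|\to0$ a.s.\ (Proposition~\ref{prop:supConverSH}(i)), continuity of $K$ gives $a_n(x)\to 0$ pointwise on any interval $[0,t]$, $t<\tau$; there $\frac1n\sum$ bounds the measure and a dominated-convergence argument (dominating $K(\widehat F(x-),\widehat F(x-))$ etc.\ using Condition~\ref{con:conditions} and the $O_p(1)$ envelopes of Proposition~\ref{prop:ProbBounds}) yields $\int_0^t a_n\,L\,d\Lambda_0 = o_p(1)$, while the tail $\int_t^{\tau_n}$ is made uniformly small in $n$ by Condition~\ref{con:ExtraConditions}(i),(ii) together with the analogous bound on $K(F_0(x),F_0(x))$ from Condition~\ref{con:conditions}(iii). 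I expect the main obstacle to be precisely this uniform-integrability/tail control near $\tau_n$: one must dominate a random integrand $a_n$ involving $K(\widehat F(x-),\widehat F(x-))$ uniformly in $n$, which is exactly what Condition~\ref{con:ExtraConditions} is designed to supply, but stitching together the "compact part converges to zero'' and "tail is negligible'' estimates — while keeping the $O_p(1)$ factors from $\frac n{n_0n_1}L$ and $\widehat S/S$ under control — requires care. Once the diagonal is $o_p(1)$, Theorem~\ref{thm:newGeneralTheorem} closes (b), Cauchy–Schwarz closes $D_n=o_p(1)$, and $nZ_n = n\|\phi^{F_0}_n\|^2_{\mathcal H}+o_p(1)$, which is the statement of the Lemma.
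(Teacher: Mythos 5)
Your proposal is correct and follows essentially the same route as the paper: reduce to showing $n\|\phi^{F_0}_n\|^2_{\mathcal H}=O_p(1)$ and $n\|\phi^{\widehat F}_n-\phi^{F_0}_n\|^2_{\mathcal H}=o_p(1)$ (the paper gets there via the triangle inequality for norms rather than polarization, which is equivalent), then observe that the difference integrand is a Gram kernel and hence a predictable positive-definite process, apply Theorem~\ref{thm:newGeneralTheorem} to reduce to the diagonal $\int_0^{\tau_n}a_n(x)\,\frac{n}{n_0n_1}L(x)\,d\Lambda_0(x)$, and handle it by splitting into a compact part (uniform convergence of $\widehat F$ plus continuity of $K$) and a tail controlled by Conditions~\ref{con:conditions} and~\ref{con:ExtraConditions}. (The Cauchy--Schwarz bound as written carries a spurious extra $\sqrt n$ factor --- it should read $\sqrt n\,\|\phi^{\widehat F}_n-\phi^{F_0}_n\|_{\mathcal H}\cdot\sqrt n\,(\|\phi^{\widehat F}_n\|_{\mathcal H}+\|\phi^{F_0}_n\|_{\mathcal H})$ --- but the two sufficient conditions you actually verify are the correct ones.)
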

\begin{proof}

Using norm notation, the desired result is equivalent to show that
\begin{align}
    n\left\|\left(\frac{n}{n_0n_1}\right)\int_0^{\tau_n} K(\widehat F(x-),\cdot)L(x)d\Lambda^*(x)\right\|_{\mathcal H}^2 =    n\left\|\left(\frac{n}{n_0n_1}\right)\int_0^{\tau_n} K( F(x),\cdot)L(x)d\Lambda^*(x)\right\|_{\mathcal H}^2 + o_p(1).
\end{align}
By triangular inequality $\|b\|_{\mathcal H}-\|a-b\|_{\mathcal H}\leq \|a\|_{\mathcal H}\leq \|b\|_{\mathcal H}+\|a-b\|_{\mathcal H}$, then, by taking square, we deduce that we just need to prove that
\begin{align}
n\left\|\frac{n}{n_0n_1}\int_0^{\tau_n}(K(F(x),\cdot)-K(\widehat F(x-),\cdot))L(x)d\Lambda^*(x)\right\|^2_{\mathcal{H}}&=o_p(1)\label{eqn:lemma20step1},
\end{align}
and that 
\begin{align}
n\left\|\frac{n}{n_0n_1}\int_0^{\tau_n}K(F(x),\cdot)L(x)d\Lambda^*(x)\right\|^2_{\mathcal{H}}=O_p(1)\label{eqn:lemma20step2}.
\end{align}

We begin by verifying Equation~\eqref{eqn:lemma20step2}. Expanding the inner product expression, the left-hand side equals
\begin{align*}
    O(1)\frac{1}{n} \int_0^{\tau_n}\int_0^{\tau_n}K(F(x),F(y))L(x)L(y)d\Lambda^*(x)d\Lambda^*(y).
\end{align*}
Also, notice that $(K(F(x),F(y))L(x)L(y))_{x,y\geq 0}$ is a predictable positive-definite process (recall Definition~\ref{Def:predictablePDprocess}), and since $(\Lambda^*(x))_{x\geq 0}$ is an $(\mathcal{F}_x)$-martingale with predictable and quadratic variation processes given in Equation \eqref{eqn:propertydLstarCompe}, a straightforward application of Theorem~\ref{thm:newGeneralTheorem} tell us that we just need to verify that
\begin{align*}
    \frac{1}{n}\int_0^{\tau_n} K(F(x),F(x))L(x)d\Lambda_0(x)= O_p(1).
\end{align*}
The previous equation holds true by Condition~\ref{con:conditions}, and by using that  $L(x)/n = O_p(1)S_0(x)\psi(x)$ uniformly for all $x\leq \tau_n$ due to Propositions~\ref{prop:ProbBounds}.\ref{prop:ProbBound2} and \ref{prop:ProbBounds}.\ref{prop:ProbBound3}, and since $\psi(x)\leq \eta^{-1}(1-G_0(x))$.

We continue verifying Equation~\eqref{eqn:lemma20step1}. Notice that
\begin{align}
n\left\|\frac{n}{n_0n_1}\int_0^{\tau_n}(K(F(x),\cdot)-K(\widehat F(x-),\cdot))L(x)d\Lambda^*(x)\right\|^2_{\mathcal{H}}=    n\left(\frac{n}{n_0n_1}\right)^2 \int_0^{\tau_n}\int_0^{\tau_n} h(x,y)d\Lambda^*(x) d\Lambda^*(y)\label{eqn:Gotozero1},
\end{align}
where 
\begin{align*}
    h(x,y)&=\left(K(F_0(x),F_0(y))-K(F_0(x),\widehat F(y-))-K(\widehat F(x-),F_0(y))+K(\widehat F(x-),\widehat F(y-))\right)L(x)L(y).
\end{align*}
It is easy to verify that $(h(x,y))_{x,y\geq0}$ is a predictable positive-definite process, then, by using Theorem \ref{thm:newGeneralTheorem}, the desired result follows from proving $n\left(\frac{n}{n_0n_1}\right)^2\int_0^{\tau_n} h(x,x)d\langle \Lambda^*\rangle(x) = o_p(1)$. 

Using that  $L(x)/n = O_p(1)S_0(x)\psi(x)$ uniformly for all $x\leq \tau_n$ and $\psi(x)\leq \eta^{-1}(1-G_0(x))$,  we get
\begin{align*}
&n\left(\frac{n}{n_0n_1}\right)^2\int_0^{\tau_n} h(x,x)d\langle \Lambda^*\rangle(x)=O_p(1)\int_0^{\tau} \omega(x)(1-G_0(x))dF_0(x), 
\end{align*}
where  $\omega(x) = K(F_0(x),F_0(x))-2K(F_0(x),\widehat F(x-))+K(\widehat F(x-),\widehat F(x-)))$. Let $\varepsilon>0$, then for any $t\in (0,\tau)$,
\begin{align}
    &\Prob\left(\int_0^{\tau} \omega(x)(1-G_0(x))dF_0(x)\geq \varepsilon \right) \nonumber\\
    &\quad  \leq \Prob\left(\int_0^{t} \omega(x)(1-G_0(x))dF_0(x)\geq \frac{\varepsilon}{2} \right)+ \Prob\left(\int_t^{\tau} \omega(x)(1-G_0(x))dF_0(x)\geq \frac{\varepsilon}{2} \right)\label{eqn:equationDecom2Prob}
\end{align}
We will prove that both terms on the right-hand side of the previous equation tends to 0 as $n$ approaches infinity. For the first term, notice that, since $\sup_{x\leq t}|\widehat F(x-)- F(x)|=0$ a.s. by Proposition \ref{prop:supConverSH}.\ref{prop:supConverSH1}, and since $K$ is continuous in $[0,1)^2$, it exists $N$, large enough such that for every $n\geq N$, $\omega(x)\leq \varepsilon/4$ uniformly on $[0,t]$, thus $\int_0^{t}\omega(x)(1-G_0(x))dF_0(x) \leq\varepsilon/4$. Therefore, for any $t <\tau$,
\begin{align}
   \limsup_{n\to \infty} \Prob\left(\int_0^{t} \omega(x)(1-G_0(x))dF_0(x)\geq \frac{\varepsilon}{2} \right) = 0,\label{eqn:firstpartDecomProb}
\end{align}

For the second term of the right-hand side of Equation~\eqref{eqn:equationDecom2Prob}, note that  $2K(F_0(x),\widehat F(x-))\leq K(F_0(x),F_0(x))+K(\widehat F(x-)),\widehat F(x-))$,  then $\omega(x)\leq 2K(F_0(x),F_0(x))+2K(\widehat F(x-)),\widehat F(x-))$, and thus we deduce
\begin{align}
    &\Prob\left(\int_t^{\tau} \omega(x)(1-G_0(x))dF_0(x)\geq \frac{\varepsilon}{2} \right)\nonumber\\
    &\quad \quad\leq \Prob\left(\int_t^{\tau}K(F_0(x),F_0(x))(1-G_0(x))dF_0(x)\geq \frac{\varepsilon}{8} \right)+\Prob\left(\int_t^{\tau} K(\widehat F(x-),\widehat F(x-))(1-G_0(x))dF_0(x)\geq \frac{\varepsilon}{8} \right)\label{eqn:secondpartDecomProb}
\end{align}
By Condition~\ref{con:conditions} we deduce that it exists $t'<\tau$, $\int_{t'}^{\tau}K(F_0(x),F_0(x)(1-G_0(x))dF_0(x)<\varepsilon/16$. Then, by combining Equations~\eqref{eqn:firstpartDecomProb} and~\eqref{eqn:secondpartDecomProb},  we get that for every $t\geq t'$
\begin{align*}
  &  \limsup_{n\to \infty} \Prob\left(\int_0^{\tau} \omega(x)(1-G_0(x))dF_0(x)\geq \varepsilon \right) \leq  \limsup_{n\to\infty} \Prob\left(\int_t^{\tau} K(\widehat F(x-),\widehat F(x-)(1-G_0(x))dF_0(x)\geq \frac{\varepsilon}{8} \right),
\end{align*}
and by taking $t\to\tau$, we deduce that $\limsup_{n\to \infty} \Prob\left(\int_0^{\tau} \omega(x)(1-G_0(x))dF_0(x)\geq \varepsilon \right) = 0$ by Condition~\eqref{con:ExtraConditions}, completing our proof.
\end{proof}
\noindent
\textbf{Part 2): Replacement of $\frac{n}{n_0n_1}L(x)$ by $\psi(x)S_0(x)$}\\

By the previous part, our test statistic satisfies
\begin{align*}
nZ_n &= n\left(\frac{n}{n_0n_1}\right)^2 \int_0^{\tau_n}\int_0^{\tau_n} K(F_0(x),F_0(y)) L(x)L(y)d\Lambda^*(x) d\Lambda^*(y)+o_p(1).
\end{align*}
The next step is to show that we can replace $n/(n_0n_1)L(x)$ by its limit $\psi(x)S_0(x)$ without altering the value of $nZ_n$ by more than an error of order $o_p(1)$. We formalise this result in the following lemma.

\begin{lemma}\label{lemma:cleaning2}
Under Conditions \ref{con:conditions} and  \ref{con:ExtraConditions}, it holds
\begin{align*}
nZ_n = n \int_0^{\tau_n}\int_0^{\tau_n} K( F(x),  F(y)) \psi(x)\psi(y)S_0(x)S_0(y)d\Lambda^*(x)d\Lambda^*(y)+o_p(1).    
\end{align*}
\end{lemma}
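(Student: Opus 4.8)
The plan is to mirror the proof of Lemma~\ref{lemma:cleaning1}, working with the norm of $\mathcal H$. Write $g_n(x)=\frac{n}{n_0n_1}L(x)$ and $g(x)=\psi(x)S_0(x)$, so that by Lemma~\ref{lemma:cleaning1}, $nZ_n=n\bigl\|\int_0^{\tau_n}K(F_0(x),\cdot)g_n(x)\,d\Lambda^*(x)\bigr\|_{\mathcal H}^2+o_p(1)$; since $F=F_0$ under the null, the assertion is precisely that $g_n$ may be replaced by $g$ inside this norm at cost $o_p(1)$. Using $\bigl|\,\|a\|_{\mathcal H}-\|b\|_{\mathcal H}\,\bigr|\le\|a-b\|_{\mathcal H}$ and squaring (as in Lemma~\ref{lemma:cleaning1}), it suffices to prove
\begin{align*}
n\Bigl\|\int_0^{\tau_n}K(F_0(x),\cdot)\bigl(g_n(x)-g(x)\bigr)\,d\Lambda^*(x)\Bigr\|_{\mathcal H}^2&=o_p(1),\\
n\Bigl\|\int_0^{\tau_n}K(F_0(x),\cdot)g(x)\,d\Lambda^*(x)\Bigr\|_{\mathcal H}^2&=O_p(1).
\end{align*}

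For each, I would expand the squared norm as a double integral against $\Lambda^*$, which under the null is an $(\mathcal F_x)$-martingale with $d\langle\Lambda^*\rangle=d\Lambda_0/L$ by Equation~\eqref{eqn:propertydLstarCompe}, and invoke Theorem~\ref{thm:newGeneralTheorem} with $W_n=\sqrt{n}\,\Lambda^*$. The integrands $K(F_0(x),F_0(y))\bigl(g_n(x)-g(x)\bigr)\bigl(g_n(y)-g(y)\bigr)$ and $K(F_0(x),F_0(y))g(x)g(y)$ are predictable positive-definite processes in the sense of Definition~\ref{Def:predictablePDprocess} (here $g$ is deterministic and continuous, $g_n$ is predictable since $Y_0,Y_1$ are left-continuous, $K\circ(F_0\times F_0)$ is positive-definite, and a Hadamard product with a rank-one process preserves this; $\mathcal P$-measurability is as in Proposition~\ref{coll:h predictable}), and Equation~\eqref{eqn:conditionTheorem17} holds trivially. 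Since $n/L=\frac{n^2}{n_0n_1}g_n^{-1}$ with $\frac{n^2}{n_0n_1}=O_p(1)$, Theorem~\ref{thm:newGeneralTheorem} reduces the two displays, respectively, to
\begin{align*}
\int_0^{\tau_n}K(F_0(x),F_0(x))\frac{\bigl(g_n(x)-g(x)\bigr)^2}{g_n(x)}\,d\Lambda_0(x)&=o_p(1),\\
\int_0^{\tau_n}K(F_0(x),F_0(x))\frac{g(x)^2}{g_n(x)}\,d\Lambda_0(x)&=O_p(1).
\end{align*}

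The key input is a uniform two-sided comparison $g_n=\Theta_p(1)g$ on $[0,\tau_n]$. Applying the upper bound of Proposition~\ref{prop:ProbBounds}.\ref{prop:ProbBound2} and the lower bound of Proposition~\ref{prop:ProbBounds}.\ref{prop:ProbBound3} to $Y_0/n_0$ and $Y_1/n_1$ (Remark~\ref{remark:limits}) gives $Y_c(x)/n_c=\Theta_p(1)(1-H_c(x-))$ uniformly on $x\le\tau_n$; substituting into $L=Y_0Y_1/Y$ and using $1-H=\eta(1-H_0)+(1-\eta)(1-H_1)$ shows $g_n(x)=\Theta_p(1)\frac{(1-H_0(x-))(1-H_1(x-))}{1-H(x-)}$ uniformly, while a direct computation using $1-H_c=(1-F_c)(1-G_c)$ and $S_0=S_1$ (valid under the null) gives $g(x)=\psi(x)S_0(x)=\frac{(1-H_0(x-))(1-H_1(x-))}{1-H(x-)}$. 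Hence $\frac{g^2}{g_n}=O_p(1)g$ and $\frac{(g_n-g)^2}{g_n}\le 2g_n+2\frac{g^2}{g_n}=O_p(1)g$ uniformly; and since $g\,d\Lambda_0=\psi\,dF_0\le\eta^{-1}(1-G_1)\,dF_0$ with $\int_0^\tau K(F_0,F_0)(1-G_1)\,dF_0<\infty$ by Condition~\ref{con:conditions} (note $F_0=F_1$ under the null), the second reduced display is immediate, and the first is reduced to upgrading an $O_p(1)$ bound to $o_p(1)$. For that I would split $\int_0^{\tau_n}=\int_0^t+\int_t^{\tau_n}$ at a level $t<\tau$ below which $1-H_0$ and $1-H_1$ both stay bounded away from $0$ (the integrand vanishes identically on the set where $1-H_0(x-)$ or $1-H_1(x-)$ is $0$, so nothing is lost as $t$ increases to the relevant endpoint): on $[0,t]$, $g_n\to g$ uniformly by Proposition~\ref{prop:supConverSH}.\ref{prop:supConverSH2} and Remark~\ref{remark:limits} (Glivenko--Cantelli for $Y_0/n_0,Y_1/n_1$ together with continuity of $(p,q)\mapsto pq/(p+q)$) and $g$ is bounded below, so $\int_0^tK(F_0,F_0)\frac{(g_n-g)^2}{g_n}\,d\Lambda_0=o_p(1)$; on $(t,\tau_n]$ the integral is $\le O_p(1)\,\eta^{-1}\int_t^\tau K(F_0,F_0)(1-G_1)\,dF_0$, which tends to $0$ uniformly in $n$ as $t$ increases, and a routine $\varepsilon$--$\limsup$ argument (as at the end of the proof of Lemma~\ref{lemma:cleaning1}) concludes.

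The main obstacle is securing the uniform comparison $g_n=\Theta_p(1)g$, in particular its lower-bound half, near the right endpoint $\tau$: without a lower bound on $g_n$ relative to $g$, the factor $1/L$ coming from $d\langle\Lambda^*\rangle$ would force integrating $K(F_0,F_0)$ against $d\Lambda_0$ rather than against $(1-G_1)\,dF_0$, and $\int_0^\tau K(F_0,F_0)\,d\Lambda_0$ need not be finite. Proposition~\ref{prop:ProbBounds}.\ref{prop:ProbBound3} is precisely what rescues this, playing here the role that Condition~\ref{con:ExtraConditions} played in Lemma~\ref{lemma:cleaning1}; the latter condition enters the present lemma only through its use in that one.
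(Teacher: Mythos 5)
Your proposal is correct and follows essentially the same route as the paper: the same triangle-inequality reduction to the two displays, the same application of Theorem~\ref{thm:newGeneralTheorem} to pass to the diagonal integral against $d\langle\Lambda^*\rangle(x)=d\Lambda_0(x)/L(x)$, and the same two-sided uniform comparison $\frac{n}{n_0n_1}L(x)=\Theta_p(1)\psi(x)S_0(x)$ obtained from Propositions~\ref{prop:ProbBounds}.\ref{prop:ProbBound2} and~\ref{prop:ProbBounds}.\ref{prop:ProbBound3}. The only (immaterial) difference is the final step: the paper establishes the $o_p(1)$ via dominated convergence on sets of arbitrarily large probability, using the pointwise convergence of $\psi(x)S_0(x)\big(\tfrac{n}{n_0n_1}L(x)\big)^{-1}$ to $1$, whereas you use the split $\int_0^t+\int_t^{\tau_n}$ with Glivenko--Cantelli uniform convergence on $[0,t]$ and a vanishing-tail bound via $\psi\leq\eta^{-1}(1-G_1)$ and Condition~\ref{con:conditions}; both are valid.
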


\begin{proof}
Similar to the proof of Lemma~\ref{lemma:cleaning1}, we just need to prove that 
\begin{align}\label{eqn:cleaning2eq1}
    n\left\|\int_0^{\tau_n} K(F(x),\cdot)\left(\frac{n}{n_0n_1}L(x)-\psi(x)S_0(x)\right)d\Lambda^*(x)\right\|_{\mathcal H}^2
    &=o_p(1),
\end{align}
and
\begin{align}\label{eqn:cleaning2eq2}
    n\left\|\int_0^{\tau_n} K(F(x),\cdot)\psi(x)S_0(x)d\Lambda^*(x)\right\|_{\mathcal H}^2
    &=O_p(1).
\end{align}
We begin by proving Equation~\eqref{eqn:cleaning2eq1}. Note that, by expanding the inner product, it is enough to prove that
\begin{align*}
    n\int_0^{\tau_n}\int_0^{\tau_n}h(x,y) d\Lambda^*(x)d\Lambda^*(y)=o_p(1),
\end{align*}
where
\begin{align*}
    h(x,y)&=K(F(x),F(y))\left(\frac{n}{n_0n_1}L(x)-\psi(x)S_0(x)\right)\left(\frac{n}{n_0n_1}L(y)-\psi(y)S_0(y)\right).
\end{align*}
Also, observe that $(h(x,y))_{x,y\geq 0}$ is a predictable positive-definite process. Then, by Theorem~\ref{thm:newGeneralTheorem}, we just need to verify that $n\int_0^{\tau_n}h(x,y)d\langle \Lambda^*\rangle(x)=o_p(1),$ which is deduced from the following equalities:
\begin{align*}
n\int_0^{\tau_n}h(x,y)d\langle \Lambda^*\rangle(x)&=n\int_0^{\tau_n}K(F(x),F(x))\left(1-\frac{\psi(x)S_0(x)}{n/(n_0n_1)L(x)}\right)^2 \left(\frac{n}{n_0n_1}\right)^2L(x)d\Lambda_0(x)\\
&=O_p(1)\int_0^{\tau}K(F(x),F(x))\left(1-\frac{\psi(x)S_0(x)}{n/(n_0n_1)L(x)}\right)^2\psi(x)dF_0(x)\\
&=o_p(1)
\end{align*}
where the second equality is due to Propositions~\ref{prop:ProbBounds}.\ref{prop:ProbBound2} and \ref{prop:ProbBounds}.\ref{prop:ProbBound3}, from which we deduce that  $L(x)/n= \Theta_p(1)\psi(x)S_0(x)$ uniformly for all $x\leq \tau_n$, and the third equality is due to an application of the dominated convergence theorem in sets of arbitrarily large probability: observe that, by Proposition~\ref{prop:supConverSH}, we have that $\left(1-\frac{\psi(x)S_0(x)}{n/(n_0n_1)L(x)}\right)\to 0$ for all $x<\tau$. Additionally, 
\begin{align*}
K(F(x),F(x))\left(1-\frac{\psi(x)S_0(x)}{n/(n_0n_1)L(x)}\right)^2\psi(x)&=O_p(1)K(F(x),F(x))\psi(x)dF_0(x)\\
&=O_p(1)K(F(x),F(x))(1-G_0(x)),
\end{align*}
since $\psi(x)\leq (1-G_0(x))$ for all $x<\tau$, and note that $K(F(x),F(x))(1-G_0(x))$ is integrable by Condition~\ref{con:conditions}. With these ingredients we can apply the dominated convergence theorem in set of arbitrarity large probability. 

To check Equation~\eqref{eqn:cleaning2eq2}, we follow the same steps, using that $n\int_0^{\tau_n}K(F(x),F(x))\psi(x)^2S_0(x)^2 /L_0(x)d\Lambda_0(x) = O_p(1)$, since $L_0(x)/n = \Theta_p(1)\psi(x)S_0(x)$ uniformly for all $x\leq \tau_n$.
\end{proof}

\noindent
\textbf{Part 3): Replacement of $Y_{0}$ by $\eta n(1-H_{0})$ and $Y_1$ by $(1-\eta) n(1-H_{1})$ }
From the previous step, it holds that
\begin{align*}
nZ_n
&= n\left\|\int_0^{\tau_n} K(F_0(x),\cdot)\psi(x)S_0(x) d\Lambda^* (x)\right\|_{\mathcal H}^2+o_p(1)\nonumber\\
&= \frac{1}{n}\left\|\int_0^{\tau_n} K(F_0(x),\cdot)\psi(x)S_0(x)\left( \frac{dM_0(x)}{Y_0(x)/n}-\frac{dM_1(x)}{Y_1(x)/n}\right)\right\|_{\mathcal H}^2+o_p(1)\nonumber\\
\end{align*}

Our next step is to replace $Y_0/n$ and $Y_1/n$ by their corresponding limits $\eta S_0(1-G_0)$ and $(1-\eta)S_0(1-G_1)$. The following Lemma implies the desired result.

\begin{lemma}
Under Conditions \ref{con:conditions}, it holds
\begin{align}
    \frac{1}{n}\left\|\int_0^{\tau_n} K(F_0(x),\cdot)\psi(x)S_0(x)\left(\frac{1}{(Y_c(x)/n)}-\frac{\eta^{c-1}(1-\eta)^{-c}}{S_0(x)(1-G_c(x))} \right)dM_c(x)\right\|_{\mathcal H}^2 = o_p(1),\label{eqn:cleaning3eq1}
\end{align}
and 
\begin{align}
    \frac{1}{n}\left\|\int_0^{\tau_n} K(F_0(x),\cdot)\frac{\psi(x)}{(1-G_c(x))} dM_c(x)\right\|_{\mathcal H}^2 = O_p(1),\label{eqn:cleaning3eq2}
\end{align}
for any class label $c\in\{0,1\}$.
\end{lemma}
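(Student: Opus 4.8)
The plan is to proceed exactly as in the proofs of Lemmas~\ref{lemma:cleaning1} and~\ref{lemma:cleaning2}: expand each squared norm into a double integral against the class martingale $M_c$ via the reproducing property, recognise the integrand as a predictable positive-definite process, and apply Theorem~\ref{thm:newGeneralTheorem} to reduce both statements to one-dimensional estimates on the diagonal. Throughout we work under the null hypothesis, so $S_c=S_0$, $1-H_c=S_0(1-G_c)$, $d\langle M_c\rangle(x)=Y_c(x)\,d\Lambda_0(x)$, and Equation~\eqref{eqn:conditionTheorem17} holds trivially by the simple nature of the martingales.

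For Equation~\eqref{eqn:cleaning3eq1}, the first step is to introduce $g_c(x)=\psi(x)S_0(x)\big(\tfrac{n}{Y_c(x)}-\tfrac{\eta^{c-1}(1-\eta)^{-c}}{S_0(x)(1-G_c(x))}\big)$, so that the left-hand side equals $\tfrac1n\int_0^{\tau_n}\!\int_0^{\tau_n}K(F_0(x),F_0(y))g_c(x)g_c(y)\,dM_c(x)\,dM_c(y)$. The process $h(x,y)=K(F_0(x),F_0(y))g_c(x)g_c(y)$ is a predictable positive-definite process: it is symmetric, it is the Schur product of the positive-definite kernels $K(F_0(\cdot),F_0(\cdot))$ and $g_c(\cdot)g_c(\cdot)$, it is $\mathcal P$-measurable because $Y_c$ is left-continuous and adapted while everything else is deterministic, and its diagonal $h(x,x)=K(F_0(x),F_0(x))g_c(x)^2$ is predictable. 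Hence Theorem~\ref{thm:newGeneralTheorem} reduces the claim to showing $\tfrac1n\int_0^{\tau_n}K(F_0(x),F_0(x))g_c(x)^2Y_c(x)\,d\Lambda_0(x)=o_p(1)$. Writing $g_c(x)^2\tfrac{Y_c(x)}{n}=\tfrac{n}{Y_c(x)}\psi(x)^2S_0(x)^2\big(1-\tfrac{\eta^{c-1}(1-\eta)^{-c}Y_c(x)/n}{S_0(x)(1-G_c(x))}\big)^2$, using $S_0\,d\Lambda_0=dF_0$ together with the uniform two-sided bound $Y_c(x)/n=\Theta_p(1)(1-H_c(x))$ on $[0,\tau_n]$ from Propositions~\ref{prop:ProbBounds}.\ref{prop:ProbBound2}--\ref{prop:ProbBounds}.\ref{prop:ProbBound3} and Remark~\ref{remark:limits}, this integral equals $O_p(1)\int_0^{\tau}K(F_0(x),F_0(x))\tfrac{\psi(x)^2}{1-G_c(x)}\big(1-r_n(x)\big)^2dF_0(x)$, where $r_n(x)=\tfrac{\eta^{c-1}(1-\eta)^{-c}Y_c(x)/n}{S_0(x)(1-G_c(x))}\to1$ a.s.\ for every $x<\tau$ by Proposition~\ref{prop:supConverSH}.\ref{prop:supConverSH2} and Remark~\ref{remark:limits}. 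Since $\psi(x)/(1-G_c(x))\le\max\{\eta^{-1},(1-\eta)^{-1}\}$ and $\psi(x)\le\eta^{-1}(1-G_0(x))$, while $(1-r_n(x))^2=O_p(1)$ uniformly, the integrand is dominated by $O_p(1)$ times the $F_0$-integrable function $K(F_0(x),F_0(x))(1-G_0(x))$ from Condition~\ref{con:conditions} iii); applying the dominated convergence theorem on events of probability arbitrarily close to one, exactly as in Lemma~\ref{lemma:cleaning2}, yields $o_p(1)$.

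Equation~\eqref{eqn:cleaning3eq2} is handled the same way, and is easier, since the integrand $K(F_0(x),F_0(y))\tfrac{\psi(x)}{1-G_c(x)}\tfrac{\psi(y)}{1-G_c(y)}$ is deterministic and so trivially a predictable positive-definite process. Expanding the squared norm and applying the $O_p(1)$ version of Theorem~\ref{thm:newGeneralTheorem} reduces the claim to $\tfrac1n\int_0^{\tau_n}K(F_0(x),F_0(x))\tfrac{\psi(x)^2}{(1-G_c(x))^2}Y_c(x)\,d\Lambda_0(x)=O_p(1)$; the one-sided bound $Y_c(x)/n=O_p(1)S_0(x)(1-G_c(x))$ from Proposition~\ref{prop:ProbBounds}.\ref{prop:ProbBound2} together with $S_0\,d\Lambda_0=dF_0$ turns this into $O_p(1)\int_0^{\tau}K(F_0(x),F_0(x))\tfrac{\psi(x)^2}{1-G_c(x)}dF_0(x)$, which is finite by the same bound on $\psi/(1-G_c)$ and Condition~\ref{con:conditions} iii). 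The only genuinely delicate point—as already in Lemmas~\ref{lemma:cleaning1} and~\ref{lemma:cleaning2}—is the passage to the limit inside the integral for Equation~\eqref{eqn:cleaning3eq1}: because the integrand is random through $Y_c/n$, the dominated convergence argument must be run on a sequence of events of arbitrarily large probability on which Proposition~\ref{prop:ProbBounds} supplies the deterministic dominating function, rather than almost surely on the whole space.
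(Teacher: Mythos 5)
Your proposal is correct and follows essentially the same route as the paper: expand the squared norm as a double integral against $dM_c$, verify the integrand is a predictable positive-definite process, invoke Theorem~\ref{thm:newGeneralTheorem} to reduce to the diagonal term $\tfrac1n\int_0^{\tau_n}h(x,x)\,d\langle M_c\rangle(x)$, and finish with the uniform bounds of Proposition~\ref{prop:ProbBounds} plus dominated convergence on events of probability arbitrarily close to one (the paper writes the difference $\sigma_n(x)\to 0$ where you write the ratio $r_n(x)\to 1$, and it treats only $c=0$ explicitly, but these are cosmetic differences). One trivial remark: the domination constant should come from $\psi\le(1-\eta)^{-1}(1-G_0)$ rather than $\eta^{-1}(1-G_0)$, a slip the paper itself also makes and which does not affect the argument.
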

\begin{proof}
We only prove the result for $c = 0$ as the proof for $c=1$ follows the same steps. Define  $\sigma_n(x) = \left(\frac{1}{Y_0(x)/n}- \frac{\eta^{-1}}{S_0(x)(1-G_0(x))}\right),$ and notice that
\begin{align}
    &\frac{1}{n}\left\|\int_0^{\tau_n} K(F_0(x),\cdot)\psi(x)S_0(x)\left(\frac{1}{(Y_0(x)/n)}-\frac{\eta^{-1}}{S_0(x)(1-G_0(x))} \right)dM_0(x)\right\|_{\mathcal H}^2 \nonumber\\
    \quad&= \frac{1}{n}\int_0^{\tau_n}\int_0^{\tau_n} K(F_0(x),F_0(y))\psi(x)\psi(y)S_0(x)S_0(y)\sigma_n(x)\sigma_n(x)dM_0(x)dM_0(y) \label{eqn:randomeqn93984}
\end{align}

Define $h(x,y)$ as the integrand in the previous double integral, and notice that $(h(x,y))_{x,y\geq0}$ is a predictable positive-definite process. Also, recall that $M_0(x)$ is an $(\mathcal{F}_x)_{x\geq 0}$-martingale with predictable and quadratic variation processes given by $d\langle M_0 \rangle(x) = Y_0(x)d\Lambda_0(x)$ and $d[M_0](x) = dN(x)$.  Then, by Theorem~\ref{thm:newGeneralTheorem}, we just need to verify that
\begin{align*}
\frac{1}{n}\int_0^{\tau_n}h(x,y)d\langle M_0\rangle(x)=o_p(1).
\end{align*}

By Proposition~\ref{prop:ProbBounds}.\ref{prop:ProbBound2}, we have $Y_0(x)/n = O_p(1)S_0(x)(1-G_0(x))$ uniformly for all $x\leq \tau_n$.
   \begin{align*}
\int_0^{\tau_n}h(x,y)d\langle M_0\rangle(x)=O_p(1) \int_0^{\tau}K(F_0(x),F_0(x))\psi(x)^2 \sigma_n(x)^2\ind_{\{x\leq \tau_n\}} S_0(x)^2(1-G_0(x))dF_0(x).
\end{align*}
To prove that the previous expression is $o_p(1)$ we use the dominated convergence theorem. For such, observe that $\sigma_n(x) \to 0$ a.s. for all $x<\tau$, and that $\sigma_n(x)^2 = O_p(1)(S_0(x)(1-G_0(x))^2$ uniformly for all $x\leq \tau_n$ due to proposition~\ref{prop:ProbBounds}.\ref{prop:ProbBound3}, therefore
\begin{align*}
K(F_0(x),F_0(x))\psi(x)^2 \sigma_n(x)^2 S_0(x)^2(1-G_0(x)) = O_p(1)K(F_0(x),F_0(x))\frac{\psi(x)^2}{1-G_0(x)},
\end{align*}
uniformly for all $x\leq \tau$. Finally, note that $K(F_0(x),F_0(x))\frac{\psi(x)^2}{1-G_0(x)}$ is integrable by Condition \ref{con:conditions}, since $\psi(x)\leq \min\{\eta^{-1}(1-G_1(x)), (1-\eta)^{-1}(1-G_0(x))\}$. The previous analysis yields that we can use the dominated convergence theorem in sets of large probability, concluding that Equation~\eqref{eqn:cleaning3eq1} holds true. Equation~\eqref{eqn:cleaning3eq2} follows from the same arguments, by noticing that $\int_0^{\tau_n} K(F_0(x),F_0(x))\frac{\psi(x)}{(1-G_0(x))}Y_0(x)d\Lambda_0(x) = O_p(1)$.
\end{proof}

\subsection{Step II: Asymptotically equivalence of two models}\label{sec:equivalence}
After applying Theorem~\ref{thm:cleaning} we get

\begin{eqnarray}
Z_n = \frac{1}{n^2}\sum_{i}^n\sum_{j=1}^n \int_0^{\tau_n}\int_0^{\tau_n} \frac{K(F_0(x),F_0(y))\psi(x)\psi(y)}{(1-G_{c_i}(x))(1-G_{c_j}(y))}\eta^{-2}\left(\frac{-\eta}{(1-\eta)}\right)^{c_i+c_j}dM^i(x) dM^j(y)+o_p\left(n^{-1} \right),
\end{eqnarray}
Recall that the previous expression is valid when considering either deterministic or independent Bernoulli$(\eta)$ covariates. In this section we show that asymptotic results obtained using either the deterministic or random covariates models are equivalent.

Let $s_i = (X_i,\Delta_i, c_i)$ for all $i\in[n]$ and define $J:(\R_+\times\{0,1\}\times\{0,1\})^2\to\R$ as
\begin{align}
J(s_i, s_j)
&=\eta^{-2}\left(\frac{-\eta}{(1-\eta)}\right)^{c_i+c_j}\int_0^{X_i}\int_0^{X_j} \frac{ K(F_0(x),F_0(y))\psi(x)\psi(y)}{(1-G_{c_i}(x))(1-G_{c_j}(y))}dM^i(x) dM^j(y).\label{eqn:JVSTAT}
\end{align}
Also, define
\begin{eqnarray}\label{eqn:Vstat}
V_n = \frac{1}{n^2}\sum_{i=1}^n\sum_{j=1}^n J(s_i, s_j), 
\end{eqnarray}
and notice that $V_n$ is the non-negligible part of $Z_n$, that is, $Z_n = V_n + o_p(n^{-1})$ for both, the deterministic and the random covariates models. In this section only, we denote by $Z_n'$ and $Z_n$, respectively, the test statistics under the deterministic and random covariates models.  More generally, we use an apostrophe (e.g., $V_n'$) to denote any term related to the deterministic covariates model. 

\begin{lemma}\label{lemma:coupling}
There exists a coupling of $Z_n$ and $Z_n'$ such that $Z_n-Z_n' = o_p(n^{-1})$. 
\end{lemma}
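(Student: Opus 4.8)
The plan is to build an explicit coupling of the two models by coupling the underlying data-generating mechanisms, and then to show that the discrepancy $Z_n - Z_n'$ is controlled by the discrepancy $V_n - V_n'$ plus the two $o_p(n^{-1})$ error terms coming from Theorem~\ref{thm:cleaning}. The essential point is that $V_n$ and $V_n'$ are both V-statistics with the \emph{same} symmetric kernel $J$ of Equation~\eqref{eqn:JVSTAT} applied to the triples $s_i = (X_i,\Delta_i,c_i)$; the only difference is whether the covariates $(c_i)_{i\in[n]}$ are the prescribed deterministic sequence or an i.i.d.\ Bernoulli$(\eta)$ sample. So first I would set up the coupling: on a common probability space, generate the latent survival and censoring times $(T_i,C_i)$ within each putative group together with all the auxiliary randomness, and generate an i.i.d.\ Bernoulli$(\eta)$ sequence $(c_i^{\mathrm{rand}})$; the deterministic model uses the fixed labels $c_i^{\det}$, the random model uses $c_i^{\mathrm{rand}}$. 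One then chooses the coupling so that the multiset of labels agrees as closely as possible — concretely, condition on $N_0 = \#\{i : c_i^{\mathrm{rand}} = 0\}$, and note $N_0/n \to \eta$ a.s.; on the event that $N_0 = n_0$ the two label vectors are permutations of one another, and since $J$ is symmetric in its two arguments and $V_n$ is a symmetric function of $(s_1,\dots,s_n)$, the two V-statistics coincide exactly after relabelling. The work is therefore entirely in handling the discrepancy $|N_0 - n_0|$, which is $O_p(\sqrt n)$ by the CLT.

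The second step is to quantify how much $V_n$ changes when we flip a single covariate $c_i$. Write $V_n = n^{-2}\sum_{i,j} J(s_i,s_j)$ and observe that changing $c_i$ alters only the $i$-th row and $i$-th column, i.e.\ $2n-1$ of the $n^2$ summands. Each summand $J(s_i,s_j)$ is a double stochastic integral of the bounded-after-truncation integrand $K(F_0,F_0)\psi\psi /((1-G_{c_i})(1-G_{c_j}))$ against the individual martingales $M^i, M^j$; using Conditions~\ref{con:conditions} and \ref{con:ExtraConditions} together with the bound $\psi(x) \le \min\{\eta^{-1}(1-G_1(x)), (1-\eta)^{-1}(1-G_0(x))\}$ (which cancels the censoring factors in the denominators), one shows $\E|J(s_i,s_j)| = O(1)$, uniformly, and similarly a uniform second-moment bound. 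Hence flipping one covariate changes $V_n$ by $O_p(n^{-2} \cdot n) = O_p(n^{-1})$ in an averaged sense; summing over the $|N_0-n_0| = O_p(\sqrt n)$ flips needed to reconcile the two label vectors gives a total change of order $n^{-1}\cdot n^{-1/2}\cdot\sqrt n$... — here one must be careful: a crude union bound over $\sqrt n$ flips each costing $n^{-1}$ gives only $n^{-1/2}$, which is not good enough. The fix is to exploit the averaging: the sum over the flipped rows/columns is itself a (degenerate-ish) sum of $O(n^{3/2})$ martingale double-integral terms normalised by $n^{-2}$, and a variance computation — using that the relevant cross terms have mean zero by the martingale property (orthogonality of $dM^i$ for distinct $i$) — shows the sum is $O_p(n^{3/2}\cdot n^{-2}\cdot (\text{something}\le 1)) = o_p(n^{-1})$ once one checks the second moment is $o(1)$ after the $n^{-2}$ scaling. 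I would carry this out by conditioning on the labels and on $N_0$, then bounding $\E[(V_n - V_n')^2 \mid N_0]$ directly.

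The final step is bookkeeping: from Theorem~\ref{thm:cleaning} we have $Z_n = V_n + o_p(n^{-1})$ and $Z_n' = V_n' + o_p(n^{-1})$ (the error terms there are established under each model separately, and the derivation goes through verbatim for the deterministic model and, by conditioning on $N_0$, for the random model), so $Z_n - Z_n' = (V_n - V_n') + o_p(n^{-1}) = o_p(n^{-1})$ under the coupling. I expect the main obstacle to be exactly the variance estimate in the second step — naively the coupling error looks like $n^{-1/2}$, and squeezing it down to $o_p(n^{-1})$ requires using the martingale orthogonality to kill the diagonal-free cross terms and carefully accounting for how many of the $O(n^{3/2})$ perturbed summands survive in expectation. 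A secondary, more routine obstacle is verifying the uniform moment bounds on $J(s_i,s_j)$ and on the perturbation sums purely from Conditions~\ref{con:conditions} and~\ref{con:ExtraConditions}, handling the behaviour near $\tau$ where the censoring survival functions vanish.
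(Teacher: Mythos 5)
Your proposal is correct and follows essentially the same route as the paper: the same per-group-list coupling with discrepancy $Q=|N_0-n_0|=O_p(\sqrt n)$, the same reduction to $V_n-V_n'$ via Theorem~\ref{thm:cleaning}, and the same key observation that the degeneracy of $J$ (conditional mean zero by the martingale property, Proposition~\ref{thm:PropJ}) kills the cross-covariances so that conditional variance bounds of order $Q^2/n^2$ and $Q(n-Q)/n^2$ beat the naive $n^{-1/2}$ union bound. The only cosmetic caveat is that the differing positions must be wholly redrawn from the other group's list (as your coupling construction in fact does) rather than literally having their covariate flipped, since the law of $(X_i,\Delta_i)$ differs across groups.
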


\begin{proof}

We construct a coupling such that $V_n-V_n' = o(n^{-1})$, which implies the desired result. The coupling is constructed as follows:

\begin{enumerate}
\item For $c \in \{0,1\}$, denote by $\mu_c$ the measure induced by the random pair $(X,\Delta)$ where $X=\min(T,C)$ and $\Delta=\ind_{\{T\leq C\}}$, and $T$ and $C$ are independent random variables such that $T\sim F_c$ and $C\sim G_c$. 

\item Let $\mathcal L_0 =(X_i,\Delta_i, 0)_{i\geq 1}$  and $\mathcal L_1 = (\bar  X_i, \bar \Delta_i, 1)_{i\geq 1}$ be such that all the triples in $\mathcal L_0$ and $\mathcal L_1$ are independent of everything, and $(X_i,\Delta_i) \sim \mu_0$ and $(\bar X_i, \bar \Delta_i) \sim \mu_1$ for all $i\geq 1$. Note that, since the times are sampled from  continuous distributions, the data points are unique almost surely.

\item In the deterministic covariates model we generate the data, namely $\mathcal D'_n$, by choosing the first $n_0$ elements of $\mathcal L_0$ and the first $n_1$ elements of $\mathcal L_1$.
\item In the random covariates model, we generate the data, namely $\mathcal D_n$, by sampling $N_0\sim\text{Binomial}(n,\eta)$ and then choosing the first $N_0$ elements of $\mathcal L_0$ and the first $n-N_0$ elements of $\mathcal L_1$.
\item Compute $V_n'$ and $V_n$ by using the datasets $\mathcal D'_n$ and $\mathcal D_n$, respectively.
\end{enumerate}

Note that the datasets $\mathcal D_n$ and $\mathcal D_n'$ differ in exactly $Q = |N_0-n_0|$ points. For ease of notation, denote by $s_i = (X_i, \Delta_i, c_i)$ the observations in $\mathcal{D}_n$ and by $s_i' = (X_i', \Delta_i', c_i')$ the observations in $\mathcal{D}_n'$. We sort the datasets in such a way that the first $Q$ elements of each dataset are not present in the other, and the remaining $n-Q$ elements in such a way that $s_j = s_j'$ for $j \geq Q+1$.  Then, a simple computation shows that
\begin{align}
V_n - V_n' &=D_n+R_n,
\end{align}
where
\begin{align}
D_n = \frac{1}{n^2}\sum_{i=1}^Q\sum_{j=1}^Q J(s_i,s_j)-J(s_i',s_j')\quad\text{and}\quad R_n = \frac{2}{n^2}\sum_{i=1}^Q \sum_{j=Q+1}^n J(s_i, s_j)-J(s_i', s_j).\label{eqn:randomvei49213}
\end{align}
We continue by proving that $nD_n=o_p(1)$ and $nR_n=o_p(1)$. To this end, we need the following intermediate result, whose proof is deferred to Appendix~\ref{appendix:DefeProofs}. 

\begin{proposition}\label{thm:PropJ}
Let $A:\R_+ \times \R_+ \to \R$ be a symmetric function such that, for every $c,c'\in\{0,1\}$, the following integrals are finite: $\int_{0}^{\tau} |A(x,x)|(1-G_{c}(x))dF_{c}(x)$ and $\int_{0}^{\tau}\int_{0}^{\tau} A(x,y)^2(1-G_{c}(x))(1-G_{c'}(y))dF_{c}(x)dF_{c'}(y)$. Define $J(s_i, s_j)$ by
\begin{align*}
J(s_i, s_j) = \int_{0}^{X_i} \int_{0}^{X_j} A(x,y)dM^i(x) dM^j(y)
\end{align*}
for all $i,j \in [n]$,  and denote by $\tilde \E$ the expectation conditioned on fixed covariate values $c_i$ and $c_j$. Then the following hold:
\begin{enumerate}[i)]
\item $\tilde \E (J(s_i, s_i)) = \int_{0}^{\tau} A(x,x)(1-G_{c_i}(x))dF_{c_i}(x)$,\label{propJ.1}
\item $\tilde \E\left( |J(s_i,s_i)|\right) <\infty$,\label{propJ.2} 
\item for all $i \neq j$, $\tilde \E(J(s_i, s_j)|s_j) = 0$, and\label{propJ.3}
\item for all $i \neq j$, $\tilde \E\left(J(s_i,s_j)^2\right) = \int_{0}^{\tau} \int_{0}^{\tau} A(x,y)^2(1-G_{c_i}(x))(1-G_{c_j}(y))dF_{c_i}(x)dF_{c_j}(y)$.  \label{propJ.4}
\end{enumerate}
\end{proposition}

In our applications of Proposition~\ref{thm:PropJ}, we will choose 
\begin{align*}
A(x,y) = \frac{1}{\eta^2}\left(\frac{-\eta}{1-\eta}\right)^{c_k+c_{\ell}}\frac{ K(F_0(x),F_0(y))\psi(x)\psi(y)}{(1-G_{k}(x))(1-G_{\ell}(y))}.
\end{align*}

We continue by proving $n|D_n|=o_p(1)$.  Observe that, by symmetry,
\begin{align}
n|D_n|&=\frac{1}{n}\left|\sum_{i=1}^Q\sum_{j=1}^QJ(s_i,s_j)-J(s_i',s_j')\right|\nonumber\\
&\leq\frac{1}{n}\left|\sum_{i=1}^QJ(s_i,s_i)-J(s_i',s_i')\right|+\frac{2}{n}\left|\sum_{i=1}^Q\sum_{j=i+1}^QJ(s_i,s_j)-J(s_i',s_j')\right|\label{eq:D_n}.
\end{align}
For the first term in Equation~\eqref{eq:D_n}, notice that by conditioning on $Q$, we obtain 
\begin{align*}
\E\left(\frac{1}{n}\left.\left|\sum_{i=1}^QJ(s_i,s_i)-J(s_i',s_i')\right|\right|Q\right)
&\leq \E \left(\frac{1}{n} \sum_{i=1}^Q |J(s_i,s_i)| \given  Q\right)+\E \left(\frac{1}{n} \sum_{i=1}^Q |J(s_i',s_i')| \given  Q\right)\\
&= \frac{Q}{n}\E(|J(s_1,s_1)||c_1=0)+\frac{Q}{n}\E(|J(s_1',s_1')||c_1'=1),
\end{align*}
where the last equality holds since, without loss of generality, the first $Q$ elements of $\mathcal{D}_n$ are chosen from $\mathcal{L}_0$ and the $Q$ elements of $\mathcal{D}_n'$ are chosen from $\mathcal{L}_1$. Additionally, notice that by Proposition~\ref{thm:PropJ}.\ref{propJ.2},  $\E(|J(s_1,s_1)||c_1=0)<\infty$ and $\E(|J(s_1',s_1')||c_1'=1)<\infty$. 

Also, note that
\begin{align}
\E\left(\frac{Q}{n}\right)^2\leq \E\left( \frac{Q^2}{n^2}\right) \leq 2\E\left(\left|\frac{N_0}{n}-\eta\right|^2 \right)+2\left|\eta -\frac{n_0}{n}\right|^2 = 2\var\left(\frac{N_0}{n^2}\right)+ 2\left|\eta-\frac{n_0}{n}\right|^2 \to 0  \label{eqn:Qto0}
\end{align}
where the last limit holds because $ \Var(N_0) = \eta (1-\eta) n$, and $n_0/n \to \eta$. Then, by using the previous result, we deduce
\begin{align*}
\E\left(\frac{1}{n}\left|\sum_{i=1}^QJ(s_i,s_i)-J(s_i',s_i')\right|\right)&=\E\left(\E\left(\frac{1}{n}\left.\left|\sum_{i=1}^QJ(s_i,s_i)-J(s_i',s_i')\right|\right|Q\right)\right)=O(1)\E\left(\frac{Q}{n}\right)\to0.
\end{align*}

We continue by proving that the second term in Equation~\eqref{eq:D_n} is $o_p(1)$. Let $\varepsilon>0$, then
\begin{align*}
\Prob\left(\frac{2}{n}\left|\sum_{i=1}^{Q-1}\sum_{j=i+1}^QJ(s_i,s_j)-J(s_i',s_j')\right|\geq \varepsilon\right)&=\E\left(\Prob\left(\frac{2}{n}\left|\left.\sum_{i=1}^{Q-1}\sum_{j=i+1}^QJ(s_i,s_j)-J(s_i',s_j')\right|\geq \varepsilon\right|Q\right)\right)\\
&\leq \frac{1}{\varepsilon^2}\E\left(\Var\left(\left.\frac{2}{n}\sum_{i=1}^{Q-1}\sum_{j=i+1}^QJ(s_i,s_j)-J(s_i',s_j')\right|Q\right)\right),
\end{align*}
and 
\begin{align*}
&\Var\left(\left.\frac{2}{n}\sum_{i=1}^{Q-1}\sum_{j=i+1}^QJ(s_i,s_j)-J(s_i',s_j')\right|Q\right)\\
&=\Var\left(\frac{2}{n}\sum_{i=1}^{Q-1}\sum_{j=i+1}^QJ(s_i,s_j) \given Q \right)
+\Var\left(\frac{2}{n}\sum_{i=1}^{Q-1}\sum_{j=i+1}^QJ(s_i,s_j) \given Q \right)\\
&=\frac{4}{n^2}\sum_{i=1}^{Q-1}\sum_{j=i+1}^Q\Var(J(s_i,s_j))+\Var(J(s_i',s_j'))\\
&=\frac{4}{n^2}\sum_{i=1}^{Q-1}\sum_{j=i+1}^{Q}\E(J(s_i,s_j)^2)+\E(J(s_i',s_j')^2),
\end{align*}
the first inequality holds because the variables $(s_i)_{i=1}^Q$ are independent of $(s_i')_{i=1}^{Q}$, given $Q$. The second equality follows from noticing that $\mathbb{C}ov(J(s_i,s_j),J(s_l,s_k))=0$ for all $(i,j) \neq (l,k)$, which is a simple computation that follows straightforwardly from Proposition~\ref{thm:PropJ}.\ref{propJ.3}.  By Proposition~\ref{thm:PropJ}.\ref{propJ.4}, there exists $C'>0$ such that for all $i\neq j$,  $\E(J(s_i,s_j)^2)\leq C'$ and $\E(J(s_i',s_j')^2)\leq C'$, then
\begin{align*}
\Var\left(\left.\frac{2}{n}\sum_{i=1}^{Q-1}\sum_{j=i+1}^QJ(s_i,s_j)-J(s_i',s_j')\right|Q\right)&\leq C\frac{Q^2}{n^2},
\end{align*}
for some $C>0$. We conclude
\begin{align*}
\Prob\left(\frac{2}{n}\left|\sum_{i=1}^{Q-1}\sum_{j=i+1}^QJ(s_i,s_j)-J(s_i',s_j')\right|\geq \varepsilon\right)
&\leq \frac{1}{\varepsilon^2}\E\left(\Var\left(\left.\frac{2}{n}\sum_{i=1}^{Q-1}\sum_{j=i+1}^QJ(s_i,s_j)-J(s_i',s_j')\right|Q\right)\right)\\
&\leq \frac{C}{\epsilon^2}\E(Q^2/n^2)\to0,
\end{align*}
where the last limit follows from Equation~\eqref{eqn:Qto0}.

Finally, we prove $n|R_n|=o_p(1)$. Let $\varepsilon>0$, then
\begin{align*}
\Prob(n|R_n| \geq \varepsilon) =\E\left(\Prob(n|R_n| \geq \varepsilon|Q) \right) \leq \frac{1}{\varepsilon^2}\E\left(\var(nR_n|Q)\right)
\end{align*} 
and
\begin{align*}
\var(nR_n|Q) &=\frac{4}{n^2}\sum_{i=1}^Q\sum_{j=Q+1}^n\var\left(J(s_i,s_j)|Q\right)+ \var\left(J(s_i',s_j)|Q\right)\leq C \frac{Q(n-Q)}{n^2},
\end{align*}
where the previous equality follows by Proposition~\ref{thm:PropJ}.\ref{propJ.3}, from which we deduce that all non-trivial covariances are 0, and the inequality holds for some constant $C>0$ due to Proposition~\ref{thm:PropJ}.\ref{propJ.4}, as indeed,  $\var\left(J(s_i,s_j)|Q\right)$ and $\var\left(J(s_i',s_j|Q)\right)$ are bounded for any $i,j\in[n]$, independently of the value of the covariates.  Since $\E(Q)/n \to 0$ we conclude that $nR_n = o_p(1)$ which, combined  with the fact that $nD_n = o_p(1)$, yields $n(V_n-V_n') = o_p(1)$.

\end{proof}

\subsection{Step III: Limit distribution under the null hypothesis}\label{sec:limitDistribProof}
In this section we prove Theorem~\ref{thm:FinalDistr} for the random covariates model. Recall that by  Lemma~\ref{lemma:coupling}, the result also holds for the deterministic covariates model. 

Recall from Equation~\eqref{eqn:Vstat} that
\begin{align}
\frac{n_0n_1}{n}Z_n&=\frac{n_0n_1}{n}V_n+o_p(1)=\frac{n_0n_1}{n}\frac{1}{n^2}\sum_{i=1}^n\sum_{j=1}^nJ(s_i,s_j)+o_p(1),\label{eqn:relatinZV}
\end{align}
where $J$ is defined in Equation~\eqref{eqn:JVSTAT} and $s_i=(X_i,\Delta_i,c_i)$ are independent and identically distributed random observations as the covariates are random, which is the main advantage of working in the random covariates model. Then, by using the previous expression, the asymptotic distribution of $(n_0n_1)/nZ_n$ can be obtained from a standard application of the theory of V-statistics. In particular, by Proposition~\ref{thm:PropJ}.\ref{propJ.3}, $\E(J(s_i,s_j)|s_j)=0$ for any $i\neq j$, hence $J$ is a so-called \emph{degenerate} V-statistic kernel, therefore Theorem~4.3.2 of \cite{Koroljuk94} yields
\begin{align}
\frac{1}{n}\sum_{i=1}^n\sum_{j=1}^nJ(s_i,s_j)\overset{\mathcal{D}}{\to}\E(J(s_1, s_1)) + \mathcal Y,\label{eqn:random393nfixvn}
\end{align}
where $\mathcal Y=\sum_{i=1}^{\infty} \lambda_i (\xi_i^2-1)$, $(\xi_i)_{i\geq 1}$ are independent and identically distributed standard normal random variables, and $(\lambda_i)_{i\geq 1}$ are the eigenvalues associated to the integral operator $T_J:L^2(X_1,\Delta_1,c_1) \to L^2(X_1,\Delta_1,c_1)$ defined by $(T_J\phi)(s) = \E_{s_1}(J(s,s_1)\phi(s_1))$, where $\E_{s_1}$ means expectation with respect $s_1=(X_1,\Delta_1,c_1)$. 
From Equation~\eqref{eqn:random393nfixvn} we deduce that
\begin{align*}
\frac{n_0n_1}{n}Z_n\overset{\mathcal{D}}{\to}\eta(1-\eta)\E(J(s_1, s_1)) + \eta(1-\eta)\mathcal Y.
\end{align*}

We finish by computing the asymptotic mean and variance of our test statistic.  By using Proposition~\ref{thm:PropJ}.\ref{propJ.1}, a simple computation shows that
\begin{align*}
\E(J(s_1,s_1))&= \frac{1}{\eta} \int_0^\tau \frac{K(F_0(x),F_0(x))\psi(x)^2}{1-G_{0}(x)}dF_0(x)+ \frac{1}{1-\eta} \int_0^\tau \frac{K(F_0(x),F_0(x))\psi(x)^2}{1-G_{1}(x)}dF_0(x),
\end{align*} 
and since $\mathcal Y$  has mean 0, the asymptotic mean of $(n_0n_1)/n Z_n$ is given by
\begin{align*}
\eta(1-\eta) \E(J(s_1,s_1)) = \int_0^\tau K(F_0(x),F_0(x))\psi(x)dF_0(x).
\end{align*}
Finally, from Proposition~\ref{thm:PropJ}.\ref{propJ.4}, we have that 
\begin{align*}
\var(\mathcal Y) = 2\sum_{i=1}^{} \lambda_i^2 = 2\E(J(s_1,s_2)^2) =\frac{2}{(\eta(1-\eta))^2}\int_0^\tau \int_0^\tau K(F_0(x),F_0(y))^2\psi(x)\psi(y)dF_0(x)dF_0(y),
\end{align*}
from which we deduce that the expression for the asymptotic variance is given by
\begin{align*}
\Var(\eta(1-\eta)\mathcal Y)&=2\int_0^\tau \int_0^\tau K(F_0(x),F_0(y))^2\psi(x)\psi(y)dF_0(x)dF_0(y).
\end{align*}

\section{Wild Bootstrap}
By following exactly the same steps used in the proof of Theorem~\ref{thm:FinalDistr}, the Wild Bootstrap test statistic $Z_n^{\mathcal W}$, given in  Equation~\eqref{eqn:wildBoostrapDefi}, can be rewritten as $Z_n^{\mathcal W} = V_n^{\mathcal W} + o_p(n^{-1})$, where $V_n^{\mathcal{W}} = \frac{1}{n^2}\sum_{i=1}^n \sum_{j=1}^n W_iW_j J(s_i, s_j)$, and $J$ is the kernel defined in Equation \eqref{eqn:JVSTAT}.  Under the random covariates model $V_n$ is a degenerate $V$-statistic, and thus, Theorem~\ref{thm:wildbootstrap} is a direct application of Theorem 3.1 of \citet{dehling94random}. This result can be extended to the deterministic  covariates model   by using the same coupling of Lemma~\ref{lemma:coupling} (however, randomness is taken over $\mathcal W$, as we are conditioning on the sequence of data points).

\section{Proof of Theorem~\ref{thm:convergenceAlternative} and Corollary~\ref{thm:obnibus}}\label{sec:appendixAlter}

In this section we mainly work under the alternative hypothesis, therefore we need to consider this fact when computing some limit results. For instance, in this more general setting the limit of $n/(n_0n_1)L(x)$ is slightly different to its limit under the null hypothesis. Indeed, by Proposition \ref{prop:supConverSH}, for every fixed $x$, it holds
\begin{align}\label{eqn:psistarprop1}
\frac{n}{n_0n_1}L(x) \to\psi^*(x)S_1(x)S_0(x)\quad a.s.,
\end{align}
as the number of observations tends to infinity, and by Proposition \ref{prop:ProbBounds}
\begin{align}\label{eqn:psistarprop2}
\frac{n}{n_0n_1}L(x) =O_p(1) \psi^*(x)S_1(x)S_0(x),
\end{align}
uniformly for all $x \leq \tau_n$. In the previous results, $\psi^*(x)$ is the function defined in Equation~\eqref{eqn:psistar} which satisfies the following bound
\begin{align}
\psi^*(x) \leq \min\left \{\frac{1-G_0(x)}{\eta S_1(x)}, \frac{1-G_1(x)}{(1-\eta)S_0(x)}\right\}.\label{eqn:random1239fj448dax}
\end{align}
Other objects that may vary their limits in  this more general setting are the pooled Nelson-Aalen estimator $\widehat \Lambda$ and Kaplan-Meier statistic $\widehat F$, and in general, any statistic constructed using  the survival times $(T_i)_{i\in[n]}$  (as now, $F_0$ and $F_1$ may or may not be equal). Having this in mind, we proceed to prove  Theorem~\ref{thm:convergenceAlternative} and Corollary~\ref{thm:obnibus}.

To proof Theorem~\ref{thm:convergenceAlternative}, we require the following intermediate result:
\begin{lemma}\label{lemma:convergenceAlteLem}
Assume that Conditions~\ref{con:conditions} and \ref{con:ExtraConditions} hold. Then, 
$\|\phi_c^n-\phi_c\|_{\mathcal H} \overset{\Prob}{\to} 0$ for $c \in \{0,1\}$.
\end{lemma}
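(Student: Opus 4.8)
I will prove the statement for $c=0$; the case $c=1$ is symmetric. The starting point is the Doob--Meyer decomposition of the Nelson--Aalen estimator: the individual martingales $M^i$, and hence $M_0$, are $(\mathcal F_x)$-martingales irrespective of whether $F_0=F_1$, so $d\widehat\Lambda_0(x)=dM_0(x)/Y_0(x)+d\Lambda_0(x)$ on $\{Y_0>0\}$. Substituting this into the definition of $\phi_0^n$ gives $\phi_0^n=A_n+B_n$ with
\begin{align*}
A_n(\cdot)=\frac{n}{n_0n_1}\int_0^{\tau_n}K(\widehat F(y-),\cdot)\frac{L(y)}{Y_0(y)}\,dM_0(y),\qquad B_n(\cdot)=\frac{n}{n_0n_1}\int_0^{\tau_n}K(\widehat F(y-),\cdot)L(y)\,d\Lambda_0(y).
\end{align*}
The plan is to show $\|A_n\|_{\mathcal H}=o_p(1)$ and $\|B_n-\phi_0\|_{\mathcal H}=o_p(1)$.

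For the martingale term, the reproducing property gives that $\|A_n\|_{\mathcal H}^2$ equals the double martingale integral $(n/(n_0n_1))^2\int_0^{\tau_n}\int_0^{\tau_n}K(\widehat F(x-),\widehat F(y-))\frac{L(x)L(y)}{Y_0(x)Y_0(y)}\,dM_0(x)\,dM_0(y)$, whose integrand is a predictable positive-definite process (the factor $L(x)/Y_0(x)=Y_1(x)/Y(x)$ is non-negative and left-continuous, and $K$ is positive-definite). Theorem~\ref{thm:newGeneralTheorem} then reduces the claim to showing that the diagonal term $(n/(n_0n_1))^2\int_0^{\tau_n}K(\widehat F(x-),\widehat F(x-))\frac{L(x)^2}{Y_0(x)}\,d\Lambda_0(x)$ is $o_p(1)$. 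Using $\tfrac{n}{n_0n_1}L(x)=O_p(1)\psi^*(x)S_1(x)S_0(x)$ uniformly on $[0,\tau_n]$ (Equation~\eqref{eqn:psistarprop2}) together with $\tfrac{n}{n_0n_1}\tfrac{L(x)}{Y_0(x)}=\tfrac{n}{n_0n_1}\tfrac{Y_1(x)}{Y(x)}=O(1/n)$, this diagonal term is $O_p(1/n)\int_0^{\tau_n}K(\widehat F(x-),\widehat F(x-))\psi^*(x)S_1(x)\,dF_0(x)$, and the remaining integral is $O_p(1)$ by splitting at some $t<\tau$: on $[0,t]$ use the uniform convergence $\widehat F\to F$ (Proposition~\ref{prop:supConverSH}.\ref{prop:supConverSH1}) and continuity of $K$ together with Condition~\ref{con:conditions}(iii); on $[t,\tau]$ use $\psi^*(x)S_1(x)\le\eta^{-1}(1-G_0(x))$ and Condition~\ref{con:ExtraConditions}(\ref{con:ExtraConditions3}). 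Hence $\|A_n\|_{\mathcal H}=o_p(1)$.

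For $B_n$, I will use repeatedly that for a non-negative finite measure $\mu$, Cauchy--Schwarz and the reproducing property give $\|\int K(q(y),\cdot)\mu(dy)\|_{\mathcal H}\le\int\sqrt{K(q(y),q(y))}\,\mu(dy)$. Writing $d\nu_0(y)=\psi^*(y)S_1(y)\,dF_0(y)$ and noting that $\tfrac{n}{n_0n_1}L(y)\,d\Lambda_0(y)=O_p(1)\,d\nu_0(y)$, I split $\|B_n-\phi_0\|_{\mathcal H}$ into three pieces: (i) replacing $K(\widehat F(y-),\cdot)$ by $K(F(y),\cdot)$, bounded by $O_p(1)\int_0^{\tau_n}\sqrt{g_n(y,y)}\,d\nu_0(y)$ with $g_n(y,y)=K(\widehat F(y-),\widehat F(y-))-2K(\widehat F(y-),F(y))+K(F(y),F(y))\ge0$; (ii) replacing $\tfrac{n}{n_0n_1}L(y)$ by $\psi^*(y)S_1(y)S_0(y)$, bounded by $\int_0^{\tau_n}\sqrt{K(F(y),F(y))}\,\big|\tfrac{n}{n_0n_1}L(y)-\psi^*(y)S_1(y)S_0(y)\big|\,d\Lambda_0(y)$; and (iii) the tail $\|\int_{\tau_n}^\tau K(F(y),\cdot)\,d\nu_0(y)\|_{\mathcal H}\le\int_{\tau_n}^\tau\sqrt{K(F(y),F(y))}\,d\nu_0(y)$. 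Piece (iii) tends to $0$ a.s.\ because $\tau_n\to\tau$ a.s.\ and $\int_0^\tau\sqrt{K(F(y),F(y))}\,d\nu_0(y)<\infty$, which follows from $\sqrt a\le1+a$, $d\nu_0\le\eta^{-1}(1-G_0)\,dF_0$ and Condition~\ref{con:conditions}(iii). Piece (ii) tends to $0$ by dominated convergence on sets of arbitrarily large probability, using the pointwise convergence $\tfrac{n}{n_0n_1}L(y)\to\psi^*(y)S_1(y)S_0(y)$ (Equation~\eqref{eqn:psistarprop1}) and the integrable dominating function $O_p(1)\sqrt{K(F(y),F(y))}\,d\nu_0(y)$. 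For piece (i) I split at $t<\tau$: on $[0,t]$, $\sup_{y\le t}|\widehat F(y-)-F(y)|\to0$ a.s.\ and continuity of $K$ give $\sup_{y\le t}g_n(y,y)\to0$; on $[t,\tau]$, the elementary bound $\sqrt{g_n(y,y)}\le\sqrt{K(\widehat F(y-),\widehat F(y-))}+\sqrt{K(F(y),F(y))}$ combined with $\sqrt a\le1+a$ reduces matters to Condition~\ref{con:ExtraConditions}(\ref{con:ExtraConditions3}) and the finiteness established above, so letting $t\to\tau$ finishes the argument.

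The main obstacle is the uniform control of all these integrals near the right endpoint $\tau$, where $\widehat F$ is least well-behaved; this is precisely where Condition~\ref{con:ExtraConditions} is invoked, and the ``split at $t$, then $t\to\tau$'' device must be executed carefully in each of the pieces above. The martingale part, by contrast, is a routine application of Theorem~\ref{thm:newGeneralTheorem} once the order bookkeeping for $\tfrac{n}{n_0n_1}L$ and $Y_0/n$ is in place.
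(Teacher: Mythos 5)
Your proposal is correct and follows essentially the same route as the paper: your $A_n$ and $B_n$ are exactly the paper's $\phi_0^n-\widehat\phi_0$ and $\widehat\phi_0$, your pieces (i)--(iii) reproduce the paper's chain $\|\widehat\phi_0-\phi_0'\|_{\mathcal H}+\|\phi_0'-\phi_0\|_{\mathcal H}$, and the key tools (Theorem~\ref{thm:newGeneralTheorem} for the martingale part, the split-at-$t$-then-$t\to\tau$ device with Condition~\ref{con:ExtraConditions}, and dominated convergence on sets of large probability) coincide. The only difference is cosmetic: you bound the non-martingale norms via the Bochner-integral triangle inequality $\|\int K(q(y),\cdot)\,\mu(dy)\|_{\mathcal H}\le\int\sqrt{K(q(y),q(y))}\,\mu(dy)$ together with $\sqrt{a}\le 1+a$, whereas the paper expands the squared norm into a double integral and uses $2K(s,t)\le K(s,s)+K(t,t)$; both reduce to the same diagonal integrals and the same conditions.
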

\begin{proof}

We prove the result for $c=0$, as the proof for $c=1$ is exactly the same. We consider the deterministic covariates model since for the random covariates model we can condition on the number of data points with covariate equal to 0.

Define the intermediate embeddings, $\widehat \phi_0\in \mathcal H$ and $ \phi'_0 \in \mathcal H$, by
\begin{align*}
\widehat \phi_0(\cdot)=\frac{n}{n_0n_1}  \int_0^{\tau_n} K(\widehat F(y-), \cdot)L(y)d\Lambda_0(y),\text{ and}\quad \phi_0'(\cdot)&=\frac{n}{n_0n_1}  \int_0^{\tau_n} K( F(y), \cdot)L(y)d\Lambda_0(y),
\end{align*}
and recall that $\phi_0^n$ and $\phi_0$ (introduced in~Equations \eqref{eqn:Embeddingsn} and \eqref{eqn:fancyEmbedding1}, respectively) are given by
\begin{align*}
\phi_0^n(\cdot) =  \frac{n}{n_0n_1}  \int_0^{\tau_n} K(\widehat F(y-), \cdot)L(y)d\widehat \Lambda_0(y)\quad\text{and}\quad \phi_{0}(\cdot) = \int_{0}^\tau K(F(y),\cdot)\psi^*(y)S_1(y)dF_0(y).
\end{align*}
Notice that in this setting the pooled distribution $F$ is not necessarily equal to $F_0$. 

Clearly 
\begin{align*}
\|\phi_0^n-\phi_0\|_{\mathcal H} \leq \|\phi_0^n - \widehat \phi_0\|_{\mathcal H}+ \|\widehat  \phi_0 - \phi'_0\|_{\mathcal H}+\|\phi_0' - \phi_0\|_{\mathcal H}.
\end{align*}
We will prove that the three terms in the right-hand-side of the previous equation are $o_p(1)$. First we prove $\|\phi_0^n - \widehat \phi_0\|_{\mathcal H}=o_p(1)$. Recall that the Nelson-Aalen estimator satisfies $(d\widehat \Lambda_0 - d\Lambda_0)(x) = dM_0(x)/Y_0(x)$, then
\begin{align}
\|\phi_0^n - \widehat \phi_0\|_{\mathcal H}^2 &= \left(\frac{n}{n_0n_1} \right)^2 \left\| \int_0^{\tau_n}K(\widehat F(x-), \cdot)L(x)\frac{dM_0(x)}{Y_0(x)}\right\|^2_{\mathcal{H}}= \frac{1}{n_0^2}\int_0^{\tau_n}\int_0^{\tau_n} h(x,y)dM_0(x)dM_0(y),\label{eqn:random123fwef4}
\end{align}
where 
\begin{align*}
h(x,y)&=K(\widehat F(x-), \widehat F(y-))\frac{(Y_1(x)/n_1)(Y_1(y)/n_1)}{(Y(x)/n)(Y(y)/n)}.
\end{align*}
We prove that Equation \eqref{eqn:random123fwef4} is $o_p(1)$ by using Theorem \ref{thm:newGeneralTheorem}. Observe that  $h(x,y)$ is symmetric and positive-definite, $(h(x,y))_{(x,y)\in C}$ is $\mathcal{P}$ measurable by Proposition \ref{coll:h predictable}, and $(h(x,x))_{x\geq 0}$ is predictable with respect to $(\mathcal{F}_x)_{x\geq 0}$. Then, by Theorem \ref{thm:newGeneralTheorem}, the result is deduced from checking that $\frac{1}{n_0^2}\int_0^{\tau_n}h(x,x)d\langle M_0\rangle(x) = o_p(1).$
 By using that $Y_1(x)/Y(x)\leq 1$, $n/n_1 = O(1)$, and that $Y_0(x)/n_0 = O_p(1)S(x)(1-G_0(x))$ uniformly for all $x \leq \tau_n$, we have
 \begin{align*}
    \frac{1}{n_0^2}\int_0^{\tau_n}h(x,x)d\langle M_0\rangle(x)&= \frac{1}{n_0^2}\int_0^{\tau_n}K(\widehat F(x-), \widehat F(x-))\frac{(Y_1(x)/n_1)^2}{(Y(x)/n)^2}Y_0(x)d\Lambda_0(x)\nonumber\\
    &= O(1) \frac{1}{n_0}\int_0^{\tau_n}K(\widehat F(x-), \widehat F(x-))(1-G_0(x))dF_0(x).
 \end{align*}

The previous term is tends to 0 since $\int_0^{\tau_n}K(\widehat F(x-), \widehat F(x-))(1-G_0(x))dF_0(x)<\infty$  by Condition~\ref{con:ExtraConditions}.

We continue by proving $\|\widehat \phi_0 - \phi'_0\|_{\mathcal H}=o_p(1)$.  Let $t<\tau$ and $\epsilon>0$, then
\begin{align*}
&\Prob\left(\|\widehat \phi_0 - \phi'_0\|_{\mathcal H}^2>\epsilon\right)\\
&\leq \Prob\left(\left\|\frac{n}{n_0n_1}\int_0^th(x,\cdot)L(x)d\Lambda_0(x)\right\|_{\mathcal H}^2>\epsilon/2\right)+\Prob\left(\left\|\frac{n}{n_0n_1}\int_t^{\tau_n}h(x,\cdot)L(x)d\Lambda_0(x)\right\|_{\mathcal H}^2>\epsilon/2\right)
\end{align*}
where $h(x,\cdot)=(K(\widehat F(x-),\cdot)-K(F(x),\cdot))$. By taking limsup when $n$ grows to infinity, the first probability is equal to zero since $h(x,y) = \langle h(x,\cdot),h(y,\cdot)\rangle_{\mathcal{H}}\to 0 $ uniformly for all $x,y\leq t$. Then,
\begin{align*}
\limsup_{n\to\infty}\Prob\left(\|\widehat \phi_0 - \phi'_0\|_{\mathcal H}^2>\epsilon\right)
&\leq \limsup_{n\to\infty}\Prob\left(\left\|\frac{n}{n_0n_1}\int_t^{\tau_n}h(x,\cdot)L(x)d\Lambda_0(x)\right\|^2_{\mathcal H}>\epsilon/2\right).
\end{align*}
Finally, by taking limit when $t\to\tau$, and by using the definition of $h$, we obtain
\begin{align*}
\limsup_{n\to\infty}\Prob\left(\|\widehat \phi_0 - \phi'_0\|_{\mathcal H}^2>\epsilon\right)
&\leq \lim_{t\to\tau}\limsup_{n\to\infty}\Prob\left(\left\|\frac{n}{n_0n_1}\int_t^{\tau_n}K(\widehat F(x-),\cdot)L(x)d\Lambda_0(x)\right\|_{\mathcal H}>\sqrt{\epsilon/2}/2\right)\\
&+\lim_{t\to\tau}\limsup_{n\to\infty}\Prob\left(\left\|\frac{n}{n_0n_1}\int_t^{\tau_n}K(F(x),\cdot)L(x)d\Lambda_0(x)\right\|_{\mathcal H}>\sqrt{\epsilon/2}/2\right).
\end{align*}
We prove that each of the terms in the right-hand-side of the previous Equation are 0. For the first one, from Propositions~\ref{prop:ProbBounds}.\ref{prop:ProbBound2} and~\ref{prop:ProbBounds}.\ref{prop:ProbBound3}, we have that $(n/(n_0n_1))L(x) =\beta^{-1}\psi^*(x)S_0(x)S_1(x)$ for all $x\leq \tau_n$   with probability at least $1-p(\beta)$ with $p(\beta) \to 0$ as $\beta \to 0$. Then, with probability  at least $1-p(\beta)$
 we have
\begin{align*}
    \left\|\frac{n}{n_0n_1}\int_t^{\tau_n}K(F(x),\cdot)L(x)d\Lambda_0(x)\right\|_{\mathcal H}^2 &\leq \beta^{ -2}\int_t^{\tau_n}\int_t^{\tau_n}K(\widehat F(x-), \widehat F(y-))S_1(x)\psi^*(x)S_1(y)\psi^*(y)dF_0(x)dF_0(y)\nonumber\\
    &\leq 2\beta^{-2}\int_t^{\tau_n}\int_t^{\tau_n}K(\widehat F(x-), \widehat F(x-))S_1(x)\psi^*(x)S_1(y)\psi^*(y)dF_0(x)dF_0(y)\nonumber\\
    &\leq 2\beta^{-2} \int_t^{\tau_n}S_1(y)\psi^*(y)dF_0(y) \int_t^{\tau_n}K(\widehat F(x-), \widehat F(x-))S_1(x)\psi^*(x)dF_0(x)\nonumber\\
    &\leq \frac{2}{\eta^2}\beta^{-2}\int_t^{\tau_n}K(\widehat F(x-), \widehat F(x-))dF_0(x),
\end{align*}
where the second inequality holds by noting that $2K(s,t)\leq K(s,s)+K(t,t)$ for all $s,t$ since $K$ is positive definite, and the fourth inequality holds Equation~\eqref{eqn:random1239fj448dax}. Finally,

\begin{align}
    &\Prob\left(\left\|\frac{n}{n_0n_1}\int_t^{\tau_n}K(\widehat F(x-),\cdot)L(x)d\Lambda_0(x)\right\|_{\mathcal H}>\sqrt{\epsilon/2}/2\right) \nonumber\\
    &\quad \leq p(\beta)+\Prob\left(\int_t^{\tau_n}K(\widehat F(x-), \widehat F(x-))dF_0(x)\geq  \beta^2\eta^2 \sqrt{\epsilon/2}/4\right)
\end{align}
then, by Condition~\ref{con:ExtraConditions}, we conclude that for all $\beta \in (0,1)$
\begin{align}
    \limsup_{t\to\tau}\limsup_{n\to\infty}\Prob\left(\left\|\frac{n}{n_0n_1}\int_t^{\tau_n}K(\widehat F(x-),\cdot)L(x)d\Lambda_0(x)\right\|_{\mathcal H}>\sqrt{\epsilon/2}/2\right)\leq p(\beta),
\end{align}
then taking $\beta \to 0$ yields the desired result. A similar argument shows that  
\begin{align*}
    \lim_{t\to\tau}\limsup_{n\to\infty}\Prob\left(\left\|\frac{n}{n_0n_1}\int_t^{\tau_n}K(F(x),\cdot)L(x)d\Lambda_0(x)\right\|_{\mathcal H}>\sqrt{\epsilon/2}/2\right) = 0,
\end{align*}
concluding that $\|\widehat \phi_0-\phi_0'\|_{\mathcal H}^2 = o_p(1).$

Finally, we prove that $\| \phi_0' - \phi_0\|_{\mathcal H}=o_p(1)$. Observe that by triangular inequality,
\begin{align*}
&\|\phi_0'-\phi_0\|_{\mathcal H}\\
&\leq\left\|\int_0^{\tau_n}K(x,\cdot)\left(\frac{n}{n_0n_1}L(x)-\psi^*(x)S_1(x)S_0(x)\right)d\Lambda_0(x)\right\|_{\mathcal{H}}+\left\|\int_{\tau_n}^\tau K(F(y),\cdot)\psi^*(y)S_1(y)dF_0(y)\right\|_{\mathcal{H}}.
\end{align*}
Observe that
\begin{align*}
\left\|\int_0^{\tau_n}K(x,\cdot)\left(\frac{n}{n_0n_1}L(x)-\psi^*(x)S_1(x)S_0(x)\right)d\Lambda_0(x)\right\|_{\mathcal{H}}^2&=\int_0^{\tau_n} \int_0^{\tau_n} K(x,y)\sigma(x)\sigma(y)\frac{dF_0(x)}{S_0(x)}\frac{dF_0(y)}{S_0(y)}\\
&=o_p(1)
\end{align*}
where $\sigma(x) = \psi^*(x)S_0(x)S_1(x)-n/(n_0n_1)L(x)$.  The second equality follows from the dominated convergence theorem: by Equation~\eqref{eqn:psistarprop1}, we have $\sigma(x) \to 0$ for all $x\leq \tau$ almost surely, also, by Equations \eqref{eqn:psistarprop2} and \eqref{eqn:random1239fj448dax}, it holds
$\sigma(x)/S_0(x) = O_p(1)\psi^*(x)S_1(x) = O_p(1)(1-G_0(x))$ uniformly for all $x \leq \tau_n$. From Condition~\ref{con:conditions} we have that $\int_0^\tau \int_0^\tau |K(F(x),F(y))|(1-G_0(x))(1-G_0(y))dF_0(x)dF_0(y)<\infty$, then we can apply the dominated convergence theorem on sets of probability as high as desired, obtaining that $\| \phi_0' - \phi_0\|_{\mathcal H} = o_p(1)$.

Finally, observe that
\begin{align*}
\left\|\int_{\tau_n}^\tau K(F(y),\cdot)\psi^*(y)S_1(y)dF_0(y)\right\|_{\mathcal{H}}^2&=\int_{\tau_n}^\tau\int_{\tau_n}^\tau K(F(x),F(y))\psi^{*}(x)\psi^{*}(y)S_1(x)S_1(y)dF_0(x)dF_0(y)\nonumber\\
&= O(1)\int_{\tau_n}^\tau\int_{\tau_n}^\tau K(F(x),F(y))(1-G(x))(1-G(y))dF_0(x)dF_0(y)\nonumber\\
&= o_p(1)
\end{align*}
where the second equality is due to Equation~\eqref{eqn:random1239fj448dax}, and the third is due Condition~\ref{con:conditions} together with the fact that $\tau_n \to \tau$.

\end{proof}
\begin{proof}[Proof of Theorem~\ref{thm:convergenceAlternative}]
Notice that $\sqrt{Z_n} = \|\phi_0^n-\phi_1^n\|_{\mathcal H}$. Then, by using the triangle inequality, we deduce
\begin{align*}
\|\phi_0-\phi_1\|-\|\phi_0-\phi_0^n\|_{\mathcal H}-\|\phi_1-\phi_1^n\|_{\mathcal H}  \leq \sqrt{Z_n} \leq \|\phi_0^n-\phi_0\|_{\mathcal H}+\|\phi_0-\phi_1\|_{\mathcal H}+\|\phi_1-\phi_1^n\|_{\mathcal H},
\end{align*} 
hence, a straightforward application of Lemma~\ref{lemma:convergenceAlteLem} deduces $\sqrt{Z_n} \overset{\Prob}{\to} \|\phi_0-\phi_1\|_{\mathcal H},$
and by squaring both sides, we obtain the desired result.
\end{proof}

\begin{proof}[Proof of Corollary~\ref{thm:obnibus}]
Theorem~\ref{thm:convergenceAlternative} yields 
$\sqrt{Z_n} \overset{\Prob}{\to} \|\phi_0-\phi_1\|_{\mathcal H}$, therefore it is enough to prove that $\phi_0 \neq \phi_1$. Recall from Equation \eqref{eqn:fancyEmbedding1} that for $c\in \{0,1\}$, $\phi_c$ is defined by
\begin{align*}
\phi_c(\cdot) = \int_0^{\tau} K(F(y), \cdot)d\nu_c(y)
\end{align*}
where the measures $\nu_c$ is defined in Equation~\eqref{eqn:measuresInf}. Let $F^{-1}(x) = \inf\{t \geq 0: F(t) = x\}$, which is differentiable almost everywhere since $F$ is the cumulative distribution function of a continuous random variable. Then, by performing a change of variables, we obtain
\begin{align*}
\phi_c(\cdot) = \int_0^1 K(x,\cdot)\frac{1}{f(F^{-1}(x))}d\nu_c(F^{-1}(x)),
\end{align*}
where $f$ is the density function associated with $F$. We conclude that $\phi_c$ is the mean kernel embedding (see Equation~\eqref{eqn:defiMKE}) of the measure $\pi_c(\cdot)$ on $[0,1)$ given by $\pi_c(A) = \int_A \frac{1}{f(F^{-1}(x))}d\nu_c(F^{-1}(x))$.

As the kernel $K$ is continuous and c-universal (recall the definition from Section~\ref{sec:RHKSintro}), the mean kernel embedding of finite signed measures is injective, hence, we just need to check that $\pi_0 \neq \pi_1$ and that both measures are finite.

First, we prove that $F_0 \neq F_1$ implies that $\pi_0 \neq \pi_1$. We proceed by contradiction, i.e., we assume that   $\pi_0 = \pi_1$. Since $F$ and $F^{-1}$ are  non-decreasing functions, $\pi_0 = \pi_1$ implies  $\nu_0= \nu_1$. Observe that $\nu_0$ and $\nu_1$ are continuous measures with respective densities $\psi^*(x)S_1(x)f_0(x)$ and $\psi^*(x)S_0(x)f_1(x)$, where $f_0$ and $f_1$ denote the density functions of $F_0$ and $F_1$, respectively. Since $\nu_0 = \nu_1$, their densities are equal. Let $\tau_0 = \sup\{t: S_0(t)>0\}$ and $\tau_1= \sup\{t: S_1(t)>0\}$, and,  without lost of generality, assume that $\tau_0\leq \tau_1$. By hypothesis, for all $x<\tau_0$ we have that $G_0(x)<1$ and $G_1(x)<1$, hence $\psi^*(x) > 0$. Using that $\psi^*(x)S_1(x)f_0(x) = \psi^*(x)S_0(x)f_1(x)$, we deduce that for all $x <\tau_0$, $S_1(x)f_0(x) = S_0(x)f_1(x)$; thus, $\lambda_0(x) = \lambda_1(x)$, and therefore, by the bijection between hazard and density functions, we have $F_0(x) = F_1(x)$. Since $F_0(\tau_0) =1$, we conclude that $F_0(x) = F_1(x)$ for all $x \in \R_+$, which is a contradiction.

Finally, a simple computation verifies that $\nu_c$ is a finite measure. Indeed, for $c=0$, we have $\int_0^{\tau} d\nu_0(x)= \int_0^{\tau} \psi^*(x)S_1(x)f_0(x)dx$, and from Equation~\eqref{eqn:random1239fj448dax}
it holds that $\psi^*(x)S_1(x)f_0(x) \leq C f_0$, where $C>0$, therefore $\nu_0$ is a finite measure. The same holds for $\nu_1$, concluding the result.

\end{proof}

\section{Deferred proofs}\label{appendix:DefeProofs}
\subsection{Proof of Theorem~\ref{thm:newGeneralTheorem}}

To ease notation, we write $h$ and $W$ instead of $h_n$ and $W_n$, respectively. Define the process $R$ by
 \begin{align}
     R(t)&=\int_0^t \int_0^t h(x,y)dW(x)dW(y),\quad t\geq 0.
 \end{align}
 Observe that $R(0) = 0$ and that  $R(t)\geq 0$ for every $t\geq 0$  since $h$ is positive definite. We will prove that exists an increasing predictable process $A$ that majorises $R$ in the sense that $\E(R(T)) \leq  \E(A(T))$ for every finite stopping time, and then, we will apply the Lenglart-Rebolledo inequality to $R$ and $A$, which tells us that for every $\varepsilon, \delta>0$, $\Prob(\sup_{t\leq T} R(t)\geq \varepsilon)\leq \varepsilon/\delta + \Prob(A(T)\geq \delta)$ for any stopping time $T$ (even unbounded).
 
 To find the process $A$, we start by noticing that $R$ can be decomposed as $R(t)=D(t)+2Z(t)$, where the processes  $D$ and $Z$ are given by 
 \begin{align*}
     D(t) &= \int_0^t h(x,x)d[W](x), \text{ and}\quad Z(t)= \int_0^t\int_{(0,y)} h(x,y)dW(x)dW(y),
 \end{align*}
respectively. We make two observations. First, since the process $(h(x,y)_{(x,y)\in C}$ is $\mathcal P$-measurable, then $(Z(t))_{t\geq 0}$ is an $(\mathcal F_t)$-martingale  by Theorem~\ref{thm:basicDoubleMartingale}. Additionally, by the Optional Stopping Theorem, $Z(T) = 0$ for every bounded stopping time $T$. Second, since $h$ is positive definite, $h(t,t)\geq 0$ and $(h(x,x))_{x\geq 0}$ is $(\mathcal F_x)$-predictable, then the process $(D(t))_{t\geq 0}$ is increasing and adapted with $D(0) = 0$ and it is compensated by the process $A$ given by
 \begin{align}
     A(t) = \int_0^t h(Q(x),Q(x))d\langle W \rangle
 \end{align}
 which is increasing and predictable with respect to $(\mathcal F_t)_{t\geq 0}$. Since $D-A$ is a martingale, we have $\E(D(T)) = \E(A(T))$ for any bounded stopping time $T$. We conclude that for every bounded stopping time $T$, it holds that
 $\E(R(T)) = \E(A(T))$, i.e., $A$ majorises $R$. Then, if $A(\tau_n)=o_p(1)$, the Lenglart-Rebolledo inequality concludes that $R(\tau_n) = o_p(1)$, and if $A(\tau_n) = O_p(1)$, then the Lenglart-Rebolledo inequality yields $R(\tau_n) = O_p(1)$.

\subsection{Proof of Proposition~\ref{thm:PropJ}}
$\quad$\\
\textbf{Item \ref{propJ.1}:}
Observe that
\begin{align*}
\bar \E\left( J(s_i, s_i)\right) &= \bar{\E}\left( \int_{0}^{X_i} \int_{0}^{X_i} A(x,y)dM^i(x) dM^i(y)\right)=\bar{\E}(D(X_i))+2\bar{\E}(Z(X_i)),
\end{align*}
 where $D(t)$ and $Z(t)$ are defined by
\begin{align*}
D(t) = \int_0^{t} A(x,x)d[M^i](x)\quad\text{and}\quad Z(t)= \int_0^{t}\int_{(0,y)} A(x,y)dM^i(x)dM^i(y),
\end{align*}
and  notice that $d[M^i](x)=dN^i(x)$ since the survival times are continuous. 

We begin by noticing that
\begin{align*}
\tilde{\E}(D({X_i})) = \tilde \E\left(\int_0^{X_i} A(x,x)dN^{i}(x)\right)=\tilde \E\left(A(X_i,X_i)\Delta_i\right)= \int_{0}^\tau A(x,x)(1-G_{c_i})dF_{c_i}(x).
\end{align*}
Also, by Theorem~\ref{thm:basicDoubleMartingale} and conditioned on the value of the covariate, $Z(t)$ is a zero-mean $(\mathcal{F}_t)$-martingale. Then the optional stopping time theorem yields $\tilde \E(Z(X_i))= \tilde\E (Z(0))=0$, and thus 
\begin{align*}
\bar \E\left( J(s_i, s_i)\right) &=\int_{0}^\tau A(x,x)(1-G_{c_i})dF_{c_i}(x).
\end{align*}

\textbf{Item \ref{propJ.2}:} Observe that 
\begin{align*}
|J(s_i,s_i)|\leq \Delta_i |A(X_i,X_i)|+2\Delta_i\left|\int_0^{X_i}A(X_i,x)d\Lambda_{c_i}(x)\right|+\left|\int_0^{X_i}\int_0^{X_i}A(x,y)d\Lambda_{c_i}(x)d\Lambda_{c_i}(y)\right|.
\end{align*}
We continue by proving that all the terms on the right-hand-side of the previous equation have positive expectation. 

For the first term we have
\begin{align}
\tilde \E\left(\Delta_i |A(X_i,X_i)|\right)= \int_{0}^\tau |A(x,x)|(1-G_{c_i}(x))dF_{c_i}(x)<\infty,\label{eqn:ExpecAbsJ-1}
\end{align}
where the last inequality follows from the assumptions made in the Proposition. The second term satisfies
\begin{align}
\tilde \E\left(\Delta_i\left|\int_0^{X_i} A(X_i,x)d\Lambda_{c_i}(x)\right|\right)\leq  \int_0^{\tau} \int_0^y |A(y,x)|d\Lambda_{c_i}(x)(1-G_{c_i}(y))dF_{c_i}(y)<\infty,
\label{eqn:ExpecAbsJ-2}
\end{align}
where the last inequality follows from Lemma 1 of \citet{Efron1990} (see also Remark 2 of \citet{akritas2000}). Finally, the last term satisfies
\begin{align}
\tilde \E\left(\left|\int_0^{X_i} \int_0^{X_i} A(x,y)d\Lambda_{c_i}(x)d\Lambda_{c_i}(y)\right|\right) &\leq \tilde \E \left(\int_0^\tau \int_0^\tau \ind_{\{X_i \geq \max(x,y)\}} |A(x,y)| d\Lambda_{c_i}(x)d\Lambda_{c_i}(y)\right)\nonumber\\
&=\int_0^\tau \int_0^\tau |A(x,y)|(1-H_{c_i}(x \vee y))d\Lambda_{c_i}(x)d\Lambda_{c_i}(y)\nonumber\\
&=2\int_0^{\tau} \int_0^{y} A(x,y)|(1-G_{c_i}(y))dF_{c_i}(x)d\Lambda_{c_i}(y)\nonumber\\
&<\infty,\nonumber
\end{align}
where the second equality holds by the symmetry of the functions $A(x,y)$ and $1-H_{c_i}(x\vee y)$, and the last inequality holds by Equation \eqref{eqn:ExpecAbsJ-2}.

\textbf{Item \ref{propJ.3}:}  Let $i\neq j$ and define the process
\begin{align}
Q(t) = \int_0^t \int_0^{X_j} A(x,y)dM^j(y)dM^i(x).\label{eqn:Qprocess}
\end{align}
Notice that conditioned on $s_j=(X_j,\Delta_j,c_j)$ and $c_i$, the process $(Q(t))_{t\geq 0}$ is a zero-mean martingale. Therefore, by the optional stopping theorem we have 
\begin{align*}
\tilde \E(J(s_i,s_j)|s_j) = \tilde \E (Q(X_i)|s_j) = \tilde \E(Q(0)|s_j) = 0.
\end{align*}

\textbf{Item \ref{propJ.4}:} Observe that $\tilde{\E}(J(s_i,s_j)^2)=\tilde \E(Q(X_i)^2)$ for $i\neq j$, where $(Q(t))_{t\geq 0}$ is the process defined in Equation~\eqref{eqn:Qprocess}. Observe that, conditioned on $s_j=(X_j,\Delta_j,c_j)$ and $c_i$, $(Q(t)^2)_{t\geq 0}$ is an $(\mathcal{F}_t)$-submartingale with compensator given by
\begin{align*}
\langle Q \rangle(X_i) = \int_0^{X_i} \left(\int_0^{X_j} A(x,y)dM^j(y)\right)^2Y^i(x)d\Lambda_{c_i}(x).
\end{align*}

Then
\begin{align*}
\tilde \E(Q(X_i)^2|s_j)=\tilde \E(\langle Q \rangle(X_i) |s_j)&=\bar{\E}\left(\left.\int_0^{X_i} \left(\int_0^{X_j} A(x,y)dM^j(y)\right)^2Y^i(x)d\Lambda_{c_i}(x)\right|s_j\right)\\
&=\int_0^{\tau} \left(\int_0^{X_j} A(x,y)dM^j(y)\right)^2\E(Y^i(x))d\Lambda_{c_i}(x)\\
&=\int_0^{\tau} \left(\int_0^{X_j} A(x,y)dM^j(y)\right)^2(1-G_{c_i}(x))dF_{c_i}(x)
\end{align*}
and thus
\begin{align*}
\tilde \E(Q(X_i)^2)= \tilde \E(\tilde \E(Q(X_i)^2|s_j)) &= \tilde \E\left(\int_0^{\tau} \left(\int_0^{X_j} A(x,y)dM^j(y)\right)^2(1-G_{c_i}(x))dF_{c_i}(x)\right)\\
&= \int_0^{\tau} \tilde \E\left(\left( \int_0^{X_j} A(x,y)dM^j(y)\right)^2\right)(1-G_{c_i}(x))dF_{c_i}(x)\nonumber\\
&= \int_0^{\tau} \tilde \E\left( \int_0^{\tau} A(x,y)^2Y_j(y)d\Lambda_{c_j}(y)\right) (1-H_{c_i}(x))\Lambda_{c_i}(x) \nonumber\\
&=\int_0^\tau \int_0^\tau A(x,y)^2 (1-G_{c_j}(y))(1-G_{c_i}(x))dF_{c_j}(y)dF_{c_i}(x),
\end{align*}

where the second equality is due to the independence of $s_i$ and $s_j$ for $i\neq j$, and the third equality follows by noticing that, conditioned on $c_j$, $\left(\int_0^t A(x,y)dM^j(y)\right)^2$ is an $(\mathcal{F}_t)$-submartingale for any fixed $x\in\R_+$ and its compensator is given by $\int_0^t A(x,y)^2Y^j(y)d\Lambda_{c_j}(y)$.

\end{document}